\renewcommand{\it}{\normalfont \itshape}
\newcommand{\braket}[2]{\left< #1 | #2 \right>}
\newcommand{\CP}{\mathbb{CP}}
\newcommand{\jj}{\jmath}
\renewcommand{\j}{\jmath}
\newcommand{\geom}{\Delta}
\newcommand{\G}{{\{G\}}}
\newcommand{\tG}{{\{\tilde{G}\}}}
\newcommand{\Gp}{{\{G^+\}}}
\newcommand{\Gm}{{\{G^-\}}}
\newcommand{\cdop}{\cdot'}
\newcommand{\Hodgp}{\ast'}
\newcommand{\lrcornep}{\lrcorner'}
\newcommand{\llcornep}{\llcorner'}
\newcommand{\cdou}{\underline{\cdot}}
\newcommand{\Hodgu}{\underline{\ast}}
\newcommand{\lrcorneu}{\underline{\lrcorner}}
\newcommand{\llcorneu}{\underline{\llcorner}}
\newcommand{\perpu}{\underline{\perp}}
\newcommand{\T}{\underline{t}}
\newcommand{\Heprl}{{\mathcal H}_{EPRL(\jj,\rho)}}
\DeclareMathOperator{\iu}{i}
\DeclareMathOperator{\Inv}{Inv}
\newcommand{\R}{{\mathbb R}}
\newcommand{\C}{{\mathbb C}}
\newcommand{\Z}{{\mathbb Z}}
\newcommand{\B}{{\mathbb B}}
\newcommand{\M}{{\mathbb M}}
\DeclareMathOperator{\SSL}{SL}
\DeclareMathOperator{\SO}{SO}
\DeclareMathOperator{\OO}{O}
\DeclareMathOperator{\GL}{GL}
\DeclareMathOperator{\St}{St}
\newcommand{\SL}{\SSL(2,\C)}
\DeclareMathOperator{\SU}{SU}
\newcommand{\widebar}{\overline}
\newcommand{\PHi}{\varPhi}
\newcommand{\Hodge}{\ast}
\newcommand{\z}{\mathbf{z}}
\newcommand{\n}{\mathbf{n}}
\renewcommand{\u}{\mathbf{u}}
\renewcommand{\v}{\mathbf{v}}
\renewcommand{\r}{\mathbf{r}}
\newcommand{\bvec}[2]{\left(
   \begin{array}{c} {#1} \\ {#2}
    \end{array}\right)}%
\newcommand{\bmat}[1]{\left(
   \begin{array}{cc} #1
    \end{array}\right)}%
\newtheorem{thm}{Theorem}
\newtheorem{lm}{Lemma}
\newtheorem{df}{Definition}
\DeclareMathOperator{\sgn}{sgn}
\DeclareMathOperator{\tr}{tr}
\DeclareMathOperator{\Arg}{Arg}
\newcommand{\vc}[1]{\left(
   \begin{array}{c} #1 
    \end{array}\right)}%
\begin{document}

\title{Asymptotic analysis of the EPRL model with timelike tetrahedra}

\author[1]{Wojciech Kami\'nski\thanks{wkaminsk@fuw.edu.pl}}
\author[1,2]{Marcin Kisielowski\thanks{marcin.kisielowski@gmail.com}}
\author[2]{Hanno Sahlmann\thanks{hanno.sahlmann@gravity.fau.de}}
\affil[1]{Instytut Fizyki Teoretycznej, Wydzia{\l} Fizyki, Uniwersytet Warszawski, ul. Pasteura 5 PL-02093 Warszawa, Poland}
\affil[2]{Institute for Quantum Gravity, Department of Physics, Friedrich-Alexander Universit\"at Erlangen-N\"urnberg (FAU), 
Staudtstr. 7 D-91058 Erlangen, Germany}

\date{\today}

\maketitle

\abstract{We study the asymptotic behaviour of the vertex amplitude for the EPRL spin foam model extended to include timelike tetrahedra. We analyze both, tetrahedra of signature $---$ (standard EPRL), as well as of signature $+--$ (Hnybida-Conrady extension), in a unified fashion. However, we assume all faces to be of signature $--$. We find that the critical points of the extended model are described again by $4$-simplices and the phase of the amplitude is equal to the Regge action. Interestingly, in addition to the Lorentzian and Euclidean sectors there appear also split signature $4$-simplices.}



\section{Introduction}


Spin foam models have many origins. On the one hand they appear in the calculation of  transition amplitudes in the (topological) quantum theory of 3d gravity (the Turaev Viro model \cite{turaev-viro} and, less formal,  the Ponzano-Regge model \cite{PR,pr-model}). 
On the other hand spin foams can be regarded as Feynman diagrams of some auxiliary theory \cite{ReisenbergerPath,Reisenberger2000,Reisenberger2001,Pietri2000,GFTII,GFTIII,GFTIV,GFTV}, which can be chosen such that these diagrams correspond to triangulations of (pseudo-) manifolds. The spin foam sum then becomes a sum over geometries, and the underlying theory, if defined independently, gives a well defined resummation. Prime examples for this viewpoint are matrix models. Both interpretations taken together suggest that spin foam models are relevant also for quantum gravity also in dimension four, although it is not a topological theory anymore.

Most of the modern 4d spin foam models are implementations of the path integral of (topological) BF theory, combined with the imposition of additional constraints on the $B$ field \cite{Barrett2000, PietriClass, Alexandrov2008, Livine2008b, EPRL, FK, Baratin, Perez2012}. The resulting amplitudes can be interpreted as the transition amplitude between spin network states of loop quantum gravity (LQG) \cite{Kaminski2009,Alexandrov2011}. For a careful analysis of the prospects and problems with linking spin foams and LQG see \cite{Alesci2011,Thiemann2013}. 

States in LQG represent purely spatial geometry. Thus for the LQG interpretation of the spin foam amplitudes, at least the triangulations of the boundary must consist of spatial tetrahedra. In fact, in the standard EPRL-FK model \cite{EPRL,FK,Kaminski2009}, all tetrahedra are spacelike. On the other hand, the BF theory with group $\SU(1,1)\approx \SSL(2,\R)$  (the Ponzano-Regge model in 3d in Lorentzian signature) contains sums over all (continuous and discrete series) representations. According to the coadjoint orbit method \cite{Kirillov, WittenCoadj} it should correspond to appearance of both timelike and spacelike edges and thus also mixed signature triangles (this is 3d analog of timelike tetrahedra). In fact it has been confirmed for specific cases \cite{Davids, DavidsDoc}. 
Moreover, the absence of timelike tetrahedra in LQG is in itself puzzling and attracted attention \cite{Perez2001}.\footnote{In fact, there is a way to reconcile LQG with BF theory by projective spin network technique \cite{Livine2002} as well as canonical analysis that allow timelike signature \cite{Alexandrov2005, Liu2017}.}
Thus it is important to note that there is an extension of the standard EPRL model which incorporates tetrahedra with spacelike and with timelike normals \cite{Conrady2010}. In the present paper we will work with this \emph{extended} EPRL model. 

The are now many spin foam models, but as there is no unique way of constructing them, it is not clear if any of them is a good candidate for quantum gravity. One important test is an asymptotic analysis of the models.
The methodology for such analysis is now quite established (see for example \cite{Frank-thesis}) based on many developments \cite{Barrett2003,Frank2,FrankEPRL,freidel-louapre,LivineSpeziale,4dWilliams} etc. The method of choice for 4d models, pioneered by \cite{FrankEPRL}, is an extended stationary phase approximation with Livine-Speziale coherent states as boundary states \cite{Livine2007}.  In this approach the amplitude is written as an integral with oscillatory integrant $e^S$ (see \eqref{eq:vertexamplitude}) with $S$ called an action. The key role of is played by critical points of $S$, i.e. stationary points for which the real part of the action vanishes. Each such point contributes to the asymptotic formula with an oscillatory term and if the action does not have any critical points, the amplitude is exponentially suppressed. The asymptotic behaviour of the vertex amplitudes is expected to be governed by the Regge action of a suitably constructed $4$-simplex $\Delta$ built out of the boundary data.\footnote{We note that the analysis of the asymptotic behaviour of the vertex amplitude is only the first step in the process, because the proper behaviour of the vertex amplitude does not necessarily imply ``right'' asymptotic behaviour of the spin foam sum \cite{Hellmann2013} (see however \cite{Han2013}).}
Indeed, in the standard EPRL model, the asymptotic limit of the evaluation (see \cite{Frank2, FrankEPRL} and for an overview \cite{Carlo, Perez2012}) is governed by the Regge action for discrete gravity without cosmological constant \cite{Regge}
\begin{equation}
\label{eq:regge}
 S_\Delta=\sum_{i<j} A_{ij}\theta_{ij}
\end{equation}
where $A_{ij}$ are areas of the faces and $\theta_{ij}$ are the corresponding dihedral angles (see section \ref{sub:summary} for a precise statement). 
This is similar to the Ponzano-Regge asymptotic formula \cite{PR, Roberts} valid in three dimensional gravity. Our main result is a classification of the critical points in the generalized EPRL model. Moreover, we compute the difference of phases of two critical points and show how it relates to the Regge action (with the definition of dihedral angles given by \cite{Suarez}, see section \ref{sec:Regge} for details).

The model with timelike tetrahedra is considerably more difficult than standard EPRL thus it was not considered seriously so far. However, timelike tetrahedra are natural candidates for some boundary conditions, for example in spin foam cosmology \cite{Bianchi2010a,Rennert2016}, or in spin foam black hole calculations (if such will be done in the future). We hope that providing the asymptotic analysis in the present work will support and boost research in this direction.

Adding the possibility for timelike tetrahedra dramatically expands the complexity of the asymptotic analysis: Now a given $4$-simplex can contain tetrahedra of different types, and hence many different cases arise. In order to manage this complexity in a reasonable way we have  developed a unified treatment of all the cases independent of the type of the tetrahedra. This unified treatment, which brings out the essence of the argument while streamlining the technical issues, is the second result of our work. We expect it to be useful for future development using the vertex amplitudes. 

One source of the complications for timelike tetrahedra is the change in the type of subgroup, 
from $\SU(2)$ in the standard case to $\SU(1,1)$, in the definition of the amplitudes. Our treatment is building on the work \cite{Conrady2010a} for the $\SU(1,1)$ case. 

Another source of complications in the generalized model is that, since the geometry is much more involved, it is not clear that there are always two (in the degenerate case one) critical points. 
Our work clarifies the situation in the following way: In the standard EPRL model two critical points are related by time reversal, since the boundary data for spacelike tetrahedra is invariant under this transformation. This is no longer true for mixed boundary data. We show how to construct the second critical point in the general case and provide a complete classification. For an overview see figure \ref{fig:class}. We also clarify the meaning of the orientation matching condition, which is more involved in the timelike set-up.

As described above, our analysis is quite general. Still, we we impose two restrictions on the geometry, and make two important assumptions. One restriction is that we assume all the faces to be spacelike (signature $--$). We expect (see \ref{sec:types-faces}) that in the case of timelike faces some nongeometric contributions will occur in the asymptotic behaviour. The other restriction is the exclusion of null tetrahedra and null faces. 
It is not known how to construct the vertex amplitude in this case. However, as we explain in section \ref{sec:conjectures}, we expect that one can still see contributions from those configurations in the EPRL model and it seems that they are responsible for problems in application of the stationary phase method.

Our work is a first step towards obtaining asymptotic behaviour of the amplitude. In section \ref{sec:main_results} we will briefly review the setting and state our main result, Theorem \ref{thm:1}.  We describe remaining issues in the section \ref{sec:conjectures}. Since our result is not easy to state concisely, and since the proof is quite long, we will give a more thorough introduction in sections \ref{sec:main_results} and \ref{sec:types-faces} and the outline of the proof in section \ref{sec:Outline}.


\subsection{Main results} \label{sec:main_results}
\begin{figure}\label{fig:4_simplex_graph}
\begin{center}
\includegraphics[scale=0.3]{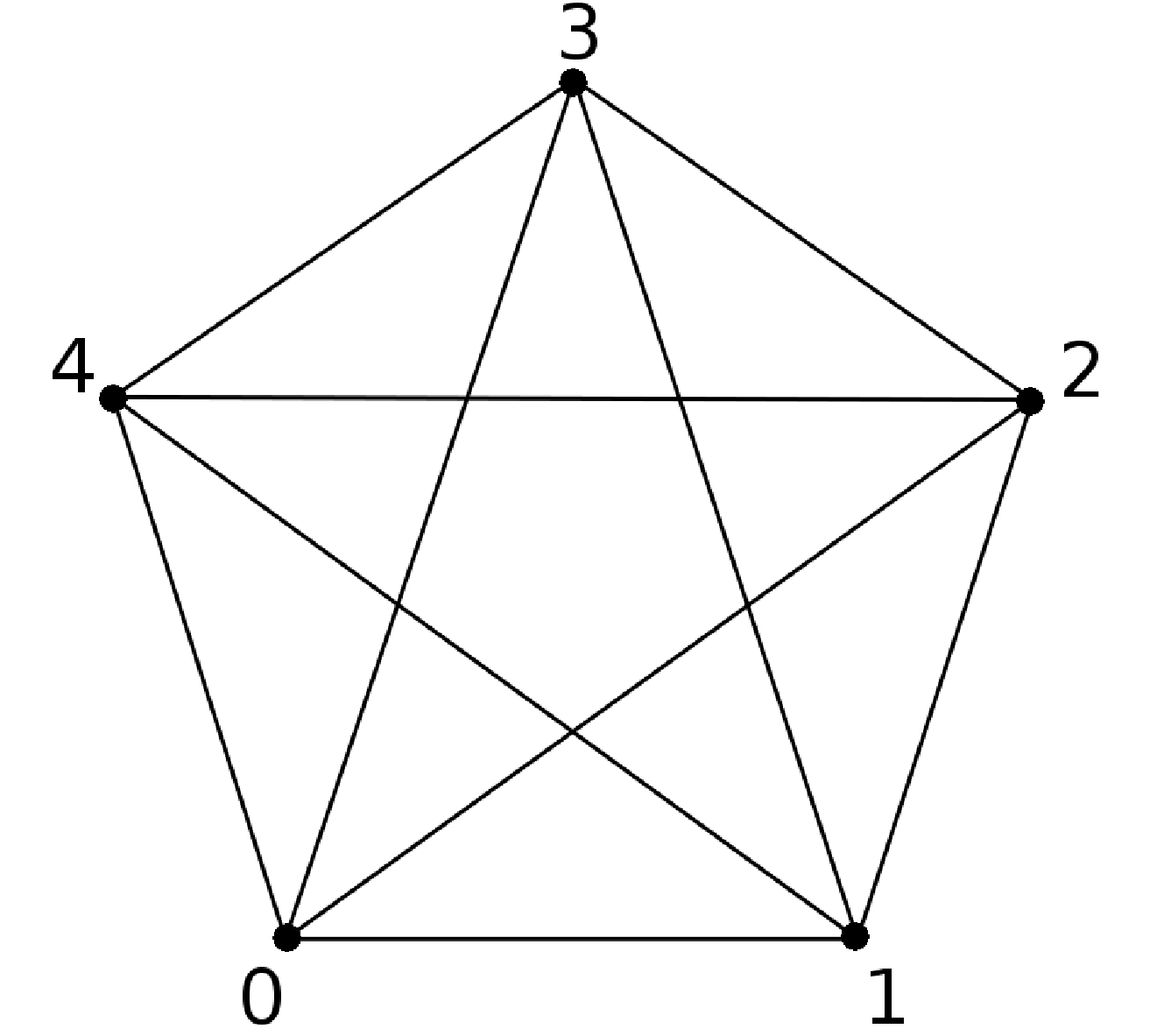}
\caption{Complete graph on five nodes $0,\ldots, 4$.}
\end{center}
\end{figure}
In the standard EPRL-FK model \cite{EPRL,FK,Kaminski2009} the vertex amplitude assigns a complex number to each SU(2) spin network $s=(\Gamma,\jj,\iota)$. The asymptotic limit has been studied in the case when $\Gamma$ is a complete graph on five nodes $0,\ldots, 4$ (the unique graph connecting each pair of nodes with precisely one link) \cite{Frank2}. This means that spin networks with nodes labelled by the Livine-Speziale coherent intertwiners \cite{LivineSpeziale} were considered and the asymptotic limit when the spins $\jj$ (all different than $0$) are uniformly rescaled by a large parameter $\Lambda$ was studied. The technical tool used in the calculation is an extended stationary phase method \cite{Hormander}. The amplitude can be written in the following form (see Section \ref{sec:amplitude}):
\begin{equation}\label{eq:vertexamplitude}
A(s)=c \int_{\SL^5} \prod_{i} dg_i \delta(g_5) \int_{(\CP^1\times \CP^1)^{10}}
\left(\prod_{i<j} M_{ij}\right) e^{S(g_i, \z_{ij})},
\end{equation}
where $M_{ij}$ is a measure on $(\CP^1\times \CP^1)^{10}$ parametrized by spinors $\z_{ij}, \z_{ji} \in \mathbb{C}^2\backslash\{0\}$, which is invariant under the scaling $\jj_{ij}\mapsto \Lambda \jj_{ij}$, $c$ is a factor that scales approximately as $c\mapsto c \Lambda^{20}$ for large $\Lambda$ and $S(g_i, z_{ij})$ scales as $S(g_i, z_{ij})\mapsto \Lambda S(g_i, z_{ij})$. The number of critical points depends on the (coherent) spin-network $s$. 

Let us notice that inside the definition of a coherent state some arbitrary phase is hidden, so the total result will be influenced by an arbitrary phase that scales uniformly with the scaling of the labels.
In the analysis of \cite{Frank2, Frank3} this phase was fixed. However the phase choice was done in a global way by considering the whole graph -- not locally by choosing the phases separately for each node. This convention is useful only in considering vertices separately as the phases would be chosen differently for the same nodes glued to two different vertices (see however \cite{Hellmann2013, Hellmann-Kaminskishort, Han2013, Han2011})\footnote{The sum of phases for 4-simplices glued together has a physical meaning.}. The choice of the phase also leads to additional phase terms in the action of \cite{Frank2} ($\pi$ from thin wedges see Section 6.2 of \cite{Frank2}) that can be removed by a more judicious choice of overall coherent state phase. For these reasons we prefer to keep the phase unspecified. This leads to an additional arbitrary overall phase in the asymptotic limit. As a result we do not study the phase of each oscillatory term but a difference of phases between different terms.

In the original formulation of the EPRL model only tetrahedra that are spacelike were considered. As a result $s$ is an $\SU(2)$ spin network. Our goal is to study the asymptotics of the EPRL-FK vertex amplitude generalized to include tetrahedra that are timelike \cite{Conrady2010,Conrady2010b}. We limit to the subcase where the faces are of signature $--$. As a result \cite{Conrady2010} we will restrict to discrete series of the $\SU(1,1)$ representations that are labelled with spins $\jj$\footnote{In the other subcase of faces of signature $+-$ (see \cite{Conrady2010} for explanation of semiclassical origin of the notion) continuous series of SU(1,1) representations are used. We expect that that there may be some non-geometric critical points in the $+-$ case and therefore we leave it for further research (see \ref{sec:types-faces} for a comment).}. We will consider a generalization of $\SU(2)$ spin networks, where the links are labelled with spins $\jj$\footnote{In fact by representations $(\jj,\rho)$ of $\SL$ satisfying simplicity constraint, that is in our case $\rho=2\gamma \jj$.} but the nodes are labelled either with $\SU(2)$ intertwiners or $\SU(1,1)$ intertwiners. In this paper we determine the critical points of the action $S$ for the generalized vertex amplitude written in the form \eqref{eq:vertexamplitude}. This is a crucial step for determination of the asymptotic behaviour of the vertex amplitude.

To each coherent state we will associate vectors $v_{ij}\in M^4$ in Minkowski spacetime (see section \ref{sec:3d-character}) perpendicular to the standard normal $N_i^{can}$ ($e_0=(1,0,0,0)^T$ or $e_3=(0,0,0,1)^T$ depending on the type of embedding) such that bivectors ${\Hodge}(v_{ij}\wedge N_i^{can})$ are spacelike. We will call them the boundary data. Following \cite{Frank2} we will say that (see \ref{sec:3d-character})
\begin{itemize}
 \item they satisfy \textit{closure condition} if $\sum_{j} v_{ij}=0$,
 \item they are \textit{non-degenerate} if for every node $i$ every $3$ out of $4$ vectors $v_{ij}$ are independent.
\end{itemize}
With such vectors for each $i$ we can build, by the Minkowski theorem, \cite{Minkowski, Minkowski2, Bianchi2010} (for simplices in arbitrary signature, see section \ref{sec:bivectors2}) a tetrahedron in $N_i^{can\perp}$, the faces of which have signature $--$. The type of normal determines type of tetrahedra. For timelike normal $e_0$ we have spacelike tetrahedra, for spacelike normal $e_3$, timelike (in fact mixed signature) tetrahedra.
We can determine lengths of the edges of each tetrahedron $i$. 

\begin{figure}\label{fig:4_simplex_notation}
\begin{center}
\includegraphics[scale=0.2]{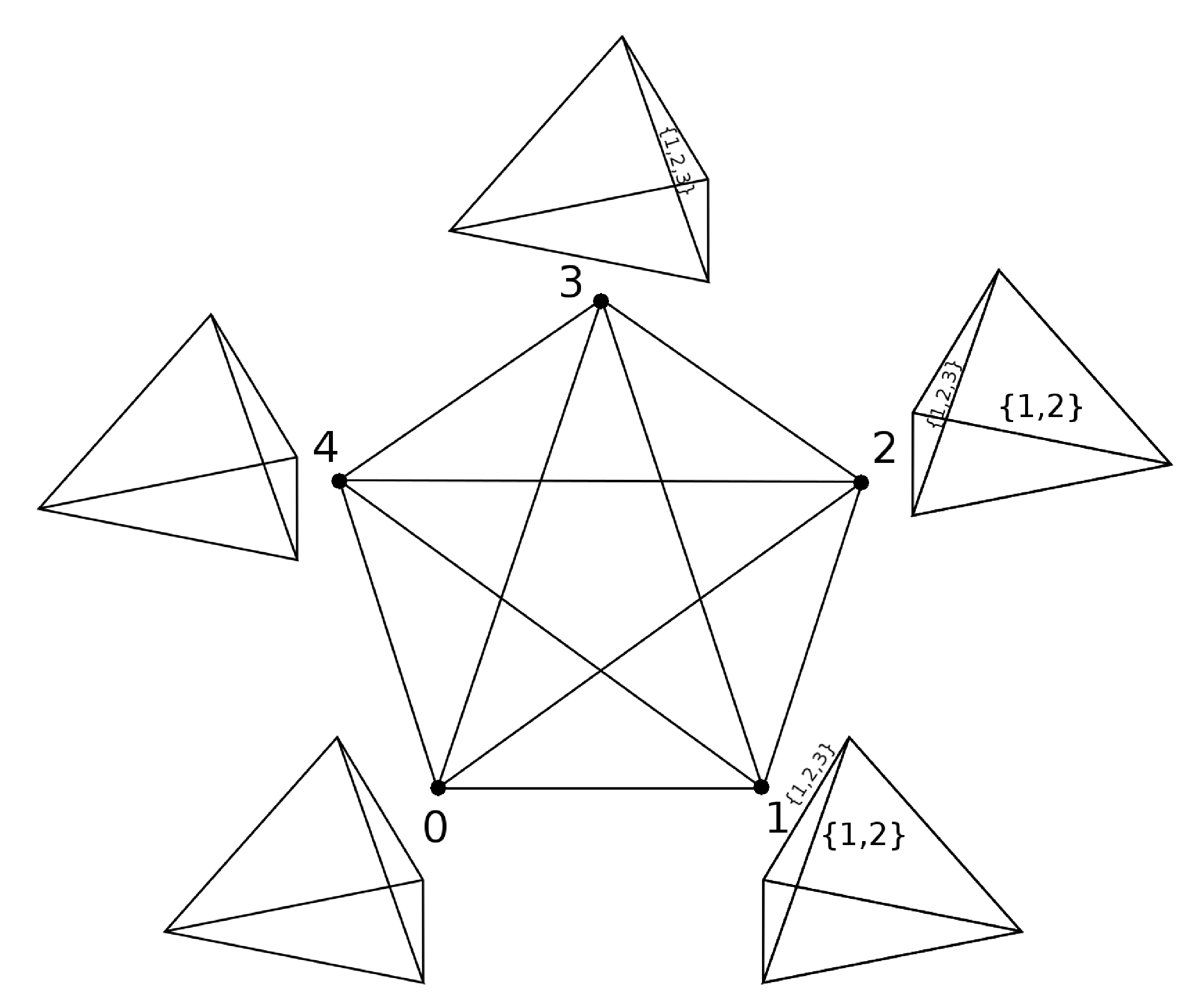}
\caption{We consider a $4$-simplex dual to the graph from figure \ref{fig:4_simplex_graph}. The nodes of the graph label tetrahedra of the simplex, links (sets of two nodes) label faces, edges between faces are labeled by sets of three nodes.}
\end{center}
\end{figure}

Let us consider a topological $4$-simplex, the dual to the graph. The nodes of the graph label tetrahedra of the simplex, links (sets of two nodes) label faces, edges between faces are labeled by sets of three nodes (see figure \ref{fig:4_simplex_notation}. These nodes correspond to three tetrahedra sharing the edge. With every edge we can associate a length coming from application of Minkowski theorem to the boundary data in any node in the set. We will say that the \textit{lengths matching condition} (definition \ref{df:lengths}) is satisfied if these lengths determined from all three tetrahedra are the same. 

If the lengths matching condition is satisfied then we can look for the Gram matrix constructed from such lengths (see section \ref{sec:Gram}). We can reconstruct a unique (up to reflections, rotations and shifts) $4$ simplex from the Gram matrix. There are $5$ cases for the signature of these simplices
\begin{itemize}
 \item Lorentzian ($+---$),
 \item Euclidean ($----$),
 \item split signature ($++--$),
 \item degenerate ($+--$),
 \item degenerate ($---$).
\end{itemize}
By the Minkowski theorem (for simplices in arbitrary signature, see section \ref{sec:bivectors2}) we can also determine orientations of the reconstructed tetrahedra. If they match the orientations of tetrahedra of one of the reconstructed $4$-simplices, we will say that \textit{orientations matching condition} is satisfied (see Section \ref{sec:Gram} definition \ref{df:orient} for precise statement).

Our main result is summarized by the following theorem (see also the diagram on figure 1):
\begin{figure}
\begin{center}
\begin{tikzpicture}[sibling distance=10em,
  decision/.style = {shape=rectangle, rounded corners,
    draw, align=center,
    top color=white, bottom color=blue!20},
    arrow/.style={left}]]
    \node [decision] {Closure condition?} child { node [decision] {No critical points} edge from parent [->] node [arrow] {No} 
      }
child { node [decision] {Boundary data\\ non-degenerate?} child { node [decision] {Not addressed\\ in Theorem \ref{thm:1}} edge from parent [->] node [arrow] {No} 
      }
child {node [decision] {Length matching?}
    child { node [decision] {Orientation matching?} child { node [decision] {Reconstructed $4$-simplex\\ non-degenerate?} child { node [decision] {$2$ critical points} edge from parent [->] node [arrow] {Yes} 
      }
child { node [decision] {$1$ critical point} edge from parent [->] node [arrow] {No} 
      }      
    edge from parent [->] node [arrow] {Yes}
     }
   child { node [decision] {No critical\\ points} edge from parent [->] node [arrow] {No} 
      }
    edge from parent [->] node [arrow] {Yes}       
    }
    child { node [decision] {At most $1$\\ critical point} edge from parent [->] node [arrow] {No} 
      } 
 edge from parent [->] node [arrow] {Yes} 
      }
 edge from parent [->] node [arrow] {Yes} 
      } ;
\end{tikzpicture}
\caption{\label{fig:class}Classification of the critical points of the action $S$ according to Theorem \ref{thm:1}.}
\end{center}
\end{figure}
\begin{thm}
\label{thm:1}
If the boundary data $v_{ij}$ does not satisfy the closure condition, the action has no critical point.
Let us assume that the boundary data satisfies the closure condition and is non-degenerate. 
If the lengths matching condition is not satisfied then there exists at most one critical point of the action $S$ with the interpretation of a vector geometry \cite{Frank2}. The vector geometry can occur only if all normals are of the same type. 
If the lengths matching condition is satisfied but orientations matching condition is not, then the action has no critical points. 
If both conditions are satisfied, then let us consider the reconstructed $4$-simplex $\Delta$ for the boundary data:
\begin{itemize}
 \item If the reconstructed $4$-simplex $\Delta$ is Lorentzian then there exist two critical points $(g_i, z_{ij})$ and $(\tilde{g}_i, \tilde{z}_{ij})$. The difference
 $$
 \Delta S = S(g_i, z_{ij})-S(\tilde{g}_i, \tilde{z}_{ij})
 $$
 is given by the Regge action $S_\Delta$ without cosmological constant \eqref{eq:regge} for the flat $4$-simplex $\Delta$
 $$
 \Delta S=2 \iu r S_\Delta \mod 2\pi \iu,
 $$
 where $r=\pm 1$ is the Plebański orientation (definition \ref{df:r}).
 \item If the reconstructed $4$-simplex $\Delta$ is Euclidean or of split signature then there exist two critical points $(g_i, z_{ij})$ and $(\tilde{g}_i, \tilde{z}_{ij})$. The difference $\Delta S$ is given by the Regge action:
 $$
 \Delta S=\frac{2 \iu r}{\gamma} S_\Delta \mod 2\pi \iu,
 $$
 where $\gamma$ is the Barbero-Immirzi parameter.
 \item If the reconstructed $4$ simplex is degenerate then there exists a single critical point.
\end{itemize}
Cases of signature $----$ (and $++--$ respectively) can occur only if all $N_i^{can}$ are $e_0$ (and $e_3$ respectively). Degenerate cases can occur only if all normals are of the same type (either $e_0$ or $e_3$). 
\end{thm}

The case when all $N_i^{can}$ are equal to $e_0$ was proven before (standard EPRL \cite{Frank2, Frank}) but the other cases are our new result. As one can see, the case with timelike tetrahedra is similar to the previously obtained case with spacelike tetrahedra. The main difference is that in addition to Euclidean and Lorentzian signature also split-signature case is present. 

Let us notice at the end that in our convention faces of the tetrahedra have areas $A_{ij}=\frac{1}{2}\rho_{ij}$ instead of $\jj_{ij}$ in the convention of \cite{Frank2}. This is just a total rescaling of the  action by the Barbero-Immirzi parameter $\gamma$. Our convention is compatible with LQG area operator spectrum \cite{Ashtekar-Lewandowski, thiemann}\footnote{In LQG the area of the face would be $A_{ij}=8\pi G\hbar\ \frac{1}{2}\rho_{ij}$ that would lead to the action $\frac{1}{8\pi G\hbar}S_{\Delta}$. We are using units $8\pi G\hbar=1$.}.

\subsection{Type of faces}\label{sec:types-faces}
As we mentioned earlier, our analysis is restricted to the case when all faces are of signature $(--)$ that is when the diagonal simplicity conditions of EPRL are \cite{Conrady2010}:
\begin{equation}
 \rho=2\gamma\jj.
\end{equation}
In the case of signature $(-+)$ the simplicity constraint is:
\begin{equation}
 2\jj=\gamma \rho.
\end{equation}
This case is more complicated. Firstly, no EPRL like construction is known (see \cite{Rennert2016} for recent developments). Secondly, the coherent state proposition \cite{Conrady2010} in the line of Freidel-Krasnov model differs from representation theoretic construction in the style of EPRL. We expect that the asymptotic analysis in this case may lead to some unphysical sectors due to non-extremality of the necessary embeddings\footnote{This problem is probably absent in Barrett-Crane model \cite{Barrett2000}.}. 

We will give heuristic explanation of this fact. With the use of normal $N$ we can identify bivector $B$ in Minkowski spacetime $M^4$ with two vectors $\vec{L},\vec{K}$ perpendicular to $N$ by formula
\begin{equation}
 B=\Hodge (N\wedge \vec{L})+N\wedge \vec{K}.
\end{equation}
Regarding bivectors as generators of $\SO(1,3)$ group we can write classical version of EPRL constraints imposed on the embedding. After some rearrangements we obtain:
\begin{equation}\label{con:EPRL}
 \vec{L}\cdot (\vec{L}+\gamma\vec{K})=0,\quad (\vec{L}+\gamma\vec{K})^2=0.
\end{equation}
These are quadratic constraints whereas simplicity would be:
\begin{equation}\label{con:simplicity}
 \vec{L}+\gamma\vec{K}=0.
\end{equation}
Let us now describe case by case if from \eqref{con:EPRL} follows \eqref{con:simplicity}.
\begin{itemize}
 \item In the standard EPRL case: $N$ is timelike so $\vec{L}$ and $\vec{K}$ are spacelike so from \eqref{con:EPRL} simplicity follows trivially as the only spacelike vector with norm zero is the zero vector.
 \item In the extended EPRL case with spacelike face: $N$ is spacelike but $\vec{L}$ is timelike. From \eqref{con:EPRL} we know that $\vec{L}+\gamma\vec{K}$ is null. But there is no nontrivial null vector that is perpendicular to a nontrivial timelike vector, so $\vec{L}+\gamma\vec{K}=0$.
 \item In the extended EPRL case with timelike face: $N$ is spacelike and $\vec{L}$ is spacelike. In this situation it might happen that $\vec{L}+\gamma\vec{K}\not=0$ (simplicity fails).
\end{itemize}
As the simplicity \eqref{con:simplicity} is crucial for the reconstruction of $4$ simplex we suspect that there might be some non-geometric configurations for timelike faces.

\subsection{Asymptotic analysis (conjecture)}\label{sec:conjectures}

The vertex amplitude is given by an oscillatory integral over an infinite domain. We would like to apply version of stationary phase analysis as presented for example in \cite{Hormander}. However in order to give a proper asymptotic result several technical conditions need to be satisfied. 
\begin{enumerate}
 \item {\it finiteness}. It is not known whether the evaluation of a spin network in the extended EPRL model is, in fact, finite. It was proven, but only for the standard EPRL setting (with all tetrahedra spacelike), that the integral defining the amplitude is absolutely convergent \cite{Baez2001,Kaminski2010}. 
 The condition (well-definiteness of the definition) is necessary for any statement about the amplitude to make sense. 
 \item \textit{lack of boundary contributions}. As the integral is over infinite domain with boundaries (or equivalently using compactification over domain with boundary) there might be some additional contributions to asymptotic expansion given by a  version of Watson lemma. This is true as well in standard EPRL case (as the integration is over infinite domain), however in the extended set-up boundary is much more complicated. Preliminary check in the simpler Barrett-Crane model suggests that such contributions in fact appear but only for special boundary configurations that correspond to null (degenerate tetrahedra)\footnote{In Barrett-Crane model only areas are specified but boundary contribution corresponds to a geometry of 4 simplex with null tetrahedra}. In extended EPRL set-up one has additionally finite boundary of integrations. The integral at this boundary decays fast unless the spin of suitable representation is zero (degenerate or null face). This suggests that boundary contributions are related to null structures and we make the following conjecture: For nondegenerate boundary data (that excludes null faces and tetrahedra) there are no contributions from the boundary of integration. This  issue was not addressed in any asymptotic analysis of the Lorentzian spin foam models so far. It is not a problem for the Euclidean models, as there is no boundary of the integration.
 \item \textit{critical point non-degeneracy condition}. An additional assumption is that, after suitably taking into account the symmetry of the action, the remaining matrix of second derivatives (Hessian) at the critical point is non-degenerate. This condition was never addressed in full generality in the analysis of the vertex amplitude. It was only checked for specific configurations in the standard EPRL model \cite{Frank2}. It is known that for the Barrett-Crane \cite{Barrett2000} spin foam models there are special configurations for which Hessian is degenerate \cite{Kaminski2013}. However, this is probably related to the fact that the proper variables used in semiclassical description of this model are areas and conjugated angles \cite{Barrett:1994nn,Barrett1999,Dittrich:2008va} and there is some singularity in going from these variables to shapes. We conjecture that the Hessian is non-degenerate for the EPRL spin foam model for non-degenerate boundary data if the reconstructed $4$ simplex is non-degenerate. The Hessian determinant is important for the asymptotic analysis as it determines asymptotic measure factor in the spin foam path integral. However, it is notoriously difficult to compute in terms of geometric quantities. The only known result for generic case is for the Barrett-Crane model \cite{Kaminski2013}. If non-degeneracy fails, the scaling behaviour is different.
\end{enumerate}
These issues definitely deserve a separate treatment and we leave them for future research. Under these conditions our results would give a proper asymptotic expansion of the extended EPRL vertex amplitude (see \cite{Hormander} and for the form of asymptotic vertex amplitude \cite{Frank2}).

\subsection{Outline of the proof}\label{sec:Outline}

In this paper we decided to give basically complete and detailed proof of the result. There are several reasons for that. First of all many of the subtle details of the classification of solutions are scattered in the literature and it is not easy to determined state of the affair. It is hard to find proof (extending to general case i.e. not using special properties of $\SU(2)$) that there are at most two critical points or that there are no critical points if orientations matching condition is not satisfied. Moreover, the extended EPRL model has more complicated geometry than the standard one. For example relation between two different critical points is more complicated than in standard case (they are not merely related by conjugation). 

Because of that our paper is painfully long so for convenience of the reader we provide here an outline of the proof.
We decided to work directly with the integral of EPRL $Y$-maps contracted using invariant bilinear form $\beta$. It is in contrast to approach of \cite{Frank2} where $\beta$ integrals over $(\CP^1\times \CP^1)^{10}$ are perform analytically and a critical point is described by $\SL$ group elements $g_i$. The reason is that in extended case there are many subcases and the result would lead to higher complexity of the problem. This however means that there are more variables 
and our critical points are characterized by $\SL$ group elements $g_i$ and and in addition spinors $\z_{ij}$.

First task is to write all the cases in the unified manner allowing to treat them at once. This is done in Section \ref{sec:extended} and Section \ref{sec:action}. After achieving this we concentrate on describing critical point conditions. In order to do this we need to derive certain properties of the action Section \ref{sec:traceless} (detailed exposition in appendix \ref{app:traceless}). Critical points are written in terms of group and spinor variables. We can then use traceless matrices instead of spinors (proved in Section \ref{sec:traceless}) obtaining what we call {\it $\SL$ solution}. Up to additional information of spin lifts this are equivalent to {\it geometric $\SO(1,3)$ solution} introduced in Section \ref{sec:stationary}. The latter consists of a bunch of $\SO(1,3)$ group elements that transform given boundary face bivectors such that they fit together. 

As we know from standard EPRL asymptotic analysis two degenerate critical points can correspond to one geometric solution but in different signature. In order to incorporate this we develop theory of geometric solutions in arbitrary signature in Section \ref{sec:geometric}. We show relation of the $4$ simplex to geometric solutions (classifications in terms of reconstructed $4$ simplices) in general. We have now direct correspondence between critical points and $4$ simplices. In Section \ref{sec:other} we show how to construct $4$ simplex of different signature from two degenerate critical points (vector geometries) in $\SO(1,3)$ signature. In order to finish the proof we need classification that there are only two nondegenerate $\SO(1,3)$, two vector geometries or single critical point. This is done in Section \ref{sec:classification} together with the proof that if lengths matching condition is satisfied, but not the orientations matching condition then there is no critical point.

Parallelly to geometric description we compute difference of the phases between two critical points. As in the case of critical point description we lift the computation from original spinor description into the $\SO(1,3)$ geometric description (or other signature computation). The problem with this approach is that in this way we loose some finite ambiguity of multiple of $\pi$ that need to be regained by the topological deformation argument. Most of the phase computation is done in Section \ref{sec:phase}.
\tableofcontents

\section{Vertex amplitude in extended EPRL model}\label{sec:extended}

In this section we will describe extended EPRL embeddings \cite{EPRL,Conrady2010} and the construction of the vertex amplitude for a simplicial graph $\Gamma$ that was described in the introduction.
The vertex amplitude in the spin foam models \cite{Baez, spinfoams2, Alexandrov2011} is the evaluation of a certain spin network. This spin network consists of links labelled by irreducible unitary representations of $\SL$ and nodes labelled by invariants in the tensor product of representations from links meeting in the node. These invariants are elements of a certain distribution space. Na\"{\i}ve evaluation would be the contraction of invariants according to the prescription given by the spin network. This prescription, while valid for a compact group, gives infinity for $\SL$ and some gauge fixing \cite{Barrett-Crane, Baez2001, Barrett2000} is necessary (see \ref{sec:amplitude}).

The spin network we consider is defined on a graph with five nodes $0,\ldots, 4$ connecting every node to every other one with precisely one link. We parametrize links by the two nodes $(ij)$ that they connect. The invariants at the nodes are given by group averaging the EPRL Y-maps (see section \ref{sec:Ymap}),
\begin{align}
 \Inv_{\St(N^{can}_n)} \bigotimes_{l\ni n} \Heprl&\rightarrow \Inv_{\SL} \bigotimes_{l\ni n} D_{\jj_l,\rho_l}\\
 \psi&\rightarrow \int_{\SL/\St(N^{can}_n)} d[g]\ g\psi, 
\end{align}
where the group $\St(N^{can}_n)\subset\SL$ is a subgroup stabilizing the normal $N^{can}_n$, $l$ are links meeting in the node $n$ and $\Heprl$ is certain representation of $\St(N^{can}_n)$. Invariants of the subgroup $\St(N^{can}_n)$ are thus together with labels and type of the subgroup (determined by $N^{can}_n$) the boundary data for the vertex amplitude.  

We are interested in the asymptotic analysis for large labels. In such a case one needs to specify a family of boundary states with certain semiclassical limit. Examples of such states are given by coherent states (they scale nicely with the scaling of labels) integrated over the subgroup:
\begin{equation}
 \psi=\int_{\St(N^{can}_n)} dg\ g\left(\otimes_{l\ni n}\Psi_l\right).
\end{equation}
We will now describe in unified manner the choice of coherent states and their image under the Y-map. We will present also the form of the vertex amplitude (using bilinear forms $\beta$ used to contract invariants \cite{Frank2}).

\subsection{Notation}

Let us summarize notation about spinors and $\SL\rightarrow \SO_+(1,3)$ double cover. We use signature $+---$.

Let us introduce $\sigma_\mu$ as follows 
\begin{equation}
 \sigma_0={\mathbb I},\ \sigma_i \text{ Pauli matrices}
\end{equation}
We have the following isomorphism from Minkowski space $M^4$ into Hermitian $2\times 2$ matrices 
\begin{equation}
 M^4\ni N^\mu\rightarrow \eta_N=N^\mu\sigma_\mu\text{ with } N_\mu N^\mu=\det \eta_N
\end{equation}
The symplectic form is defined as
\begin{equation}
\label{eq:symp}
 \omega=\bmat{ 0 & 1\\-1& 0},\quad \bar{\omega}=\omega,\ \omega^{-1}=\omega^T=-\omega,
\end{equation}
We will also use a notation for two spinors $\z$ and $\mathbf{w}$
\begin{equation}
 [\z,\mathbf{w}]=\z^T\omega \textbf{w}=z_0w_1-z_1w_0,\quad \z=\bvec{z_0}{z_1},\quad \mathbf{w}=\bvec{w_0}{w_1}
\end{equation}
Let us also introduce
\begin{equation}
 \hat{\sigma}_\mu=-\omega \sigma_\mu^T\omega=\left\{\begin{array}{ll}
                                                  \sigma_0 & \mu=0\\
                                                  -\sigma_i & \mu=i
                                                 \end{array}\right.,\quad \hat{\eta}_N=N^\mu\hat{\sigma}_\mu \text{ Hermitian}
\end{equation}
The following holds
\begin{align}
 \sigma_\nu\hat{\sigma}_\mu+\sigma_\mu\hat{\sigma}_\nu=2g_{\mu\nu}{\mathbb I}\\
 \hat{\sigma}_\nu\sigma_\mu+\hat{\sigma}_\mu\sigma_\nu=2g_{\mu\nu}{\mathbb I}
\end{align}
Together $\sigma_\mu$ and $\hat{\sigma}_\mu$ form $\gamma_\mu$ matrix, but as we always work in either self-dual or anti-self-dual representation we prefer such a notation.

From standard commutation relation we have
\begin{equation}
 N^\mu M_\mu=\frac{1}{2}\tr \eta_N \hat{\eta}_M,
\end{equation}
The isomorphism $\pi$ from $\SL$ to $\SO_+(1,3)$ is defined by
\begin{equation}
  \pi(u)(N)^\mu\sigma_\mu=u\ (N^\mu\sigma_\mu)\ u^\dagger
\end{equation}
Lie algebra of $\SO_+(1,3)$ can be identify with bivectors ($2$ forms). The action of bivectors on vectors is defined by contraction with the use of the metric
\begin{equation}
 B(v)=B\lrcorner v
\end{equation}
The identification of $so(1,3)$ with $sl(2,\C)$ is then given on the basic simple bivectors by
\begin{equation}
so(1,3)\ni B=a\wedge b\longrightarrow \B=\frac{1}{4}(\eta_a\hat{\eta}_b-\eta_b\hat{\eta}_a)\in sl(2,\C)
\end{equation}
With the right choice of the orientation the Hodge ${\Hodge}$ operation corresponds to multiplication by $i$ of the traceless matrix.

\subsection{Representations of $\SL$}\label{sec:repr}

Unitary irreducible representations ($D_{\j,\rho}$) from the principal series $(\jj,\rho)$ are functions 
\begin{equation}
 \C^2\setminus\{0\}\ni \z\rightarrow \Psi(\z)\in\C,\quad,\quad \z=\bvec{z_0}{z_1}
\end{equation}
satisfying the condition
\begin{equation}
 \Psi(e^{i\phi+r} \z)=e^{i(2\jj\phi+\rho r)-2r} \Psi(\z),\quad 
\end{equation}
with the action of $\SL$ defined by
\begin{equation}
 g\Psi(\z)=\Psi(g^T \z)
\end{equation}
We are using the convention of \cite{Frank2}, as opposed to that of \cite{Bargmann}. The latter is equivalent to the action
\begin{equation}
 g\Psi(\z)=\Psi(g^{-1}\z)
\end{equation}
These two actions can be related due to
\begin{equation}
 g^{-1}=\omega^{-1}g^T\omega
\end{equation}
The scalar product for two such functions $\Psi_1$ and $\Psi_2$ is defined as follows:
Let us introduce a form
\begin{equation}\label{eq:theform}
 \overline{\Psi_1(\z)}\Psi_2(\z)\Omega_{\z}
\end{equation}
where
\begin{equation}\label{eq:Omegaz}
\Omega_\z:=\frac{\iu}{2}\left(z_0 dz_1-z_1 dz_0 \right)\wedge \left(\overline{z_0} d\overline{z}_1 - \overline{z}_1 d \overline{z}_0 \right),
\quad \z=\bvec{z_0}{z_1} 
\end{equation}
The form \eqref{eq:theform} is invariant under the scaling transformation
\begin{equation}
 \z\rightarrow \lambda \z,\quad \lambda\in \C^*
\end{equation}
and is annihilated by the generator of this transformation, thus it descends to a form on $\CP^1$
\begin{equation}
 \langle \Psi_1,\Psi_2\rangle=\int_{\CP^1} \overline{\Psi_1(\z)}\Psi_2(\z)\Omega_\z
\end{equation}

\subsection{Subgroups preserving the normal $N$}
\label{sec:subgroup}

The subgroup of $\SO_+(1,3)$ that preserves the normal $N$ is the image of the subgroup of $\SL$ that preserves $\eta_N$ as follows
\begin{equation}\label{eq:sub}
 \St(N)=\{g\in \SL\colon \pi(g)N=N\}=\{g\in \SL\colon g\eta_N g^\dagger=\eta_N\}
\end{equation}
this is equivalent to $g^T$ preserving Hermitian form defined by
\begin{equation}
 \langle \z_1,\z_2\rangle_N=\z_1^\dagger \eta_N^T \z_2=\z_2^T \eta_N\widebar{\z_1}
\end{equation}
It follows from the fact that
\begin{equation}
 \langle g^T\z_1,g^T\z_2\rangle_N=\z_2^T \underbrace{g\eta_N\bar{g}^T}_{=\eta_N}\widebar{\z_1}=\langle \z_1,\z_2\rangle_N
\end{equation}
for $g\in \St(N)$.

\subsection{The Y map}
\label{sec:Ymap}
Let us review the generalized EPRL construction. As explained in the introduction, we will work only with spacelike surfaces. Let us consider a normal vector $N$
\begin{equation}
 N_\mu N^\mu=t,\quad t\in\{-1,1\}
\end{equation}
where $t=1$ for timelike, $t=-1$ for spacelike vector. Let us consider stabilizing group $\St(N)$.
In the case of $t=1$, 
\begin{equation}
 \St(N)\cong \SU(2)
\end{equation}
and it is exactly $\SU(2)$ for $N=e_0=(1,0,0,0)$. For $t=-1$ 
\begin{equation}
 \St(N)\cong \SU(1,1)
\end{equation}
and it is exactly $\SU(1,1)$ for $N=e_3=(0,0,0,1)$.

We will consider only standard normals $e_0=(1,0,0,0)$ and $e_3=(0,0,0,1)$. We will also denote either of them by $N^{\text{can}}$ as we would like to work in a unified setup. We denote $t^{can}=N^{can}\cdot N^{can}$.

The EPRL $Y$ map is an embedding of the unitary representation of the stabilizing group $\St(N)$ into unitary representation $(\j,\rho=2\gamma\jj)$ of $\SL$ that satisfies a certain extremality condition. Let us recall the choice of $\Heprl$ made by \cite{EPRL, Conrady2010}
\begin{enumerate}
 \item Spin $\j$, $D_{\j}$  representation of $\SU(2)$ embedded into $D_{\j,\rho}$ for $N^{can}=e_0$
 \item Discrete series $D_{\j}^\pm$ of spin $\j$ representations of $\SU(1,1)$ embedded into $D_{\j,\rho}$ for $N^{can}=e_3$
\end{enumerate}

\subsection{Coherent states}

All three families of representations have certain common features. Let us consider the generator of rotations around the $z$ axis. We can introduce bases of eigenfunctions. In all three cases there exists an extreme eigenvalue. The corresponding eigenfunctions are (Perelomov-)coherent states \cite{Perelomov, Conrady2010a, Bargmann} (see Appendix \ref{app:decomp})\footnote{We do not need to consider $-\j$ for $\SU(2)$ since it will be obtained from $\j$ eigenstate by rotation.}
\begin{itemize}
 \item For $N^{can}=e_0$
 \begin{equation}
 F_{\j\j}(\z)=\sqrt{\frac{2\j+1}{2\pi}} \braket{\z}{\z}_{N^{can}}^{\iu \frac{\rho}{2}-1-\j} \braket{\n_0}{\z}_{N^{can}}^{2\j}, 
 \end{equation}
 \item For $N^{can}=e_3$
 \begin{align}
&F^{+}_{\j\j}(\z)=\sqrt{\frac{2\j-1}{\pi}}\theta(\braket{\z}{\z}_{N^{can}}) \braket{\z}{\z}_{N^{can}}^{\iu \frac{\rho}{2}-1+\j} \braket{\z}{\n_0}_{N^{can}}^{-2\j},\\
&F^{-}_{\j-\j}(\z)=\sqrt{\frac{2\j-1}{\pi}}\theta(-\braket{\z}{\z}_{N^{can}})(-\braket{\z}{\z}_{N^{can}})^{\iu \frac{\rho}{2}-1+\j} (-\braket{\z}{\n_1}_{N^{can}})^{-2\j},
 \end{align}
\end{itemize}
where $\n_0=(1,0)$ and $\n_1=(0,1)$ and $\theta$ is the Heaviside step function.

All other coherent states are obtained by transforming these basic coherent states by group action of $\St(N^{can})$. In fact these states can be parametrized by spinors:
\begin{equation}
\Psi^{\n}(\z)=\sqrt{\frac{2\j+1}{2\pi}}\braket{\z}{\z}_{N^{can}}^{\iu \frac{\rho}{2}-1-\j}\braket{\n}{\z}_{N^{can}}^{2\j},\quad \braket{\n}{\n}_N^{can}=1,\ N^{can}=e_0
\end{equation}
and for $N^{can}=e_3$
\begin{align}
\Psi^{+,\n^+}(\z)&=\theta(\braket{\z}{\z}_{N^{can}})\sqrt{\frac{2\j-1}{\pi}}\braket{\z}{\z}_{N^{can}}^{\iu \frac{\rho}{2}-1+\j}\braket{\z}{\n^+}^{-2j}_{N^{can}}\\
\Psi^{-,\n^-}(\z)&=\theta(-\braket{\z}{\z}_{N^{can}})\sqrt{\frac{2\j-1}{\pi}}(-\braket{\z}{\z}_{N^{can}})^{\iu \frac{\rho}{2}-1+\j}(-\braket{\z}{\n^-}_{N^{can}})^{-2\j},
\end{align}
where $\braket{\n^\pm}{\n^\pm}_{N^{can}}=\pm 1$.

\subsection{Bilinear invariant form}

Let us recall the definition of the bilinear, $\SL$-invariant form \cite{gelfand5}
\begin{align}
\beta(\Phi, \Phi'):=&\frac{\sqrt{\left(\frac{\rho}{2}\right)^2+\jj^2}}{\pi}
\int_{\CP^1\times \CP^1}\underbrace{\Omega_{\z}\wedge \Omega_{\z'}|[\z,\z']|^{-2} [\z,\z']^{-\iu\frac{\rho}{2}-\jj} {\overline{[\z,\z']}}^{-\iu \frac{\rho}{2}+\jj}}_{\beta(\z,\z')}\nonumber\\
&\cdot{}\Phi(\z)\Phi'(\z')
\end{align}
where $\Phi, \Phi'$ are elements of the $\SL$-representation $(\jj,\rho)$, and $\Omega_\z$ is defined in \eqref{eq:Omegaz}.

\subsection{The vertex amplitude}
\label{sec:amplitude}

We can now consider the vertex amplitude in the spin foam model. Let us consider the pentagonal graph \cite{Frank2} with five nodes and links connecting each node with each other.
\begin{itemize}
 \item for every node $i$ we choose a canonical normal $N_i^{can}$ that determines an embedded subgroup (either $\SU(2)$ or $\SU(1,1)$)
 \item for every directed link $ij$ starting in the node $i$ we chose a type of embedded representation (in case of $\SU(1,1)$ it can be $D_{\j_{ij}}^+$ or $D_{\j_{ij}}^-$ and for $\SU(2)$ it is $D_{\j_{ij}}$)
 \item for every directed link $ij$ starting in the node $i$ we chose a spinor $\n_{ij}$ that determines a coherent state (in representation $(\jj_{ij},\rho_{ij})$ of $\SL$) that we will denote by $\Psi_{ij}(\z)$
 \end{itemize}
This data is a boundary data for the vertex amplitude of the extended EPRL Spin Foam model \cite{Frank2, Conrady2010}.

The vertex amplitude is given by an integral
$$
A_v=\int_{\SL^5} \prod_{i} dg_i \delta(g_5) \prod_{i<j}\beta(g_i\Psi_{ij}, g_j\Psi_{ji}), 
$$
where 
$$
g_i\Psi_{ij}(\z)=\Psi_{ij}(g_i^T\z)
$$
and the $\delta(g_5)$ is a gauge fixing. We will denote $\z_{ij}$ and $\z_{ji}$ the variables $\z$ and $\z'$ in the $\beta$ integral for nodes $i,j$.
\section{Stationary phase approximation}\label{sec:action}

In this section we will write the amplitude from section \ref{sec:amplitude} in a form suitable for the stationary phase method. 
We will consider the uniform scaling of the representations (we remind that $\rho$ and $\jj$ are related by the simplicity constraint)
$$
(\jj_{ij},\rho_{ij})\mapsto (\Lambda \jj_{ij},\Lambda \rho_{ij}),\quad \Lambda\in{\mathbb N}_+
$$
and we want to organize the integral in the form $c(\Lambda)\int f e^{\Lambda S}$. This will provide a definition of the action. The amplitude is written as integral of the product of many terms and we will focus on each term separately. The total action will be the sum of these partial contributions.

We will write conditions for the critical points of this action in terms of holomorphic derivatives, since that is the form that we will use later.

We will assume that $(\jj_{ij},\rho_{ij})\not=0$. Again, we expect that representations $(\jj_{ij},\rho_{ij})=(0,0)$ are of special interest. They should correspond to null surfaces.

\subsection{Scaled amplitude}

We denote the normal stabilized by the embedded group by $N_i^{can}$,
\begin{equation}
 N_i^{can}\cdot N_i^{can}=t_i^{can},\quad \langle \n_{ij},\n_{ij}\rangle_{N_i^{can}}=s_{ij},\quad s_{ij},t_{i}^{can}\in\{-1,1\}
\end{equation}
The coherent states decompose under the scaling in the following way, 
$$
\Psi_{ij}^\Lambda(\z_{ij})=c_{ij}(\Lambda)m_{ij}(\z)p_{ij}(\z_{ij})^\Lambda
$$
where $m_{ij},p_{ij}$ independent of $\Lambda$
\begin{equation}
 m_{ij}(\z_{ij})=\theta(s_{ij}\langle \z_{ij}|\z_{ij}\rangle_{N_i^{can}})(s_{ij}\langle \z_{ij}|\z_{ij}\rangle_{N_i^{can}})^{-1} 
\end{equation}
with notice that for $N^{can}_i=e_0$, $s_{ij}\langle \z_{ij}|\z_{ij}\rangle_{N_i^{can}}>0$ on the whole $\C^2\setminus \{0\}$.
Function $p_{ij}$ is given explicitly (uniform description) by
\begin{equation}
 p_{ij}=p_{ij}^\rho p_{ij}^{\jj}
\end{equation}
where
\begin{align}
 p_{ij}^\rho=\left( s_{ij}\langle g_i^T \z_{ij},g_i^T \z_{ij}\rangle_{N_i^{can}}\right)^{i\frac{\rho_{ij}}{2}},\quad
 p_{ij}^{\jj}=\left\{\begin{array}{ll}
                 p_{ij}^{aux}, & t_i^{can}=-1\\ \widebar{p_{ij}^{aux}}, & t_i^{can}=1
                \end{array}\right.
\end{align}
with
\begin{equation}
 p_{ij}^{aux}=\left(\frac{s_{ij}\langle g_i^T \z_{ij}, \n_{ij}\rangle_{N_i^{can}}}{(s_{ij}\langle g_i^T\z_{ij},g_i^T\z_{ij}\rangle_{N_i^{can}})^{1/2}}\right)^{t_i^{can}\ 2\jj_{ij}}
\end{equation}
The factors $c_{ij}(\Lambda)$ are independent from $\z$ and $g$
\begin{equation}
 c_{ij}(\Lambda)=\left\{\begin{array}{ll}
                \sqrt{\frac{2\Lambda\j_{ij}+1}{2\pi}}, &  N_i^{can}=e_0\text{ and } D_{\j} \text{ representation}\\
                \sqrt{\frac{2\Lambda\j_{ij}-1}{\pi}}, &  N_i^{can}=e_3\text{ and } D^\pm_{\j} \text{ representation}
               \end{array}\right.
\end{equation}
Similarly, we can decompose the integral kernel of $\beta$ as 
$$
\beta_{ij}^\Lambda(\z_{ij},\z_{ji})=c_{ij}^\beta(\Lambda) m^\beta_{ij}(\z_{ij},\z_{ji})p^\beta_{ij}(\z_{ij},\z_{ji})^\Lambda
$$
where $c_{ij}^\beta(\Lambda)=\Lambda\frac{\sqrt{\left(\frac{\rho_{ij}}{2}\right)^2+\jj_{ij}^2}}{\pi}$,
\begin{align}
m^\beta_{ij} (\z_{ij},\z_{ji})&=\Omega_{\z_{ij}}\wedge \Omega_{\z_{ji}}|[\z_{ij},\z_{ji}]|^{-2}\\ 
p^\beta_{ij}(\z_{ij},\z_{ji})&= [\z_{ij},\z_{ji}]^{-\iu\frac{\rho_{ij}}{2}-\jj_{ij}} {\overline{[\z_{ij},\z_{ji}]}}^{-\iu \frac{\rho_{ij}}{2}+\jj_{ij}}
\end{align}
The differential form
$$
M_{ij}:=m^\beta_{ij} (\z_{ij},\z_{ji})m_{ij}(g_i^T\z_{ij})m_{ji}(g_j^T\z_{ji})\qquad
$$
descends to $\CP^1\times \CP^1$ as a measure (smooth in the interior of its support). 

Similarly\footnote{The scaling invariance of $P_{ij}$ and descendent property of $\mu_{ij}$ is a general fact.}, as the integrand form is invariant under rescaling of $\z$ also the part scaled uniformly with $\Lambda$ need to be invariant. The amplitude
$$
P_{ij}:=p^\beta_{ij}(\z_{ij},\z_{ji})p_{ij}(g_i^T\z_{ij})p_{ji}(g_j^T\z_{ji})
$$
is invariant under rescalings $\z_{ij}\mapsto \lambda_{ij} \z_{ij}$, $\z_{ji}\mapsto \lambda_{ji} \z_{ji}$ and thus projects down to a function on $\CP^1\times \CP^1$.

We will write the vertex amplitude as an integral of the form:
$$
A_v(\Lambda\jj, \Lambda\rho)= c(\Lambda)
\int_{\SL^5} \prod_{i} dg_i \delta(g_5) \int_{(\CP^1\times \CP^1)^{10}}
\left(\prod_{i<j} M_{ij}\right) \left(\prod_{i<j} P_{ij}\right)^\Lambda,
$$
where $c(\Lambda)\sim  \Lambda^{20}$ for large $\Lambda$, 
\begin{equation}\label{eq:c-scaling}
 c(\Lambda)=\prod_{i<j}c_{ij}^\beta(\Lambda)\ \prod_{i\not=j} c_{ij}(\Lambda)
\end{equation}

\subsection{Action}
We observe, that $p^{\beta}$ is of the form
$$
p^{\beta}=\frac{\alpha}{\overline{\alpha}}, \qquad  \alpha=[\z_{ij},\z_{ji}]^{-\iu\frac{\rho_{ij}}{2}-\jj_{ij}},
$$
so we conclude $|p^{\beta}|=1$. Similarly we will prove that $|p_{ij}|\leq 1$ (lemma \ref{lm:reality2}). 

Now we define 
\begin{equation}
S=\sum_{i<j}\ln P_{ij}
\end{equation}
and 
\begin{equation}
S_{ij}=\ln p_{ij},\quad
S^{\beta}_{ij}=\ln p_{ij}^\beta.
 \end{equation}
We note that as written above, the various $S$ are multi-valued functions defined up to multiplicity of $2\pi i$, but as long as the product of the $P_{ij}$ is nonzero, we can always work in a local branch. We have
$$
S(g,\z)
=\sum_{i<j}S_{ij}(g_i,\z_{ij})+S_{ji}(g_j,\z_{ji})+S_{ij}^{\beta}(\z_{ij},\z_{ji}),
$$
and
$$
S^{\beta}_{ij}(\z_{ij},\z_{ji})=-\iu \rho_{ij}\ln |[\z_{ij},\z_{ji}]|-\jj_{ij} \ln [\z_{ij},\z_{ji}]+\jj_{ij} \ln \overline{[\z_{ij},\z_{ji}]}.
$$
The action $S_{ij}$ of the amplitude depending on $g_i$ and $\z_{ij}$ is given by
\begin{equation}
 e^{S_{ij}}=e^{S^{\rho}_{ij}}e^{S^\jj_{ij}}
\end{equation}
where
\begin{equation}
 e^{S^\rho_{ij}}=p_{ij}^\rho,\quad e^{S^j_{ij}}=p_{ij}^{\jj}
\end{equation}
Similarly we can also define auxiliary $e^{S^{aux}_{ij}}=p_{ij}^{aux}$.

The integration is restricted to the domain where for all $i\not=j$
\begin{equation}
s_{ij}\langle g_i^T \z_{ij},g_i^T \z_{ij}\rangle_{N_i^{can}}>0 
\end{equation}

\subsection{Gauge symmetries of the action}

The following transformations labeled by $g\in SL(2,\C)$, $\lambda_{ij}\in \C^*$
\begin{equation}
 g_i'=gg_i,\quad \z_{ij}'=\lambda_{ij} (g^T)^{-1} \z_{ij}
\end{equation}
preserve the non-gauge fixed action. The $g$-part of this transformation is gauge fixed by the $\delta(g_5)$ term. We can still consider variations with respect to $g_5$ of the action, they are just not independent of the others.  

The subgroup of gauge transformations with $g=1$ will be called $\CP^1$-gauge transformations.

\subsection{Critical points}


According to the stationary phase method\footnote{Justification of the applicability is left for future research, as discussed in the introduction.} 
\cite{Hormander}, in the large $\Lambda$ regime the amplitude is dominated by contributions from {\it the critical points (manifolds)} where the following conditions are satisfied: the reality condition 
$$
\Re(S)=0 
$$
and the stationary point condition 
$$
\delta^\R S=0.
$$
Here $\delta^\R$ denotes the standard variation, to be distinguished from the holomorphic variation $\delta$ that we will introduce later.

We will use variational calculus (form) notation for derivatives: We denote
\begin{equation}
 \delta^{\R} S(g,\z)=\lim_{\epsilon\rightarrow 0}\frac{1}{\epsilon}(S(g(\epsilon), \z(\epsilon))- S(g,\z))
\end{equation}
where the variations are of the form:
\begin{equation}
g(\epsilon)_i=g_{i} e^{\epsilon\delta^{\R} g_{i}},\quad \z(\epsilon)_{ij}= e^{\epsilon\delta^{\R} g_{ij}^T}\z_{ij} 
\end{equation}
with $\delta^{\R} g_i, \delta^{\R} g_{ij}$  taking values in the Lie algebra of $\SL$. Let us notice that
\begin{equation}
 \delta^\R \z_{ij}=\delta^{\R} g_{ij}^T\z_{ij}\ .
\end{equation}
We will also use variations with respect to single variables that we will denote as
\begin{equation}
 \delta^\R_{g_i}S,\quad \delta^\R_{\z_{ij}}S,
\end{equation}
that are variation with single $\delta^\R g_i$ (and respectively single $\delta^\R\z_{ij}$) nonzero.

A critical point for the given boundary data (spinors $\n_{ij}$, vectors $N_i^{can}$ and the types of embedded representations) consists of a collection of 
\begin{equation}
\{ g_i, \z_{ij},\quad i\not= j\}
\end{equation}
satisfying the reality condition $\Re(S)=0$ and the stationary point conditions ($\delta^{\R}S=0$) i.e.
\begin{align}
 \forall_{ij} \delta^{\R}_{\z_{ij}}S_{ij}+\delta^{\R}_{\z_{ij}} S^\beta_{ij}=0\label{end-middle}\\
 \forall_i \sum_j \delta^{\R}_{g_i} S_{ij}=0.
\end{align}
There are gauge transformations acting on the critical points.
\subsection{Reality condition and holomorphic derivatives}

We will consider action $S$ as a function of holomorphic $g$ and antiholomorphic $\bar{g}$ variables\footnote{
For our purpose of computing first derivatives, we can assume that all variables are group elements in $SL(2,\C)$ considering $g_{ij}$ instead of $\z_{ij}$.
}. We will prove later (see lemma \ref{lm:sij}) that
\begin{equation}
 \Re S\leq 0
\end{equation}
However this holds only when $\bar{g}$ and $g$ are complex conjugated (we will call this set real manifold). We will consider now complexified manifold where these group elements are independent. We denote holomorphic and antiholomorphic variations with respect to these group elements by $\delta_g S$ and $\delta_{\bar{g}} S$.

\begin{df}
A point satisfies the reality condition if it is in the real manifold and $\Re S=0$.
\end{df}

\begin{lm}
 If the action $S=\sum_\alpha S_\alpha$ and $\forall_\alpha \Re S_\alpha\leq 0$ on the real manifold then
 \begin{align}
  &\Re S\leq 0 \text{ on the real manifold}\\
  &\Re S=0\Leftrightarrow\forall_\alpha \Re S_\alpha=0
 \end{align}
\end{lm}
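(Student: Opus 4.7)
The plan is to observe that this lemma is essentially an elementary statement about finite sums of non-positive real numbers, and the proof requires no machinery beyond linearity of $\Re$ and basic order properties of $\R$.

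First I would note that since $\Re$ is $\R$-linear, we have on the real manifold
\begin{equation}
 \Re S = \Re \sum_\alpha S_\alpha = \sum_\alpha \Re S_\alpha.
\end{equation}
The first claim, $\Re S \leq 0$, is then immediate: a finite sum of non-positive real numbers is non-positive.

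For the equivalence, one direction is trivial: if every $\Re S_\alpha = 0$, then their sum $\Re S$ vanishes. For the other direction, suppose $\Re S = 0$. Writing $\sum_\alpha \Re S_\alpha = 0$ with each summand non-positive forces each $\Re S_\alpha = 0$, since otherwise at least one term would be strictly negative and (all others being $\leq 0$) the sum would be strictly negative, a contradiction.

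There is no real obstacle here — the statement is a packaging lemma that will be applied later with $S_\alpha$ ranging over the individual $S_{ij}$ and $S^\beta_{ij}$ contributions, once the per-summand inequalities $\Re S_\alpha \leq 0$ (the content of the upcoming lemma \ref{lm:sij} and the observation that $|p^\beta_{ij}|=1$) have been established. The only point worth being careful about is the domain of validity: the inequality and the equivalence are asserted only on the real manifold where $\bar g$ is the complex conjugate of $g$, which is exactly where the hypothesis $\Re S_\alpha \leq 0$ is given, so there is nothing further to check.
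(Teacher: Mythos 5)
Your proof is correct and follows essentially the same route as the paper: decompose $\Re S=\sum_\alpha \Re S_\alpha$ by linearity and note that a sum of non-positive terms is non-positive and vanishes only if every term vanishes. The paper states this in one line; your version merely spells out the trivial direction and the contradiction explicitly, so there is nothing to add.
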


\begin{proof}
The inequality 
 \begin{equation}
  \Re S=\sum_\alpha\Re S_\alpha\leq 0
 \end{equation}
is saturated only if $\forall_\alpha \Re S_\alpha=0$
\end{proof}

\begin{lm}\label{lm:SxSxbarn}
 When reality condition are satisfied then
 \begin{equation}
  \delta_g S=-\delta_g \bar{S},\quad \delta_{\bar{g}} S=-\widebar{\delta_g S}
 \end{equation}
 when in the second equality we took $\delta \bar{g}=\widebar{\delta g}$
\end{lm}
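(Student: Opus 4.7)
The plan is to exploit the reality condition together with Lemma \ref{lm:sij} (which gives $\Re S_\alpha \leq 0$ on the real manifold, hence $\Re S \leq 0$ by the earlier lemma in this section) to conclude that the stationary point is a maximum of $\Re S$ restricted to the real manifold. Consequently the total real differential of $\Re S$ vanishes there along every tangent direction.

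Next I would translate this into Wirtinger calculus on the complexified group manifold, where $g$ and $\bar g$ are treated as independent holomorphic coordinates. The standard fact that for a real-valued function on a complex manifold the vanishing of the real differential is equivalent to the simultaneous vanishing of its holomorphic and antiholomorphic parts then yields
\[
\delta_g(\Re S) = 0,\qquad \delta_{\bar g}(\Re S) = 0
\]
at the stationary point. Using $2\Re S = S + \bar S$ on the real manifold and applying the holomorphic variation $\delta_g$ immediately gives $\delta_g S + \delta_g \bar S = 0$, which is the first identity.

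For the second identity I would use the elementary Wirtinger observation that on the real manifold (where $\bar g$ truly is the complex conjugate of $g$) one has $\delta_{\bar g}\bar S = \overline{\delta_g S}$. Concretely, if one writes $S(g,\bar g)$ as a convergent series $\sum a_{mn} g^{m} \bar g^{n}$ in the complexified variables, then $\bar S$ is by definition the holomorphic extension of the complex conjugate, namely $\bar S(g,\bar g) = \sum \overline{a_{mn}}\, \bar g^{m} g^{n}$, and the identity follows by differentiating term by term and restricting to the diagonal where $g$ and $\bar g$ are conjugate. Substituting this into $\delta_{\bar g}(\Re S) = 0$ yields $\delta_{\bar g} S = -\overline{\delta_g S}$.

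The main delicate point, and essentially the only obstacle, is the notational book-keeping: $\bar S$ must be interpreted as the holomorphic extension obtained by conjugating the series coefficients of $S$, not as a literal complex conjugation of the formula for $S(g,\bar g)$, and the Wirtinger identity $\delta_{\bar g}\bar S = \overline{\delta_g S}$ holds only on the real diagonal $\bar g = \overline{g}$. Once this is set up, both identities reduce to a line of computation from the maximum condition.
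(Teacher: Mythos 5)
Your proposal is correct and follows essentially the same route as the paper: the reality condition together with $\Re S\leq 0$ on the real manifold makes the point a maximum of $\Re S$, so its real variation vanishes, and splitting this into holomorphic/antiholomorphic parts (the paper does this by taking $\delta\bar g=\widebar{\delta g}$ and using the freedom to replace $\delta g$ by $i\,\delta g$, which is exactly your Wirtinger type-decomposition argument) gives $\delta_g S=-\delta_g\bar S$, after which the conjugation identity $\delta_{\bar g}\bar S=\widebar{\delta_g S}$ yields the second relation. No gaps.
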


\begin{proof}
 The real variation of $\Re S$ is zero when reality condition are satisfied (from extremality) so
 \begin{equation}
  0=\delta^\R\Re S=\delta_g \Re S+\delta_{\bar{g}} \Re S, \text{ where }\delta \bar{g}=\widebar{\delta g}
 \end{equation}
 that can be written as
 \begin{equation}
  \delta_g S+\delta_{\bar{g}} S+\delta_g \bar{S}+\delta_{\bar{g}} \bar{S}=0
 \end{equation}
 but we also have 
 \begin{equation}
  \widebar{\delta_g S}=\delta_{\bar{g}} \bar{S},\quad \delta_{\bar{g}} S=\widebar{\delta_g \bar{S}}
 \end{equation}
 The equality is thus
 \begin{equation}
  \Re (\delta_g S+\delta_g \bar{S})=0,
 \end{equation}
 From arbitrariness of $\delta g$
 \begin{equation}
  \delta_g S+\delta_g \bar{S}=0,\quad \delta_g S=-\delta_g \bar{S}
 \end{equation}
 We can also compute
 \begin{equation}
  \delta_{\bar{g}} S=\widebar{\delta_g \bar{S}}=-\widebar{\delta_g S}
 \end{equation}
 for $\delta \bar{g}=\widebar{\delta g}$.
\end{proof}

\begin{lm}
 When the reality conditions are satisfied, then the stationary point conditions are equivalent to vanishing holomorphic derivatives.
\end{lm}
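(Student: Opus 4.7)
The plan is to combine the preceding Lemma \ref{lm:SxSxbarn} with the elementary fact that real variations of an $\SL$ element already span the full complex Lie algebra $sl(2,\C)$ viewed as a real vector space, so that a complex-linear functional is determined by its values on real variations.

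First I would expand a real variation as $\delta^\R S = \delta_g S + \delta_{\bar g} S$ with $\delta \bar g = \overline{\delta g}$. By Lemma \ref{lm:SxSxbarn}, at any point satisfying the reality condition one has $\delta_{\bar g} S = -\overline{\delta_g S}$, so
$$
\delta^\R S \;=\; \delta_g S - \overline{\delta_g S} \;=\; 2i\,\Im(\delta_g S).
$$
This immediately gives one direction: if $\delta_g S = 0$, then also $\delta_{\bar g} S = 0$ by the lemma and $\delta^\R S = 0$.

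For the converse, suppose $\delta^\R S = 0$ for every admissible real variation. The identity above forces $\Im(\delta_g S(\delta g)) = 0$ for every $\delta g$. The decisive observation is that $\delta g$ ranges over all of $sl(2,\C)$ regarded as a real Lie algebra, so $i\,\delta g$ is also an allowed real variation. Using $\C$-linearity of the holomorphic derivative,
$$
0 \;=\; \Im\bigl(\delta_g S(i\,\delta g)\bigr) \;=\; \Im\bigl(i\,\delta_g S(\delta g)\bigr) \;=\; \Re\bigl(\delta_g S(\delta g)\bigr),
$$
which together with the previous relation forces $\delta_g S(\delta g) = 0$ for every $\delta g$. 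Hence $\delta_g S = 0$, and Lemma \ref{lm:SxSxbarn} then yields $\delta_{\bar g} S = 0$ as well. Exactly the same argument applies to the $\z_{ij}$-variables, which by the footnote on complexification can be treated as additional $\SL$ parameters for the purpose of first-order variations.

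There is no real analytic obstacle here; the whole statement is two lines of computation once Lemma \ref{lm:SxSxbarn} is in place. The only conceptual point worth emphasizing is the ``doubling'' step: a priori $\delta^\R S = 0$ only gives vanishing of the imaginary part of $\delta_g S$, and one needs to exploit that the real variations already include the $i$-rotated ones to conclude vanishing of the real part. Without this remark the equivalence with vanishing holomorphic derivatives would be false.
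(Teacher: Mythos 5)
Your proposal is correct and follows essentially the same route as the paper: both use Lemma \ref{lm:SxSxbarn} to write $\delta^\R S$ as (a multiple of) $i\,\Im\,\delta_g S$, and both rely on the fact that the variation parameter ranges over all of $sl(2,\C)$, so replacing $\delta g$ by $i\,\delta g$ upgrades vanishing of the imaginary part to vanishing of the full holomorphic derivative. Your write-up merely spells out the ``multiply by $i$'' step that the paper states in one line.
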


\begin{proof}
It follows from lemma \ref{lm:SxSxbarn} that
 \begin{equation}
  \delta^\R S=\delta_g S+\delta_{\bar{g}} S=\delta_g S-\widebar{\delta_g S}=i\Im \delta_g S
 \end{equation}
 where we took $\delta \bar{g}=\widebar{\delta g}$. As holomorphic variables can be multiplied by $i$ we see that a vanishing holomorphic derivative is equivalent to a vanishing real variation.
\end{proof}

\subsection{Holomorphic critical point conditions}

Let us now rephrase the critical point conditions in holomorphic language:

A critical point for a given set of boundary data (spinors $\n_{ij}$, vectors $N_i^{can}$ and the type of embedded representation) consists of
a collection of 
\begin{equation}
 \{g_i, \z_{ij}\}
\end{equation}
(on the real manifold) satisfying the reality condition 
\begin{equation}
\forall_{ij}\ \Re S_{ij}=0
\end{equation}
and the stationary point conditions ($\delta S=0$)
\begin{align}
 \forall_{ij} \delta_{\z_{ij}}S_{{ij}}+\delta_{\z_{ij}} S^\beta_{ij}=0\label{end-middle-hol}\\
 \forall_i \sum_j \delta_{g_i} S_{{ij}}=0.
\end{align}
There are gauge transformations acting on the critical points.

\section{Traceless matrices, spinors and bivectors}\label{sec:traceless}

In this section we will describe the connection between spinors and Lie algebra elements of $\SL$ (traceless matrices). Using this connection, we will show that critical points can be described in terms of traceless matrices satisfying certain conditions (we will call it $\SL$ solution). This will allow us later to translate it into geometric language of the Lorentz group and bivectors. We will also compute the (difference of the) phase between two critical points.

\subsection{Traceless matrices}

We will now recall some properties of spinors that we will use to translate the critical point conditions into the language of traceless matrices. Proofs are provided for the convenience of the reader in Appendix \ref{app:traceless}.

Let us assume that $\delta g$ is traceless then
\begin{equation}
 \omega \delta g^T+\delta g\omega=0
\end{equation}
for the symplectic form $\omega$ of \eqref{eq:symp}, and for spinors $\u$ and $\v$
\begin{equation}
 [\u,\delta g^T \v]=[\v,\delta g^T \u]=\frac{1}{2}\tr (\v\u^T+\u\v^T)\omega \delta g^T
\end{equation}
as $(\v\u^T-\u\v^T)\omega =[\u,\v]{\mathbb I}$.  Every traceless matrix with eigenvalues $\lambda,-\lambda\in \C$ can be written in the form
\begin{equation}
 \M=\lambda(\v\u^T+\u\v^T)\omega, \quad [\u,\v]=1,
\end{equation}
with $\u$, $\v$ uniquely determined up to common rescaling (see lemma \ref{lm:comp} in Appendix \ref{app:traceless}). This property allows us to work purely in terms of traceless matrices.

\subsection{Variations $\delta_{g_i} S_{ij}$ and reality conditions}
\label{se_derivatives}

In order to avoid overburden the notation  we will in this subsection suppress indices $ij$ and write $S_{ij}(g_i,\z_{ij})$ as $S(g,\z)$.
Let us consider normal $N^{can}$ and normalized spinor $\n$ and $\rho\in\R$, $2\j\in\Z$
\begin{equation}
 N^{can}\cdot N^{can}=\det\eta_{N^{can}}=t,\quad \langle \n,\n\rangle_{N^{can}}=\n^\dagger \eta_{N^{can}}^T \n=s,\quad s,t\in\{-1,1\}
\end{equation}
the action part of the amplitude depending on $g$ and $\z$
\begin{equation}
 e^{S}=e^{S^{\rho}}e^{S^\j}
\end{equation}
where (the integration is restricted to the domain where 
\begin{equation}
 s\langle g^T \z,g^T \z\rangle_{N^{can}}>0
\end{equation}
and 
\begin{align}
 e^{S^\rho}=\left( s\langle g^T \z,g^T \z\rangle_{N^{can}}\right)^{i\frac{\rho}{2}}\\
 e^{S^\j}=\left\{\begin{array}{ll}
                 e^{S^{aux}} & t=-1\\ \widebar{e^{S^{aux}}} & t=1
                \end{array}\right.
\end{align}
where
\begin{equation}
 e^{S^{aux}}=\left(\frac{s\langle g^T \z, \n\rangle_{N^{can}}}{(s\langle g^T\z,g^T\z\rangle_{N^{can}})^{1/2}}\right)^{t\ 2\j}
\end{equation}
In fact although there are many different cases one can prove
(see appendix \ref{app:traceless}) that on real manifold
\begin{equation}
  \Re S^\rho=0,\quad \Re S^{\j}\leq 0
\end{equation}
and the equality $\Re S^{\j}=0$ holds if and only if 
\begin{equation}
 \exists_{\xi\in\C}\ g^T\z=\xi \n
\end{equation}

\begin{lm}\label{lm:sij}
 On real manifold $\Re S\leq 0$ and under condition $\Re S=0$
 \begin{equation}
  \delta_g S=\left(i\frac{\rho}{2}+\jj\right) [\u,\delta g^T \v]=\frac{1}{2}\left(i\frac{\rho}{2}+\j\right) \tr (\u\v^T+\v\u^T)\omega \delta g^T 
 \end{equation}
 where
 \begin{equation}
  \u=s\omega \eta_{N^{can}} \bar{\n},\quad \v=\n
 \end{equation}
 and $s=\langle \n,\n\rangle_{N^{can}}\in\{-1,1\}$
\end{lm}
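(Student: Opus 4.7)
My plan splits the statement into the inequality and the derivative formula. The inequality $\Re S \leq 0$ on the real manifold is immediate: writing $S = S^\rho + S^\jj$, Lemma \ref{lm:reality2} gives $\Re S^\rho = 0$ and $\Re S^\jj \leq 0$, so $\Re S \leq 0$, and $\Re S = 0$ forces $\Re S^\jj = 0$. By the characterisation in Lemma \ref{lm:reality2}, this gives the crucial input $g^T\z = \xi\,\n$ for some $\xi \in \C^\ast$, which I will use throughout the derivative computation.

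For the variation of $S^\rho$, I would exploit that $\langle\cdot,\cdot\rangle_{N^{can}}$ is antilinear in the first slot, so the holomorphic $g$-dependence of $\langle g^T\z, g^T\z\rangle_{N^{can}}$ sits only in the second slot, giving $\delta_g\langle g^T\z, g^T\z\rangle_{N^{can}} = (g^T\z)^\dagger\eta_{N^{can}}^T \delta g^T(g^T\z)$. Substituting $g^T\z = \xi\,\n$ reduces this to $|\xi|^2\langle\n,\delta g^T\v\rangle_{N^{can}}$, which Lemma \ref{lm:u-r} rewrites as $s|\xi|^2[\u,\delta g^T\v]$. Dividing by $\langle g^T\z, g^T\z\rangle_{N^{can}} = s|\xi|^2$ yields $\delta_g\ln\langle g^T\z, g^T\z\rangle_{N^{can}} = [\u,\delta g^T\v]$, hence $\delta_g S^\rho = \tfrac{i\rho}{2}[\u,\delta g^T\v]$.

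The contribution from $S^\jj$ splits into the two subcases $t = \pm 1$. When $t = -1$, the numerator $\langle g^T\z, \n\rangle_{N^{can}}$ is antiholomorphic in $g$ and drops out of $\delta_g$; only the normalisation survives, giving $\delta_g S^\jj = -2\jj\cdot\bigl(-\tfrac{1}{2}\bigr)[\u,\delta g^T\v] = \jj[\u,\delta g^T\v]$. When $t = 1$, $S^\jj = \overline{S^{aux}}$, and I would first use that $\langle g^T\z, g^T\z\rangle_{N^{can}}$ is real (on the domain of integration) together with the hermitian identity $\overline{\langle g^T\z, \n\rangle_{N^{can}}} = \langle\n, g^T\z\rangle_{N^{can}}$ to rewrite $e^{S^\jj}$ with $\langle\n, g^T\z\rangle_{N^{can}}$ in the numerator, making its holomorphic $g$-dependence manifest. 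A direct computation using $g^T\z = \xi\,\n$ and Lemma \ref{lm:u-r} again gives $\delta_g\ln\langle\n, g^T\z\rangle_{N^{can}} = [\u,\delta g^T\v]$, so combining with the denominator one obtains $\delta_g S^\jj = 2\jj\cdot\bigl(1 - \tfrac{1}{2}\bigr)[\u,\delta g^T\v] = \jj[\u,\delta g^T\v]$ in this case as well. Summing, $\delta_g S = \bigl(i\rho/2 + \jj\bigr)[\u,\delta g^T\v]$, and the second representation is just the identity \eqref{eq:re} applied to the traceless $\delta g$.

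The main obstacle is the bookkeeping in the $t = 1$ subcase: one must swap between the two representations of $S^{aux}$ via complex conjugation on the real manifold without confusing holomorphic and antiholomorphic derivatives, so that the apparently vanishing contribution from $\langle g^T\z, \n\rangle_{N^{can}}$ is correctly rescued as $\delta_g\ln\langle\n, g^T\z\rangle_{N^{can}}$. Once this is handled, both signatures of $t$ produce the same coefficient $\jj$, which is essential for the unified formula.
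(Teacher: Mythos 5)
Your proposal is correct, and its skeleton is the same as the paper's: split $S=S^\rho+S^{\jj}$, get $\Re S\leq 0$ and the characterisation $g^T\z=\xi\n$ from Lemma \ref{lm:reality2}, compute $\delta_g\ln\langle g^T\z,g^T\z\rangle_{N^{can}}=[\u,\delta g^T\v]$ on the reality locus via Lemma \ref{lm:u-r}, and convert to the trace form with \eqref{eq:re}. The one place where you genuinely deviate is the $t=1$ subcase: the paper writes $\delta_g S^{\jj}=-t\,\delta_g S^{aux}$ by invoking Lemma \ref{lm:SxSxbarn}, i.e.\ it uses the extremality of $\Re S^{aux}$ at the reality point to trade the antiholomorphic conjugate for a holomorphic derivative, whereas you instead identify the holomorphic dependence directly, using hermiticity of $\langle\cdot,\cdot\rangle_{N^{can}}$ to rewrite $\overline{\langle g^T\z,\n\rangle}_{N^{can}}=\langle\n,g^T\z\rangle_{N^{can}}$ (and the fact that $\langle g^T\z,g^T\z\rangle_{N^{can}}$ conjugates into itself as a function of the independent variables). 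Your route is slightly more elementary in that it does not need the extremality-based Lemma \ref{lm:SxSxbarn} for this step, at the cost of having to argue that the conjugation identities are exact functional identities (so the holomorphic extension is unambiguous), which you do correctly; the paper's route packages that bookkeeping once and for all in Lemma \ref{lm:SxSxbarn}. Both yield $\delta_g S^{\jj}=\jj[\u,\delta g^T\v]$ in each signature and hence the stated formula.
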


When reality condition is satisfied then
\begin{equation}
 \delta_{g_i} S_{{ij}}=\tr \B_{ij}\delta g_i^T
\end{equation}
where
\begin{equation}
 \B_{ij}=\frac{1}{2}\left(i\frac{\rho_{ij}}{2}+\j_{ij}\right)(\u_{ij}\v_{ij}^T+\v_{ij}\u_{ij}^T)\omega
\end{equation}
with spinors
\begin{align}
  \u_{ij}=s_{ij}\omega \eta_{N_i^{can}} \n_{ij},\quad \v_{ij}=\n_{ij}
\end{align}
and $s_{ij}=\langle \n_{ij},\n_{ij}\rangle_{N_i^{can}}$

\subsection{Variations with respect to $\z_{ij}$}

The edge action can be divided into pieces
\begin{equation}
 S_{{ij}}(g_i,\z_{ij})+S_{{ji}}(g_j,\z_{ji})+S^\beta_{ij} (\z_{ij},\z_{ji})
\end{equation}

We parametrize $\delta \z_{ij}$ by $\delta g_{ij}$ traceless as follows
\begin{equation}
 \delta \z_{ij}=\delta g_{ij}^T\ \z_{ij}
\end{equation}
all variations can be written this way but $\delta g_{ij}$ is not unique.

\begin{lm} \label{lm:zg}
$\delta_{\z_{ij}} S_{{ij}}=\delta_{g_{i}} S_{{ij}},\text{ for } \delta g_i=g_i^{-1} \delta g_{ij} g_i$
\end{lm}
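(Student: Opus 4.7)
The plan is to exploit the fact that $S_{ij}(g_i,\z_{ij})$ depends on its two arguments only through the combination $\xi := g_i^T \z_{ij}$. This is immediate by inspection of the explicit formulas: both $p_{ij}^\rho = (s_{ij}\langle g_i^T\z_{ij},g_i^T\z_{ij}\rangle_{N_i^{can}})^{i\rho_{ij}/2}$ and $p_{ij}^{aux}$ are built entirely out of the pairings $\langle g_i^T\z_{ij},g_i^T\z_{ij}\rangle_{N_i^{can}}$ and $\langle g_i^T\z_{ij},\n_{ij}\rangle_{N_i^{can}}$, so one can write $S_{ij}(g_i,\z_{ij}) = \widetilde{S}_{ij}(\xi)$ for some function $\widetilde{S}_{ij}$ of a single spinor argument.

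Having established this, the lemma becomes a chain-rule calculation. The parametrization $\delta \z_{ij} = \delta g_{ij}^T \z_{ij}$ produces
\begin{equation}
\delta_{\z_{ij}} \xi = g_i^T \delta g_{ij}^T \z_{ij},
\end{equation}
while the variation $g_i \mapsto g_i e^{\epsilon \delta g_i}$ with $\z_{ij}$ held fixed produces
\begin{equation}
\delta_{g_i} \xi = \delta g_i^T g_i^T \z_{ij}.
\end{equation}
Requiring these two expressions for $\delta \xi$ to agree is the condition
\begin{equation}
\delta g_i^T g_i^T = g_i^T \delta g_{ij}^T,
\end{equation}
which transposes to $g_i\, \delta g_i = \delta g_{ij}\, g_i$, i.e.\ $\delta g_i = g_i^{-1} \delta g_{ij} g_i$, exactly the substitution in the statement. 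Applying $\partial_\xi \widetilde{S}_{ij}$ to both matching expressions then yields $\delta_{\z_{ij}} S_{ij} = \delta_{g_i} S_{ij}$.

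The only thing to double-check is consistency of the ansatz: $\delta g_i$ must live in $sl(2,\C)$ so that the one-parameter family $g_i e^{\epsilon \delta g_i}$ lies in $\SL$. Since $\delta g_{ij}$ is traceless by assumption and conjugation preserves the trace, $\delta g_i = g_i^{-1} \delta g_{ij} g_i$ is automatically traceless, so there is no obstruction. In short, the lemma is essentially tautological once the factorization $S_{ij} = \widetilde{S}_{ij}(g_i^T \z_{ij})$ is noticed, and the main (very minor) step is the transposition identity that converts left multiplication by $\delta g_{ij}$ into the adjoint action used to define $\delta g_i$.
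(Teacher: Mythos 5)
Your proof is correct and follows essentially the same route as the paper: the key observation in both is that $S_{ij}$ depends only on $\xi=g_i^T\z_{ij}$, and the substitution $\delta g_i=g_i^{-1}\delta g_{ij}g_i$ is exactly what makes the $g_i$-variation of $\xi$ reproduce the $\z_{ij}$-variation (the paper phrases this equivalently via the compensating variation with $\delta(g_i^T\z_{ij})=0$ and linearity).
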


\begin{proof}
 Function $S_{{ij}}$ depends only on $g_i^T\z_{ij}$ thus is invariant under variations such that $\delta (g_i^T\z_{ij})=0$.
 Examples of such variations are
 \begin{equation}
  \delta \z_{ij}=\delta g_{ij}^T\ \z_{ij}\text{ and } \delta g_i^T=-g_i^T \delta g_{ij}^T (g_i^T)^{-1}
 \end{equation}
 In this case $\delta_{\z_{ij}} S_{{ij}}+\delta_{g_{i}} S_{{ij}}=0$, so taking variation as in the thesis we get the result.
\end{proof}

We will from now on abuse notation and regard $S_{ji}^\beta=S_{ij}^\beta$ for $i<j$.

\begin{lm}\label{lm:beta}
 Let us introduce a traceless matrix $\B^\beta_{ij}$ such that
 \begin{equation}
 \delta_{\z_{ij}} S^\beta_{ij}=\tr \B^\beta_{ij}\delta g_{ij}^T 
 \end{equation}
 then
 \begin{equation}
  \B^\beta_{ij}=\frac{1}{2}\left(i\frac{\rho_{ij}}{2}+\j_{ij}\right)(\u\v^T+\v\u^T)\omega
 \end{equation}
 where
 \begin{equation}
  \u=\z_{ij},\quad \v=\frac{1}{[\z_{ij},\z_{ji}]}\z_{ji},\quad [\u,\v]=1
 \end{equation}
 Spinors $\z_{ij}$ and $\z_{ji}$ are determined up to complex scaling by
 \begin{align}
   \B^\beta_{ij}\ \z_{ji}=\frac{1}{2}\left(i\frac{\rho_{ij}}{2}+\j_{ij}\right)\ \z_{ji}\\
   \B^\beta_{ij}\ \z_{ij}=-\frac{1}{2}\left(i\frac{\rho_{ij}}{2}+\j_{ij}\right)\ \z_{ij}
 \end{align} 
 Moreover $\B^\beta_{ij}=-\B^\beta_{ji}$.
\end{lm}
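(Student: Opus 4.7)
The plan is to compute the holomorphic derivative $\delta_{\z_{ij}} S^\beta_{ij}$ directly from the explicit formula for $S^\beta_{ij}$ given earlier, then massage the result into the form of Lemma \ref{lm:sigma} to read off $\B^\beta_{ij}$. Recalling
$$
S^\beta_{ij}(\z_{ij},\z_{ji})=-\tfrac{\iu\rho_{ij}}{2}\bigl(\ln[\z_{ij},\z_{ji}]+\ln\overline{[\z_{ij},\z_{ji}]}\bigr)-\jj_{ij}\ln[\z_{ij},\z_{ji}]+\jj_{ij}\ln\overline{[\z_{ij},\z_{ji}]},
$$
only the $\ln[\z_{ij},\z_{ji}]$ terms contribute to the holomorphic variation, and with $\delta \z_{ij}=\delta g_{ij}^T\z_{ij}$ we get
$$
\delta_{\z_{ij}} S^\beta_{ij}=-\left(\tfrac{\iu\rho_{ij}}{2}+\jj_{ij}\right)\frac{[\delta g_{ij}^T\z_{ij},\z_{ji}]}{[\z_{ij},\z_{ji}]}.
$$

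Next, I introduce $\u=\z_{ij}$ and $\v=\z_{ji}/[\z_{ij},\z_{ji}]$, which satisfy $[\u,\v]=1$ by antisymmetry of $[\cdot,\cdot]$. The above expression becomes $-(\tfrac{\iu\rho_{ij}}{2}+\jj_{ij})[\delta g_{ij}^T\u,\v]$. Since $\delta g_{ij}$ is traceless, the identity $[\v,\delta g^T\u]=[\u,\delta g^T\v]$ recalled at the start of Section 4.1 gives $-[\delta g_{ij}^T\u,\v]=[\u,\delta g_{ij}^T\v]$, and \eqref{eq:re} then rewrites this as a trace:
$$
\delta_{\z_{ij}} S^\beta_{ij}=\tfrac{1}{2}\left(\tfrac{\iu\rho_{ij}}{2}+\jj_{ij}\right)\tr(\v\u^T+\u\v^T)\omega\,\delta g_{ij}^T.
$$
Comparing with the definition of $\B^\beta_{ij}$ yields the claimed formula. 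The eigenvalue equations $\B^\beta_{ij}\z_{ji}=\tfrac{1}{2}(\tfrac{\iu\rho_{ij}}{2}+\jj_{ij})\z_{ji}$ and $\B^\beta_{ij}\z_{ij}=-\tfrac{1}{2}(\tfrac{\iu\rho_{ij}}{2}+\jj_{ij})\z_{ij}$ then follow immediately from Lemma \ref{lm:sigma} applied to the pair $(\u,\v)$, and together with Lemma \ref{lm:comp} these eigenvectors determine $\z_{ij},\z_{ji}$ up to complex rescalings.

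For the antisymmetry $\B^\beta_{ij}=-\B^\beta_{ji}$, I run the same computation with $ij\leftrightarrow ji$, obtaining $\tilde\u=\z_{ji}=[\z_{ij},\z_{ji}]\,\v$ and $\tilde\v=\z_{ij}/[\z_{ji},\z_{ij}]=-\u/[\z_{ij},\z_{ji}]$ (the sign arising from $[\z_{ji},\z_{ij}]=-[\z_{ij},\z_{ji}]$). The scaling factors cancel in $\tilde\v\tilde\u^T+\tilde\u\tilde\v^T=-(\v\u^T+\u\v^T)$, giving $\B^\beta_{ji}=-\B^\beta_{ij}$. The only delicate point is the sign bookkeeping when swapping the roles of the two spinors, but the antisymmetry of $[\cdot,\cdot]$ makes this essentially automatic; no genuine obstacle is expected.
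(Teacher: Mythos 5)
Your proof is correct and follows essentially the same route as the paper: compute the holomorphic variation of $S^\beta_{ij}$, use the traceless-matrix identity and \eqref{eq:re} to rewrite it as a trace against $\delta g_{ij}^T$, and read off the eigenvalue characterization from Lemma \ref{lm:sigma}. Your explicit check of $\B^\beta_{ij}=-\B^\beta_{ji}$ (which the paper leaves implicit) is also sound; the only cosmetic slip is citing Lemma \ref{lm:comp} for the uniqueness of the eigenvectors up to scaling, where Lemma \ref{lm:sigma} (or plain linear algebra for distinct eigenvalues) is what is actually needed.
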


\begin{proof}
We have
\begin{equation}
 \delta_{\z_{ij}} S^\beta_{ij}=-\left(i\frac{\rho_{ij}}{2}+\j_{ij}\right)\frac{[\delta g_{ij}^T\z_{ij},\z_{ji}]}{[\z_{ij},\z_{ji}]}=
 \left(i\frac{\rho_{ij}}{2}+\j_{ij}\right)\frac{[\z_{ij},\delta g_{ij}^T\z_{ji}]}{[\z_{ij},\z_{ji}]}
\end{equation}
and it can be written as
\begin{equation}
 \left(i\frac{\rho_{ij}}{2}+\j_{ij}\right)\tr \z_{ji}\frac{1}{[\z_{ij},\z_{ji}]}\z_{ij}^T\omega \delta g_{ij}^T
\end{equation}
Together with \eqref{eq:re} it can be transformed into form from the lemma. Characterization of $\z_{ij}$ and $\z_{ji}$ follows from lemma \ref{lm:sigma}.
\end{proof}

\subsection{Critical point conditions and boundary data}

The boundary data (spinors $\n_{ij}$, normal vectors  $N_i^{can}$ and types of embedded representations) can be summarized by
\begin{itemize}
 \item $N_i^{can}\in\{e_0,e_3\}$ normal vectors,
 \item $t_{i}^{can}=N_i^{can}\cdot N_i^{can}\in\{-1,1\}$,
 \item $\n_{ij}$ spinors,
 \item $s_{ij}=\langle \n_{ij},\n_{ij}\rangle_{N_i^{can}}\in\{-1,1\}$,
\end{itemize}
A critical point for a given boundary data consists of a collection (real manifold)
\begin{equation}
 g_i, \z_{ij}
\end{equation}
satisfying \eqref{end-middle-hol}
\begin{align}
 \forall_{ij} \delta_{\z_{ij}}S_{{ij}}+\delta_{\z_{ij}} S^\beta_{ij}=0\label{end-middle-hol2}\\
 \forall_i \sum_j \delta_{g_i} S_{{ij}}=0
\end{align}
and reality conditions
\begin{equation}
 g_i^T\z_{ij}=\xi_{ij}\n_{ij},\quad \xi_{ij}\in\C
\end{equation}
Condition \eqref{end-middle-hol2} is equivalent to ($i\not= j$)
\begin{equation}
 \delta_{g_i} S_{{ij}}=-\delta_{\z_{ij}} S^\beta_{ij},\quad \delta \z_{ij}=\delta g_{ij}^T\z_{ij},\ \delta g_{i}=g_i^{-1}\delta g_{ij} g_i
\end{equation}
There are gauge transformations acting on the critical points and they are parametrized by
$g\in \SL$ and $\lambda_{ij}\in \C^*$
\begin{equation}
 g_i'=gg_i,\quad \z_{ij}'=\lambda_{ij} (g^T)^{-1} \z_{ij}
\end{equation}
Gauge fixing condition $g_5=1$ fixes $\SL$ gauge transformations.

\subsubsection{$\SL$ solutions}

We can now translate critical point conditions into language of traceless matrices

\begin{df}
$\SL$ solution for the boundary data consists of 
\begin{equation}
\{ g_i\in \SL\}
\end{equation}
determining $\B_{ij},\ \B_{ij}^\beta$ traceless matrices by conditions
\begin{enumerate}
 \item $\B_{ij}=\frac{1}{2}\left(i\frac{\rho_{ij}}{2}+\j_{ij}\right)(\u_{ij}\v_{ij}^T+\v_{ij}\u_{ij}^T)\omega$
where 
\begin{align}
 \u_{ij}=s_{ij}\omega\eta_{N_i^{can}} \bar{\n}_{ij},\quad \v_{ij}=\n_{ij},\quad s_{ij}=\langle \n_{ij},\n_{ij}\rangle_{N_i^{can}}\in\{-1,1\}
\end{align}
\item $\B^\beta_{ij}=-(g_i^T)^{-1} \B_{ij} g_i^T$
\item $\B^\beta_{ij}=-\B^\beta_{ji}$
\item $\sum_j \B_{ij}=0$
\end{enumerate}
The gauge transformations are parametrized by $g\in \SL$
\begin{equation}
 g_i=gg_i,\quad \B^\beta_{ij}=(g^T)^{-1}\B^\beta_{ij}g^T
\end{equation}
\end{df}

\begin{lm}
 $\SL$ solutions are in bijective correspondence with critical points up to $\CP^1$ gauge transformations. An $\SL$ solution is determined by group elements of the critical points and then
 \begin{equation}
  \tr \B_{ij}\delta g_i^T=\delta_{g_i} S_{{ij}},\quad \tr \B_{ij}^\beta\delta g_{ij}^T=\delta_{g_{ij}} S_{ij}^\beta,
 \end{equation}
The $\SL$ gauge transformations are acting the same way on both sides of the correspondence.
\end{lm}

\begin{proof}
Critical point determines $\SL$ geometric solution determining traceless matrices from $\delta_{\z_{ij}} S^\beta_{ij}$ and $\delta_{g_{i}} S_{{ij}}$. They satisfies all assumptions of $\SL$ solution.
 
 In the opposite direction we need to find $\z_{ij}$ such that traceless matrices $\B_{ij}$ and $\B^\beta_{ij}$ are such that
 \begin{equation}
  \tr \B_{ij}\delta g_i^T=\delta_{g_i} S_{{ij}},\quad \tr \B_{ij}^\beta\delta g_{ij}^T=\delta_{g_{ij}} S_{ij}^\beta,
 \end{equation}
 This can be determined using lemma \ref{lm:sigma} and \ref{lm:beta}. In fact $\z_{ij}$ and $\z_{ji}$ need to be eigenvectors of $\B^\beta_{ij}$ (or equivalently  $\B^\beta_{ji}$) 
  \begin{align}
   \B^\beta_{ij}\ \z_{ji}=\frac{1}{2}\left(i\frac{\rho_{ij}}{2}+\j_{ij}\right)\ \z_{ji}\\
   \B^\beta_{ij}\ \z_{ij}=-\frac{1}{2}\left(i\frac{\rho_{ij}}{2}+\j_{ij}\right)\ \z_{ij}
 \end{align}
 Matrix  $\B^\beta_{ij}$ has exactly such eigenvalues because $\B^\beta_{ij}=-(g_i^T)^{-1} \B_{ij} g_i^T$ and $\B_{ij}$ has such eigenvalues. This determines $\z_{ij}$ up to $\CP^1$ gauge transformations.
 
 From equality $\B^\beta_{ij}=-(g_i^T)^{-1} \B_{ij} g_i^T$ we get
 \begin{align}
 \frac{1}{2}\left(i\frac{\rho_{ij}}{2}+\j_{ij}\right) \frac{1}{[\z_{ij},\z_{ji}]}(\z_{ij}\z_{ji}^T+\z_{ji}\z_{ij}^T)\omega=\nonumber\\
  -\frac{1}{2}\left(i\frac{\rho_{ij}}{2}+\j_{ij}\right) (g_i^T)^{-1}(\u_{ij}\v_{ij}^T+\v_{ij}\u_{ij}^T)\underbrace{\omega g_i^T}_{=g_i^{-1}\omega}
 \end{align}
thus 
\begin{equation}\label{eq:comp}
\left(\z_{ij}\frac{1}{[\z_{ij},\z_{ji}]}\z_{ji}^T+\frac{1}{[\z_{ij},\z_{ji}]}\z_{ji}\z_{ij}^T\right)\omega=
  -(\tilde{\u}_{ij}\tilde{\v}_{ij}^T+\tilde{\v}_{ij}\tilde{\u}_{ij}^T)\omega
\end{equation}
where
\begin{equation}
 \tilde{\v}_{ij}=(g_i^T)^{-1} \v_{ij},\quad \tilde{\u}_{ij}=(g_i^T)^{-1} \u_{ij},
\end{equation}
By lemma \ref{lm:comp}
\begin{equation}
 \z_{ij}=\xi_{ij}(g_i^T)^{-1} \v_{ij}=\xi_{ij}(g_i^T)^{-1} \n_{ij},\quad \xi_{ij}\in\C
\end{equation}
and this is just reality condition. We see now that with our choice of $\z_{ij}$
\begin{equation}
 \tr \B_{ij}^\beta\delta g_{ij}^T=\delta_{\z_{ij}} S_{ij}^\beta,\quad \delta \z_{ij}=\delta g_{ij}^T\ \z_{ij}
\end{equation}
and from reality condition (by lemmas \ref{lm:sij})
\begin{equation}
 \tr \B_{ij}\delta g_i^T=\delta_{g_i} S_{{ij}}
\end{equation}
The remaining conditions of critical point can be  written in terms of matrices $\B_{ij}$ and $\B^\beta_{ij}$ and are the same as conditions for $\SL$ solution.

The $\z_{ij}$ are determined up to $\CP^1$ gauge transformations.
\end{proof}

\subsection{Determination of the phase}

Let us compute the value of the edge part of the action in the critical point
We can choose a gauge for $\z_{ij}$ such that
\begin{equation}
 \forall_{ij} g_i^T \z_{ij}=\n_{ij}
\end{equation}
In such a case
\begin{equation}
 S_{{ij}}=0
\end{equation}
The only contribution to the action comes from the $\beta$ amplitude
\begin{equation}
 e^{S^\beta_{ij}}=[\z_{ij},\z_{ji}]^{-\left(i\frac{\rho_{ij}}{2}+\j_{ij}\right)}
 \widebar{[\z_{ij},\z_{ji}]}^{-\left(i\frac{\rho_{ij}}{2}-\j_{ij}\right)}
\end{equation}
From equality of traceless matrices from lemma \ref{lm:comp} and \eqref{eq:comp}  (taking into account gauge fixing)
\begin{equation}
 \z_{ij}=(g_i^T)^{-1} \v_{ij}\text{ and }
 \z_{ji}=[\z_{ij},\z_{ji}](g_i^T)^{-1} \u_{ij}
\end{equation}
but similarly
\begin{equation}
 \z_{ji}=(g_j^T)^{-1} \v_{ji}\text{ and }
 \z_{ij}=[\z_{ji},\z_{ij}](g_j^T)^{-1} \u_{ji}
\end{equation}
This gives equalities (where we introduced $\Lambda_{ij}=[\z_{ij},\z_{ji}]$)
\begin{align}
 (g_j^T)^{-1} \v_{ji}=\Lambda_{ij}(g_i^T)^{-1} \u_{ij}\ \Longrightarrow\  \v_{ji}=\Lambda_{ij}(g_j^T)(g_i^T)^{-1} \u_{ij}\\
 (g_i^T)^{-1} \v_{ij}=-\Lambda_{ij}(g_j^T)^{-1} \u_{ji}\ \Longrightarrow\  \u_{ji}=-\Lambda_{ij}^{-1}(g_j^T)(g_i^T)^{-1} \v_{ij}\\
 \Lambda_{ij}=[(g_i^T)^{-1} \v_{ij},(g_j^T)^{-1} \v_{ji}]=[(g_j^T)(g_i^T)^{-1}\v_{ij}, \v_{ji}]
\end{align}
The phase amplitude is then
\begin{equation}
 e^{-i(\rho_{ij}\ln |\Lambda_{ij}|+2\j_{ij}\Arg\Lambda_{ij})}
\end{equation}

\subsubsection{Difference of the phase amplitude between two critical points}

Changing phases of coherent states changes also the total phase. However difference of the phases of two different critical points is invariant under such transformation. Our goal is to determine this phase for two critical points
\begin{equation}
 (g_i,\z_{ij}) \text{ and } (\tilde{g}_i,\tilde{\z}_{ij})
\end{equation}
As $\u_{ij}$ and $\v_{ij}$ are determined by boundary data they are the same in both critical points thus
\begin{align}
 \v_{ji}=\Lambda_{ij}(g_j^T)(g_i^T)^{-1} \u_{ij},\quad &
 \u_{ji}=-\Lambda_{ij}^{-1}(g_j^T)(g_i^T)^{-1} \v_{ij}\\
 \v_{ji}=\tilde{\Lambda}_{ij}(\tilde{g}_j^T)(\tilde{g}_i^T)^{-1} \u_{ij},\quad &
 \u_{ji}=-\tilde{\Lambda}_{ij}^{-1}(\tilde{g}_j^T)(\tilde{g}_i^T)^{-1} \v_{ij}
\end{align}
So we have ($\delta_{ij}=\Lambda_{ij}\tilde{\Lambda}_{ij}^{-1}$)
\begin{align}
(\tilde{g}_j^T)(\tilde{g}_i^T)^{-1}\left((g_j^T)(g_i^T)^{-1}\right)^{-1} \v_{ji}&=\delta_{ij}  \v_{ji}\\
(\tilde{g}_j^T)(\tilde{g}_i^T)^{-1}\left((g_j^T)(g_i^T)^{-1}\right)^{-1} \u_{ji}&= \delta_{ij}^{-1}\u_{ji}
\end{align}
Using $\z_{ji}=(g_j^T)^{-1} \v_{ji}$ and $\z_{ij}\approx(g_j^T)^{-1} \u_{ji}$ we can write
\begin{align}
\Delta_{ij}\ \z_{ji}&=\delta_{ij}  \z_{ji}\\
\Delta_{ij}\ \z_{ij}&= \delta_{ij}^{-1}\z_{ij} 
\end{align}
where we introduce 
\begin{equation}\label{eq:Delta}
 \Delta_{ij}=(g_j^T)^{-1}(\tilde{g}_j^T)(\tilde{g}_i^T)^{-1}(g_i^T)
\end{equation}
We also have
\begin{equation}
 e^{\tilde{S}_{ij}^\beta-S_{ij}^\beta}=
 e^{i(\rho_{ij}\ln |\delta_{ij}|+2\j_{ij}\Arg\delta_{ij})}
\end{equation}
We have for $[\u,\v]=1$
\begin{align}
 &e^{\ln|\delta| (-i)i(\u\v^T+\v\u^T)\omega+\Arg\delta i(\u\v^T+\v\u^T)\omega}\u= \delta^{-1} \u\\
 &e^{\ln|\delta| (-i)i(\u\v^T+\v\u^T)\omega+\Arg\delta i(\u\v^T+\v\u^T)\omega}\v= \delta \v
\end{align}
Using fact that
\begin{equation}
 \frac{2\gamma}{\gamma-i}\B^\beta_{ij}=\frac{\rho_{ij}}{2}\ i\frac{1}{[\z_{ij},\z_{ji}]}(\z_{ij}\z_{ji}^T+\z_{ji}\z_{ij}^T)\omega
\end{equation}
and the properties of $\Delta_{ij}$ we see that
\begin{equation}\label{eq:diffrence-phases}
 \Delta_{ij}=e^{\Arg\delta_{ij}\frac{2}{\rho_{ij}}\tilde{\M}_{ij}-i\ln|\delta_{ij}|\frac{2}{\rho_{ij}}\tilde{\M}_{ij}}
\end{equation}
where we introduced $\tilde{\M}_{ij}=\frac{2\gamma}{\gamma-i}\B^\beta_{ij}$.

\begin{lm}\label{lm:difference-phases}
 The phase difference between critical points defined by two $\SL$ solutions $\{g_i\}$ and $\{\tilde{g}_i\}$ is equal
 \begin{equation}
\Delta S= \tilde{S}-S= i\sum_{i<j} \rho_{ij}r_{ij}+2\j_{ij}\phi_{ij}\text{ mod } 2\pi i
 \end{equation}
where $r_{ij}$ and $\phi_{ij}$ are real numbers determined by
\begin{equation}
 g_i \tilde{g}_i^{-1}\tilde{g}_j g_j^{-1}=e^{\phi_{ij}\M_{ij}-ir_{ij}\M_{ij}}
\end{equation}
where $\M_{ij}=\frac{2}{\rho_{ij}} \frac{2\gamma}{\gamma-i}(\B^\beta_{ij})^T$.
\end{lm}

\begin{proof}
 It is enough to transpose \eqref{eq:diffrence-phases} and \eqref{eq:Delta}.
\end{proof}

\section{The geometric solutions}\label{sec:stationary}

We will translate our $\SL$ description into $\SO(1,3)$ language. We will also extract necessary geometric boundary data from the boundary data given by spinors $\n_{ij}$.
This section is devoted to the relation between the two descriptions.

\subsection{Spin structures}

Let us notice that EPRL construction makes sense only if certain integrability condition holds
\begin{equation}
 \forall_i \sum_{j\not=i} \jj_{ij} \text{ is integer,} 
\end{equation}
because only then the intertwiners space (defined in the distributional sense) is nonempty. We assume this condition for every boundary tetrahedron. The vertex integral and the action (defined up to $2\pi i$) are then invariant under the following transformations
\begin{equation}\label{eq:spin-gauge}
 g_i\rightarrow s_ig_i,\quad s_i=\pm 1
\end{equation}
The transformations of this kind will be called spin structure transformations.

\subsection{Bivectors and traceless matrices}
\label{sec:bivectors}

Generators of $\SO_+(1,3)$ are  matrices in $M^4$ that are antisymmetric after lowering an index. They can be identified with bivectors. Generators of $\SL$ are traceless matrices. The isomorphism between these Lie algebras is given on simple bivectors by
\begin{equation}
\pi\colon so(1,3)\ni B=a\wedge b\longrightarrow \B=\frac{1}{4}(\eta_a\hat{\eta}_b-\eta_b\hat{\eta}_a)\in sl(2,\C).
\end{equation}
With the standard choice of the orientation, Hodge ${\Hodge}$ operation corresponds to multiplication by $i$ of the traceless matrix.
We also have for two bivectors $B_1$ and $B_2$
\begin{equation}
 {B_1}\cdot B_2=-2\Re \tr \B_1\B_2.
\end{equation}
The bivector is spacelike if $B\cdot B>0$ and mixed if $B\cdot B<0$ and ${\Hodge}B\cdot {\Hodge}B=-B\cdot B$. The image of traceless matrix with purely imaginary eigenvalues is spacelike. 

We assume that the $\SL$ representations satisfies (spacelike faces conditions)
\begin{equation}
 \rho_{ij}=2\gamma \j_{ij}.
\end{equation}
Then traceless matrices $\B_{ij}$ have the property that
\begin{equation}
 \frac{2\gamma}{\gamma-i}\B_{ij}=i\frac{1}{2}\rho_{ij}(\u_{ij}\v_{ij}^T+\v_{ij}\u_{ij}^T)\omega
\end{equation}

\subsubsection{Characterization of the bivectors} \label{sec:3d-character}

We can characterize $\pi^{-1}\left( -\frac{2\gamma}{\gamma-i}\B_{ij}^T\right)$ (see lemma \ref{lm:char} in the Appendix)

\begin{lm}
We have
\begin{equation}
 \pi(B_{ij}')=-\frac{2\gamma}{\gamma-i}\B_{ij}^T
\end{equation}
where $B_{ij}'= {\Hodge}({v}_{ij}\wedge N^{can}_i)={\Hodge}(l_{ij}\wedge N^{can}_i)$ and the null vector $l_{ij}$ is given by
\begin{equation}
 l_{ij}^\mu=s_{ij}\rho_{ij}\n_{ij}^T {\sigma}^\mu \bar{\n}_{ij}
\end{equation}
It is future directed if $s_{ij}=1$ and past directed if $s_{ij}=-1$.
The vector $v_{ij}$ is determined by 
\begin{equation}
 l_{ij}={v}_{ij}+cN_i^{can},\quad {v}_{ij}\perp N_i^{can},\quad c=\frac{l_{ij}\cdot N_i^{can}}{N_i^{can}\cdot N_i^{can}}
\end{equation}
\end{lm}

We can now characterize bivectors in terms of 3 dimensional geometry in the space perpendicular to $N_i^{can}$.

\begin{df}\label{df:vij-vec}
The geometric boundary data are sets of vectors 
\begin{equation}
 v_{ij}\perp N_i^{can}
\end{equation}
with norm $v_{ij}\cdot v_{ij}=-t_i^{can} \rho_{ij}^2$
that are obtained form spinors $\n_{ij}$ as a projection onto space orthogonal to $N_i^{can}$ of the null vectors $l_{ij}$ defined by
\begin{equation}
 l_{ij}^\mu=s_{ij}\rho_{ij}\n_{ij}^T {\sigma}^\mu \bar{\n}_{ij}
\end{equation}
\end{df}

Let us notice what follows from the previous subsection:
\begin{itemize}
 \item For $t_i^{can}=1$ ($s_{ij}=1$) ${v}_{ij}$ is spacelike,
 \item For $t_i^{can}=-1$  and $s_{ij}=1$, ${v}_{ij}$ is timelike future directed,
 \item For $t_i^{can}=-1$  and $s_{ij}=-1$, ${v}_{ij}$ is timelike past directed,
\end{itemize}

\subsection{Geometric solutions}

The inversion $I\in \SO(1,3)$ is defined by
\begin{equation}
 \forall_v\ Iv=-v
\end{equation}
It does not belong to $\SO_+(1,3)$.

Let us introduce notion of geometric solution

\begin{df}
The $\SO(1,3)$ geometric solution is a collection 
\begin{equation}
 \{G_i\in \SO(1,3)\}_{i=0,\ldots 4}
\end{equation}
such that bivectors
\begin{equation}
 B_{ij}={\Hodge} v_{ij}\wedge N_i^{can},\quad B_{ij}^{\G}=G_i({\Hodge} v_{ij}\wedge N_i^{can})\quad i\not=j
\end{equation}
with ${v}_{ij}$ defined by the boundary data (definition \ref{df:vij-vec})
satisfy
\begin{align}
 &\forall_{i\not=j}B_{ij}^{\G}=-B_{ji}^{\G},\\
 &\forall_i \sum_{j\not=i} B_{ij}=0\label{eq:closure}
\end{align}
Two geometric solutions $\{G_i\}$, $\{G_i'\}$ are gauge equivalent if there exists $G\in \SO(1,3)$ and $s_i\in\{0,1\}$ such that
\begin{equation}
 \forall_i\quad G_i'=GG_iI^{s_i}
\end{equation}
Gauge transformations 
\begin{equation}
 \forall_i\quad G_i'=G_iI^{s_i}
\end{equation}
are called inversion gauge transformation.
\end{df}

Let us notice that necessary condition \eqref{eq:closure} for a critical point is in fact a condition for the boundary data. It is called \emph{closure condition} \cite{Frank2} \footnote{This is condition for non-decaying for $\Lambda\rightarrow \infty$ of the invariant in the case of $N^{can}=e_0$. Arguably it is also condition in the case of $N^{can}=e_3$ for non-decaying of the invariant defined in the distributional sense. The issue deserve separate treatment and it will not be address in the present paper.}
\begin{equation}\label{df:closure}
\forall_i\ \sum_{j\not=i} {v}_{ij}=0
\end{equation}
We will always assume that boundary data satisfies the closure condition.

Let us notice that from \ref{sec:3d-character} we know that
\begin{equation}\label{eq:B-spacelike}
 B_{ij}={\Hodge}(v_{ij}\wedge N_i^{can})
\end{equation}
is a spacelike bivector.

\begin{lm}\label{lm:map}
 There exists bijection between $\SL$ solutions up to spin structure transformations \eqref{eq:spin-gauge} for the given boundary data satisfying the closure condition and $\SO(1,3)$ geometric solutions $\{G_i\}$ up to inversion gauge transformations for the corresponding geometric boundary data. The map from $\SL$ solutions is given by
 \begin{equation}
    G_i=\pi(g_i)
 \end{equation}
and then also
 \begin{align}
  B_{ij}&=-\pi\left(\frac{2\gamma}{\gamma-i}\B_{ij}^T\right)\\
  B_{ij}^\G&=\pi\left(\frac{2\gamma}{\gamma-i}(\B^\beta_{ij})^T\right)
 \end{align}
The $\SL$ gauge transformations correspond to $\SO_+(1,3)$ gauge transformations.
 \end{lm}

\begin{proof}
 From any $\SL$ solution we produce in this way $\SO(1,3)$ geometric solution. The map identifies only points that differ by spin structure transformations. Two gauge equivalent $\SL$ solutions are mapped into gauge equivalent geometric solutions. We need only to show that in every class up to gauge inversion transformations there exists an element that is the image of $\SL$ solution.
 
 Let us choose an $\SO(1,3)$ geometric solution. By inversion gauge transformation we can assume that all $G_i\in \SO_+(1,3)$. The preimages of such group elements constitute $\SL$ solution.
\end{proof}

\noindent\emph{ Comment:} Transposition is a price one need to pay for following notation from \cite{Frank2}.
\section{Geometric reconstruction}\label{sec:geometric}

In the previous section we determined that critical points modulo gauge transformations are in one to one correspondence with $\SO(1,3)$ geometric solutions. In this section we will classify the latter\footnote{The classification was done in \cite{Barrett2000, Barrett-Crane, Barrett2003}, see also \cite{FrankEPRL, FK}, but in a bit different set-up.}.  Geometric $\SO(1,3)$ solutions divide into non-degenerate and degenerate ones. 
With the first class we can associate non-degenerate Lorentzian simplices. However with the pair of degenerate geometric solutions we will (in section \ref{sec:other}) associate simplex in other than Lorentzian signature.
For this reason we want to provide classification in arbitrary signature.

\subsection{Notation}

We will denote all operations in this arbitrary metric by underline i.e. (Hodge star $\Hodgu$, scalar product $\cdou$ and contractions with use of the metric $\lrcorneu$ and $\llcorneu$).

We can introduce reflections with respect to the normalized (to $\pm 1$ vector $N$)
\begin{equation}
 (R_{N})^\mu_\nu={\mathbb I}^\mu_\nu-\frac{2N^\mu\ N_\nu}{N\cdou N}\in \OO=\OO(p,q)
\end{equation}
where we lowered index with use of the metric. Notice that $R_{N}^2={\mathbb I}$. 

We can also introduce inversion 
\begin{equation}
 Iv=-v
\end{equation}
Depending on the signature inversion belongs ($\OO(2,2)$ and $\OO(4)$) or does not ($\OO(1,3)$) to the connected component of identity. It is however always in special orthogonal subgroup in dimension $4$. The reflections do not belong to special orthogonal subgroup thus neither to connected component of identity. We denote special orthogonal subgroup by $\SO$ and its connected component of identity by $\SO_+$.

\subsection{Geometric solution}

We will first define geometric version of the boundary data

\begin{df}
The ( $\SO$ geometric) boundary data is a collection
\begin{equation}
 \{N_i^{can}\in\R^4,\quad v_{ij}\in \R^4\colon \forall_i\ \sum_{j\not=i} v_{ij}=0, v_{ij}\perpu N_i^{can}\}
\end{equation}
where $N_i^{can}$ is a canonical normal, $N_i^{can}\cdou N_i^{can}=\T_i^{can}\in\{-1,1\}$. We will say that the boundary data is non-degenerate if for every $i$, every $3$ out of $4$ vectors $v_{ij}$ are linearly independent.
\end{df}

Set of canonical normals is such that every normalized vector $N$, $N\cdou N=\pm 1$ can be rotated by an element of $\SO$ to exactly one of canonical normals\footnote{For $\SO(1,3)$ it can be chosen as $\{e_0,e_3\}$, for $\SO(4)$ as $\{e_0\}$. For $\SO(2,2)$ we can choose $\{e_3,e_1\}$ if we assume that they have different norms, but in the asymptotic analysis only $e_3$ will play a role.}.

\begin{df}
The geometric $\SO$ solution for the geometric boundary data is a collection 
\begin{equation}
 \{G_i\in \SO\}_{i=0,\ldots 4}
\end{equation}
such that bivectors
\begin{equation}
 B_{ij}={\Hodgu} v_{ij}\wedge N_i^{can},\quad B_{ij}^\G=G_i({\Hodgu} v_{ij}\wedge N_i^{can})\quad i\not=j
\end{equation}
satisfy
\begin{equation}
 B_{ij}^\G=-B_{ji}^\G
\end{equation}
Two solutions $\{G_i\}$, $\{G_i'\}$are gauge equivalent if there exists $G\in \SO$ and $s_i\in\{0,1\}$ such that
\begin{equation}
 \forall_i G_i'=GI^{s_i}G_i.
\end{equation}
\end{df}

We can associate with this data normals $N_i^\G=G_iN_i^{can}$.

\begin{df}
 Geometric solution is non-degenerate if every four out of five $N_i^\G$ span the whole $\R^4$. It is degenerate if this condition is not satisfied.
\end{df}

\subsection{Geometric bivectors}\label{sec:bivectors2}

For convenience of the reader we provide in Appendix \ref{sec:bivectors22} a construction of bivectors related to faces of the $4$ simplex that we will denote by $B_{ij}^\geom$  and call {\it geometric bivectors}. The construction works also for degenerate cases and depends on the order of vertices. 

If the $4$ simplex is nondegenerate we can introduce outer directed normal $N_i^\geom$ (see \ref{sec:bivectors2})and certain numbers $W_i^\geom$ related to volumes of tetrahedra (see Appendix \ref{sec:bivectors22}) such that
\begin{equation}
 \sum_i W_i^{\geom} N_i^{\geom}=0,\quad -t_i W_i^\geom>0.
\end{equation}
In fact,
\begin{itemize}
 \item $B_{ij}^{\geom}=-B_{ji}^{\geom}$,
 \item $B_{ij}^{\geom}=-\frac{1}{Vol^{\geom}} W_i^{\geom}W_j^{\geom} {\Hodgu}(N_j^{\geom}\wedge N_i^{\geom})$,
 \item $B_{ij}^{\geom}\lrcorneu N_i^{\geom}=0$.
\end{itemize}
For a spacelike face its area is equal to 
\begin{equation}
A_{ij}^\geom= \frac{1}{2}|B_{ij}^{\geom}|.
\end{equation}
and $Vol^\geom$ is volume of the simplex form (see Appendix \ref{sec:bivectors22})\footnote{It differs by $4!$ from the volume of the simplex.}.

\subsection{Reconstruction of normals from the bivectors}

Suppose that we have bunch of bivectors $B_{ij}^\G$ that come from some non-degenerate geometric solution. We can reconstruct from them $N_i$ up to a sign.  We assume that $v_{ij}$ for the fixed $j$ span the whole space perpendicular to $N_i^{can}$ (for example the boundary data is non-degenerate). 

\begin{lm}
Let us assume that we have a geometric solution $\{G_i\}$ for the non-degenerate geometric boundary data. Then the following are equivalent for the chosen $i$ and vector $N$
\begin{itemize}
 \item $\forall_{j\not=i} B_{ij}^\G\lrcorneu N=0, \quad N\cdou N=\pm 1$,
 \item $N=\pm N_i^\G$.
\end{itemize}
\end{lm}

\begin{proof}
 We know that
 \begin{equation}
  \forall_{j\not=i} B_{ij}^\G\lrcorneu N_i^\G=0
 \end{equation}
If there is independent vector $N$ satisfying the same equation, then
\begin{equation}
 \exists_{\lambda_{ij}},\quad B_{ij}^\G=\lambda_{ij} {\Hodgu}(N\wedge N_i^\G).
\end{equation}
This contradicts non-degeneracy of the boundary data because every $3$ out of $4$ $B_{ij}^\G=G_i B_{ij}$ should be independent. The vector $N$ needs to be proportional to $N_i^\G$ and from normalization of $N_i^\G$ it follows that $N=\pm N_i^\G$.
\end{proof}

\subsection{Reconstruction of bivectors from the knowledge of $\pm N_i$}

We will now reconstruct bivectors from normals $N_i$. In fact out theorem works in any dimension in arbitrary non-degenerate signature.

\begin{thm}\label{thm:deterB}
Let us assume that in $\R^n$ we have normalized to $\pm 1$ vectors $N_i$, $i=0\ldots n$, such that
(nondegeneracy) any $n$ out of $n+1$ vectors $N_i$ span the whole $\R^n$. Then there exist $n-2$ vectors
 \begin{equation}
 B_{ij}' \ i\not=j, i=0,\ldots, n
 \end{equation}
 such that
\begin{itemize}
\item for every $i$
 \begin{equation}\label{eq:simplicity_and_closure}
  B_{ij}'\lrcorneu N_i=0,\quad \sum_{j\not=i} B_{ij}'=0,
 \end{equation}
 \item for all $i\not=j$, 
 \begin{equation}\label{eq:orientation}
 B_{ij}'=-B_{ji}'.
 \end{equation}
\end{itemize}
A solution to equations \eqref{eq:simplicity_and_closure} and \eqref{eq:orientation} is given by:
\begin{equation}
 B_{ij}'=W_iW_j\ {\Hodgu}N_j\wedge N_i,
\end{equation}
where constants $W_i\in \R$ are nonzero solutions of
\begin{equation}
 \sum_i W_iN_i=0.
\end{equation}
For any other solution $B_{ij}''$ there exists a constant $\lambda\in \R$  such that
\begin{equation}
 B_{ij}''=\lambda B_{ij}'
\end{equation}
The solution is independent of changing some of $N_i$ by sign.
\end{thm}

\begin{proof}
Let us first prove uniqueness of such solution up to scaling. Let us assume such $B_{ij}'$ are given.
There is exactly one solution (up to a scaling by real constant) of the equation
\begin{equation}
 \sum_i W_iN_i=0,\quad W_i\in\R
\end{equation}
as there are $n+1$ vectors in $n$ dimensional space and every $n$ out of $n+1$ are independent. The constant $W_i$ are all nonzero (in the nontrivial solution) and in fact they will turn out to be proportional to signed volumes of the tetrahedra of the $4$ simplex with normal $N_i$ \footnote{This is called Minkowski theorem \cite{Minkowski} see also \cite{Minkowski2, Bianchi2010}, but in arbitrary signature and for simplices. We do not know any reference for general Minkowski theorem in other than euclidean signature.}.

Let us notice that $B_{ij}'$ are simple $n-2$ bivectors because they are annihilated by two independent normals
\begin{equation}
 B_{ij}'\lrcorneu N_i=0\quad B_{ij}'\lrcorneu N_j=-B_{ji}'\lrcorneu N_j=0
\end{equation}
so there need to exists constant $\lambda_{ij}$ such that
\begin{equation}
 B_{ij}'=\lambda_{ij} {\Hodgu}N_j\wedge N_i,\quad \lambda_{ij}\not=0
\end{equation}
We have for every $i$
\begin{equation}
 0=\sum_{j\not=i} B_{ij}'={\Hodgu}(\sum_{j\not=i}  \lambda_{ij} N_j)\wedge N_i
\end{equation}
so we have
\begin{equation}
\sum_{j\not=i} \lambda_{ij} N_j=-\lambda_i N_i,
\end{equation}
for some $\lambda_i\in\R$. This equation has a unique up to a constant solution so
\begin{equation}
 \lambda_{ij}=\frac{W_j}{W_i}\lambda_i
\end{equation}
From the symmetry 
\begin{equation}
 \lambda_{ij} {\Hodgu}N_j\wedge N_i=B_{ij}'=-B_{ji}'=-\lambda_{ji} {\Hodgu}N_i\wedge N_j=\lambda_{ji}{\Hodgu}N_j\wedge N_i
\end{equation}
so we have $\lambda_{ij}=\lambda_{ji}$ and
\begin{equation}
 \frac{W_j}{W_i}\lambda_i=\frac{W_i}{W_j}\lambda_j\Rightarrow \lambda_i=\lambda W_i^2
\end{equation}
for some constant $\lambda$ and finally
\begin{equation}
 B_{ij}'=\lambda W_iW_j {\Hodgu} N_j\wedge N_i
\end{equation}
This shows uniqueness up to a scaling. To show existence it is enough to check that such constructed forms satisfy requirements (that is just reversing all arguments).

As changing $N_i$ by sign also change $W_i$ by the same sign, the $B_{ij}'$ are independent of the choice of sign of normal vectors.
\end{proof}

Bivectors $B_{ij}^\G$ satisfies requirements for normals $N_i^{\G}$.

\subsection{Nondegenerate bivectors and $4$-simplex}

We can now prove that non-degenerate geometric solution determines $4$-simplex uniquely up to shifts and inversion.

\begin{thm}\label{thm:reconstr}
Given a non-degenerate geometric $\SO$ solution $\{G_i\}$ there exist exactly two  $4$-simplices (defined up to shift $\R^4$ transformations) such that
 \begin{equation}
  B_{ij}^{\geom}=r B_{ij}^\G
 \end{equation}
 where $r=\pm 1$. They are related by inversion transformation $I$ and for both the sign $r$ is the same.
\end{thm}

\begin{proof}
 We will first prove that there exists such $4$ simplex. 
 We take any $5$ planes orthogonal to $N_i^{\G}$. They cut out a $4$ simplex $\Delta'$. This simplex is unique up to shifts and scaling by a real number (changing the size and applying the inverse).
 
 The bivectors of any of these $4$-simplices $B_{ij}^{\geom'}$ satisfy from reconstruction from normals (see Theorem \ref{thm:deterB}):
 \begin{equation}
  B_{ij}^\G=c B_{ij}^{\geom'},\quad c\in\R^*
 \end{equation}
 Under scaling transformation (by real number $\lambda$, so also under inverse) the bivector changes by $\lambda^2$. There exist exactly two scalings $\pm \sqrt{|c|}$ that brings the bivectors to
 \begin{equation}
  B_{ij}^\G=r B_{ij}^{\geom}\quad r=\pm 1
 \end{equation}
The sign cannot be changed but it depends on the choice of orientation.

Uniqueness: From bivectors we can reconstruct $\pm N_i^{\G}$ (sign ambiguity). For any choice of the signs we reobtain the same $4$ simplices. 
\end{proof}

The sign $r$ seems to be an additional data in the reconstruction. 

\begin{df}\label{df:r}
We call the constant $r=\pm 1$ from the reconstruction for geometric solution, geometric Pleba{\'n}ski orientation.
\end{df}

Constant $r$ relates chosen orientation of $\R^4$ with the orientation defined by order of tetrahedra.

We have
\begin{equation}\label{eq:BGgeom}
 B_{ij}^\G=-\frac{1}{Vol^{\geom}}rW_i^{\geom}W_j^{\geom} {\Hodgu}(N_j^{\geom}\wedge N_i^{\geom})
\end{equation}
where $Vol^{\geom}$ is $4!$ volume of the $4$ simplex (see \ref{sec:bivectors2}).

\subsection{Uniqueness of Gram matrix and reconstruction}\label{sec:Gram}

For the non-degenerate geometric solution $\{G_i\}$ the edge lengths of the tetrahedron $i$ in the $4$ simplex $\Delta$ can be reconstructed from bivectors $B_{ij}^{\geom}=rB_{ij}^\G$, $j\not=i$. Let us now consider single $i$-th tetrahedron. As $G_i$ is a rotation the shape of the tetrahedron with bivectors $B_{ij}^\G=G_i{\Hodgu}(v_{ij}\wedge N_i^{can})$
is the same as the shape of the tetrahedron with bivectors 
\begin{equation}
 B_{ij}=\Hodgu (v_{ij}\wedge N_i^{can})\ ,
\end{equation}
which are determined by the geometric boundary data. 

\begin{lm}
If the geometric boundary data is non-degenerate then for every $i$ there exists a unique up to inversion and translations tetrahedron with face bivectors
\begin{equation}\label{eq:tetrahedron-bivectors}
 B_{ij}=r_i\Hodgu (v_{ij}\wedge N_i^{can})\ .
\end{equation}
in the subspace $N_i^{can\perp}$ with $r_i=\pm 1$.
\end{lm}

\begin{proof}
Let us fix $i$. We cut a tetrahedron with planes perpendicular to $v_{ij}$ in $N_i^{can\perp}$ in generic position. Its bivectors $B_{ij}'$ are proportional to $\Hodgu (v_{ij}\wedge N_i^{can})$ thus
\begin{equation}
 B_{ij}'=\lambda_{ij}'\Hodgu (v_{ij}\wedge N_i^{can})
\end{equation}
From the closure condition we know
\begin{equation}
 \sum_{j\not=i} B_{ij}'=\sum_{j\not=i} \lambda_{ij}'\Hodgu (v_{ij}\wedge N_i^{can})=
 \Hodgu \left(\sum_{j\not=i} \lambda_{ij}'v_{ij}\right)\wedge N_i^{can}
\end{equation}
As $v_{ij}$ are perpendicular to $N_i^{can}$ from non-degeneracy
\begin{equation}
 \sum_{j\not=i} \lambda_{ij}'v_{ij}=0\Rightarrow \exists_{\lambda}\colon \lambda_{ij}'=\lambda
\end{equation}
By rescaling of the tetrahedron we can get $\lambda=\pm 1$.
\end{proof}

This determines edge lengths uniquely as functions of $v_{ij}$. Let us denote the signed square lengths of the edge 
\begin{equation}
 l^{i\ 2}_{jk}\text{ between faces } (ij) \text{ and } (ik) \text{ of the tetrahedron } i.
\end{equation}
This numbers are defined for $i,j,k$ pairwise different and are symmetric in $j,k$.

\begin{df} \label{df:lengths}
The geometric boundary data satisfies lengths matching condition if $l^{k\ 2}_{ij}$ is symmetric in all its indices. 
\end{df}

The lengths matching condition is necessary for existence of non-degenerate geometric solution.

If lengths matching condition is satisfied we define \emph{signed square lengths}
\begin{equation}
 l_{ml}^2=l^{k\ 2}_{ij},\quad \text{ for } m,l \text{ the remaining missing indices different from } i,j,k
\end{equation}
These lengths determines $4$ simplex unique up to orthogonal transformation and shifts (see \cite{Dittrich2008, Dittrich2010} for description of the matching conditions).

\begin{thm}\label{Gram-matrxi}
For the signed square lengths $l_{ij}^2$ from non-degenerate geometric boundary data satisfying lengths matching condition we introduce lengths Gram matrix
of the $4$ simplex
\begin{equation}\label{eq:Gram}
 G^l=\left(\begin{array}{ccccc}
  0 & 1&1&\cdots& 1\\
  1 & 0 &l_{01}^2 &\cdots & l_{04}^2\\
  1 & l_{10}^2 & 0 &\cdots &l_{14}^2\\
  \vdots &\vdots &\vdots &\ddots &\vdots\\
  1 & l_{40}^2 & l_{41}^2 &\cdots &0
 \end{array}\right)
\end{equation}
Let us denote the signature of $G^l$ by $(\underbrace{p+1}_{+},\underbrace{q+1}_{-},\underbrace{n}_{0})$ 
\begin{itemize}
 \item If $n=0$ then there exists a unique up to $\OO(p,q)\rtimes \R^4$ transformations non-degenerate $4$ simplex in the spacetime with signature $(p,q)$ with these lengths. There are two inequivalent $4$-simplices up to $\SO(p,q)\rtimes \R^4$ transformations
 \item If $n>0$ then there exists a unique up to $\OO(p,q)\rtimes \R^{p+q}$ transformations degenerate $4$ simplex in the signature $(p,q)$ with these lengths.
\end{itemize}
\end{thm}

\begin{proof}
It is more convenient to work with the matrix with elements
\begin{equation}
 M_{ij}=\frac{1}{2}(l_{i0}^2+l_{j0}^2-l_{ij}^2),\quad i,j\in\{1,\ldots 4\}
\end{equation}
that should correspond to the matrix of suitable scalar products.
By a change of basis we can transform $G^l$ to the block diagonal form where one block is $\left(\begin{array}{cc} 0  & 1\\1&0\end{array}\right)$ and the second block is $M$. The signature of $M$ is thus $(p,q,n)$. There exist matrix $R$ with $4$ columns and $p+q$ rows with full range such that
\begin{equation}
 M=R^T\eta R,\quad \eta=\text{diag}(\underbrace{1,\cdots,1}_p,\underbrace{-1,\cdots,-1}_q)\label{eq:N-M}
\end{equation}
because $M$ is symmetric with given signature. The $4$ simplex can be constructed as follows. Choose arbitrary $x_0$ and define $x_i$ for $i\not=0$ by
\begin{equation}
 R_i^\mu=x_i^\mu-x_0^\mu
\end{equation}
This $4$ simplex has a prescribed lengths. Let us now compare two different $4$ simplices with vertices $x'_i$ and $x_i$. Both $R'$ and $R$ need to satisfy \eqref{eq:N-M}. Since $R$ and $R'$ have the same kernel (kernel of $M$) and full range there exists $G\in \GL(p+q)$ such that
\begin{equation}
 R'=GR
\end{equation}
and it satisfies
\begin{equation}
 G^T\eta G=\eta\ \Longrightarrow G\in \OO(p,q)
\end{equation}
We have thus
\begin{equation}
 x_i'=Gx_i+v,\text{ where } v=Ox_0'-x_0
\end{equation}
as $v$ and $G\in \OO(p,q)$ are arbitrary we obtained uniqueness of the solution up to desired transformation.
\end{proof}

Suppose that $p'\geq p$ and $q'\geq q$ then we can affinely embed spacetime of signature $(p,q)$ into spacetime of signature $(p',q')$. Even if $n>0$ we can reconstruct degenerate $4$-simplex in $4$ dimensions up to $\OO(p',q')\rtimes \R^4$ transformations. However signature $(p',q')$ is not unique.

If we introduce normals to tetrahedra of two such $4$ simplices $\{N_i\}$ and $\{N_i'\}$ then there exists $G\in \OO$ and $s_i\in\{0,1\}$ such that
\begin{equation}
 N_i'=(-1)^{s_i}GN_i=GI^{s_i}N_i
\end{equation}
that are exactly gauge transformations of the $\SO$ geometric solution.

\subsection{Geometric rotations $G_i^{\geom}$}

Suppose that we have a non-degenerate boundary data and the Gram matrix \eqref{eq:Gram} is non-degenerate.
From the lengths Gram matrix we can reconstruct geometric non-degenerate $4$ simplex (up to $\OO\ltimes \R^4$ transformations). 
Let us choose one of the simplices and compute geometric bivectors $B_{ij}^{\geom}$ and normals $N_i^{\geom}$.

For normals $N_i^{\geom}$ we will introduce vectors $j\not=i$
\begin{equation}
 v_{ij}^{\geom}=-\frac{1}{Vol^{\geom}}\left(W_i^{\geom}W_j^{\geom} N_j^{\geom}-\frac{W_i^{\geom}W_j^{\geom} \ N_j^{\geom}\cdou N_i^{\geom}}{ N_i^{\geom}\cdou N_i^{\geom}}N_i^{\geom}\right),\quad v_{ij}^{\geom}\perpu N_i^{\geom}
\end{equation}
that are normals to the faces of $i$-th tetrahedron recovered from geometric bivectors. We have
\begin{equation}
 B^{\geom}_{ij}={\Hodgu}(v_{ij}^{\geom}\wedge N_i^{\geom})
\end{equation}

\begin{lm}
 If the lengths matching condition is satisfied then
\begin{equation}
 v_{ij}^{\geom}\cdou{} v_{ik}^{\geom}=v_{ij}\cdou{} v_{ik}
\end{equation} 
\end{lm}

\begin{proof}
This is equivalent to
\begin{equation}
 B^{\geom}_{ij}\cdou B^{\geom}_{ik}=B_{ij}\cdou B_{ik}
\end{equation}
Both bivectors are bivectors of either reconstructed $i$th tetrahedron in $4$ simplex or reconstructed boundary tetrahedron in the space perpendicular to $N_i^{can}$. Similarly as for $4$ simplex we can prove that the two tetrahedra differs by rotation $G\in \OO$. The scalar products are thus preserved\footnote{These scalar product can be computed by a version of Cayley-Menger determinant \cite{Berger}.}.
\end{proof}

We can introduce group elements $G_i^{\geom}$ for any $i$ by conditions
\begin{equation}
 G_i^{\geom}N_i^{can}=N_i^{\geom},\quad  \forall_{j\not=i}\ G_i^{\geom}v_{ij}=v_{ij}^{\geom}
\end{equation}
There are $5$ conditions but only $4$ are independent (closure conditions are the same for $v^{\geom}$ and $v$ vectors).

\begin{lm}
 Elements $G_i^{\geom}\in \OO$.
\end{lm}

\begin{proof}
 Both $N_i^{\geom}$ and $N_i^{can}$ are of the same type normalized, and perpendicular to the rest of the vectors. The shapes of tetrahedra are also the same so scalar product between vectors are preserved. This is however the definition of being orthogonal.
\end{proof}

\subsection{Relation of $G_i$ to $G_i^{\geom}$}

We would like to compare $G_i$ from the definition of geometric solution with $G_i^{\geom}$ obtained from $B_{ij}^{\geom}=(-1)^sB_{ij}^\G$ (where $r=(-1)^s$ for $s\in\{0,1\}$). We know that there exist $s_i\in\{0,1\}$ such that
\begin{equation}
 N_i^\G=(-1)^{s_i} N_i^{\geom},
\end{equation}
and 
\begin{align*}
 &{\Hodgu}(G_iv_{ij}\wedge N_i^\G)=B_{ij}^\G=(-1)^s B_{ij}^{\geom}=\\
 &={\Hodgu}((-1)^{s} v_{ij}^{\geom}\wedge N_i^{\geom})={\Hodgu}((-1)^{s+s_i} v_{ij}^{\geom}\wedge N_i^\G)
\end{align*}
so because both $v_{ij}^{\geom}$ and $G_iv_{ij}$ are orthogonal to $N_i^\G$ we have
\begin{equation}
 G_iv_{ij}=(-1)^{s+s_i}v_{ij}^{\geom},\quad G_iN_i^{can}=(-1)^{s_i}N_i^{\geom}
\end{equation}
thus
\begin{equation}\label{eq:G-Ggeom}
 G_i=G_i^{\geom} I^{s_i} (IR_{N^{can}_i})^{s}
\end{equation}

\begin{lm}\label{lm:SO-geom}
Geometric $\SO$ solutions are in 1-1 correspondence with reconstructed from lengths $4$-simplices and choices 
$s\in\{0,1\}$ and $s_i\in \{0,1\}$ such that
\begin{equation}\label{eq:det}
 \forall_i\ \det G_i^{\geom}=(-1)^{s}.
\end{equation}
The relation is given by
\begin{equation}
 G_i=G_i^{\geom} I^{s_i} (IR_{N^{can}_i})^{s}.
\end{equation}
\end{lm}

\begin{proof}
 The condition for $\{G_i\}$ to be a geometric solution is $\det G_i=1$. This is equivalent by \eqref{eq:G-Ggeom} to
 \begin{equation}
  1=\det G_i^{\geom} I^{s_i} (IR_{N^{can}_i})^{s}=(-1)^{s}\det G_i^{\geom} 
 \end{equation}
\end{proof}

Let us notice that as there is only one reconstructed $4$ simplex up to rotation from $\OO$ thus two geometric rotations are always related by 
\begin{equation}
 G_i^{\geom'}=GG_i^{\geom},\quad G\in \OO
\end{equation}
thus 
\begin{equation}
 \forall_i \frac{\det G_i^{\geom'}}{\det G_i^{\geom}}=\det G
\end{equation}
and we can introduce a definition

\begin{df}\label{df:orient}
Suppose that the geometric boundary data satisfies the lengths matching condition. We say that it satisfies orientations matching condition if for any (and thus for all) reconstructed $4$ simplices
\begin{equation}
 \exists_{r\in\{-1,1\}}\ \forall_i \det G_i^{\geom}=r
\end{equation}
\end{df}

Let us notice that after we choose reconstructed $4$ simplex ($\OO$ transformation), the choice of $s_i$ is arbitrary and corresponds to involution gauge transformations. The value of $s$ is fixed by
\begin{equation}
 r=(-1)^s
\end{equation}
and it is Pleba\'nski orientation.

\begin{thm}\label{thm:nondeg-SO}
The non-degenerate $\SO$ geometric solution exists if and only if the geometric boundary data satisfies the lengths and orientations matching conditions. If we have one gauge equivalence class of geometric solutions $\{G_i\}$ then the representative of the second one is given as follows
\begin{equation}\label{eq:GRGR}
 \tilde{G}_i=R_{e_\alpha}G_iR_{N_i^{can}}
\end{equation}
where $e_\alpha$ is any normalized to $\pm 1$ vector. The second class corresponds to reflected $4$ simplex. Any non-degenerate solution is gauge equivalent to one of these two.
\end{thm}

\begin{proof}
The identification comes from lemma \ref{lm:SO-geom} and the description of the gauge transformations. There are two reconstructed $4$-simplices up to $\SO$ rotations. The choice of $s_i$ corresponds to involution gauge transformations.

Given such simplex from one class the representative of the second class can be obtained by applying a reflection 
\begin{equation}
 B_{ij}^{\tG}=R_{e_\alpha} (B_{ij}^{\G}),\quad s'=s+1.
\end{equation}
Thus for geometric $\SO$ solutions
\begin{equation}
 \tilde{G}_i=R_{e_\alpha}\ G_i\ (IR_{N^{can}_i}),
\end{equation}
where $e_\alpha$ is normalized. The equation \eqref{eq:GRGR} follows by applying an inversion gauge transformation.
\end{proof}

\subsection{Classification of geometric solutions}

We will consider only the case of non-degenerate boundary data that is that for every $i$, every $3$ out of $4$ vectors $v_{ij}$ are independent.
General classification was done in \cite{Barrett2003}.

\begin{lm}\label{lm-ijk}
For a geometric solution $\{G_i\}$ for a non-degenerate geometric boundary data for any different $i,j,k$ one of the following holds
\begin{itemize}
 \item[a)] $N_i^\G=\pm N_k^\G$ and $N_j^\G=\pm N_k^\G$
 \item[b)] $N_i^\G\not=\pm N_j^\G$
\end{itemize}
\end{lm}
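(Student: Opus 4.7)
The plan is to argue by contrapositive: assuming case (b) fails, i.e.\ $N_i^\G=\pm N_j^\G$ for some distinct $i,j$, I will show that necessarily $N_k^\G=\pm N_i^\G$ for every $k$ different from $i,j$. The proof exploits the fact that $B^\G_{ij}$ is annihilated by both $N_i^\G$ and $N_j^\G$, combined with the non-degeneracy of the boundary data.

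The first observation is that from $B^\G_{ij}=-B^\G_{ji}$ and $B^\G_{ij}\lrcorneu N_i^\G=0$ (since $B_{ij}=\Hodgu(v_{ij}\wedge N_i^{can})$ lies in $N_i^{can\perp}$, and $G_i$ maps $N_i^{can\perp}$ to $N_i^{\G\perp}$), one also gets $B^\G_{ij}\lrcorneu N_j^\G=0$. Thus $B^\G_{ij}$ lies in the space of bivectors annihilated by both normals. Whenever the two normals are linearly independent, this space is $1$-dimensional, spanned by $\Hodgu(N_i^\G\wedge N_j^\G)$.

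Now suppose $N_i^\G=\pm N_j^\G$ and, for contradiction, that some $k\ne i,j$ satisfies $N_k^\G\ne\pm N_i^\G$. Then $N_k^\G$ and $N_i^\G$ are linearly independent. Applying the first observation to the pair $(k,i)$ shows that $B^\G_{ki}$ is a multiple of $\Hodgu(N_k^\G\wedge N_i^\G)$. Applying it to the pair $(k,j)$ and using $N_j^\G=\pm N_i^\G$ shows that $B^\G_{kj}$ is also a multiple of $\Hodgu(N_k^\G\wedge N_i^\G)$. Hence $B^\G_{ki}$ and $B^\G_{kj}$ are linearly dependent.

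The final step is to contradict this with non-degeneracy. The linear map $v\mapsto \Hodgu(v\wedge N_k^{can})$ from $N_k^{can\perp}$ to the bivectors in $N_k^{can\perp}$ is a bijection between two $3$-dimensional spaces, so the hypothesis that any three of the $v_{kl}$ ($l\ne k$) are linearly independent transfers to the bivectors $B_{kl}$, and is preserved by the rotation $G_k$ to yield the same property for $B^\G_{kl}$. In particular any two of them are linearly independent, contradicting the dependence of $B^\G_{ki}$ and $B^\G_{kj}$. No step looks delicate; the only point to verify carefully is that the map $v\mapsto \Hodgu(v\wedge N)$ on $N^\perp$ is injective (which follows because its kernel would consist of vectors in $N^\perp$ parallel to $N$, hence zero since $N$ is non-null), so non-degeneracy of the $v_{kl}$ indeed translates to non-degeneracy of the $B^\G_{kl}$.
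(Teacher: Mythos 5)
Your proof is correct and follows essentially the same route as the paper's (much terser) argument: assuming $N_i^\G=\pm N_j^\G$ with $N_k^\G$ not parallel to them, both $B^\G_{ki}$ and $B^\G_{kj}$ are forced to be proportional to $\Hodgu(N_k^\G\wedge N_i^\G)$, giving $G_kv_{ki}\approx G_kv_{kj}$, which contradicts non-degeneracy of the boundary data. Your write-up just makes explicit the steps (annihilation by both normals, the one-dimensionality of the annihilated bivector space, and the injectivity of $v\mapsto\Hodgu(v\wedge N_k^{can})$ on $N_k^{can\perp}$) that the paper leaves implicit.
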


\begin{proof}
The conditions are exclusive. Suppose neither holds, then $N_i^\G=\pm N_j^\G$ and their are not parallel to $N_k^\G$. In this case
\begin{equation}
 G_kv_{ki}\approx G_kv_{kj},
\end{equation}
but this contradicts non-degeneracy of the boundary data. Thus exactly one of the two conditions must be satisfied.
\end{proof}

\begin{lm}\label{lm:degen-parallel}
 There are two exclusive possibilities for solution $\{G_i\}$ for a non-degenerate geometric boundary data:
 \begin{itemize}
  \item[a)] All $N_i^\G$ are parallel.
  \item[b)] Geometric solution is non-degenerate.
 \end{itemize}
\end{lm}

\begin{proof}
From lemma \ref{lm-ijk} we can conclude that either all $N_i^\G$ are parallel or they are pairwise independent
\begin{equation}
 \forall_{i\not=j} N_i^\G\not=\pm N_j^\G
\end{equation}
Let us consider the second case. We will prove that there exists only one (up to a scaling) solution $W_i$ of
\begin{equation}
 \sum_i W_i N_i^\G=0,
\end{equation}
and for nontrivial solution all $W_i\not=0$. As there are $5$ vectors in $4$ dimensional space at least one solution exists.
We know that
\begin{equation}
 G_iv_{ij}\wedge N_i^\G=\lambda_{ij}  N_j^\G\wedge N_i^\G, \quad \lambda_{ij}\not=0
\end{equation}
as $N_i^\G$ is independent of $N_j^\G$.

For any $i$
\begin{equation}
 0= \sum_j W_j N_j^\G\wedge N_i^\G=\left(\sum_{j\not=i} \frac{W_j}{\lambda_{ij}}G_iv_{ij}\right)\wedge N_i^\G
\end{equation}
as $G_iv_{ij}$ are perpendicular to $N_i^\G$
\begin{equation}
 0=\sum_j \frac{W_j}{\lambda_{ij}}G_iv_{ij}\ \Longrightarrow\ \frac{W_j}{\lambda_{ij}}=\frac{W_k}{\lambda_{ik}}
\end{equation}
from non-degeneracy of $v_{ij}$ as for any $i$ $\sum_{j\not= i} G_iv_{ij}=0$. The ratio of $W_j$ to $W_k$ is fixed and nonzero for $j,k\not=i$. However choice of $i$ is arbitrary so the solution is unique up to a constant and with all $W_i\not=0$.

This is equivalent to the solution being non-degenerate.
\end{proof}

\section{Other signature solutions}\label{sec:other}

Let us notice that from Lemma \ref{lm:degen-parallel} it follows that in the case when $N_i^{can}$ are of different types and the boundary data is non-degenerate then degenerate geometric solutions cannot occur. The case of all $N_i^{can}$ timelike was describe in \cite{Frank2, Frank3}. In this case, if lengths and orientations matching conditions are satisfied then either the Gram matrix is degenerate and geometric solution corresponds to degenerate $4$-simplex or critical points occur in pairs (there are exactly two geometric solutions) and one can associate with them an Euclidean $4$-simplex. The difference of the phases is proportional to Regge action again, but the proportionality constant is different.

Our goal is to provide uniform treatment of both cases $N^{can}=e_0$ and $N^{can}=e_3$ where 
\begin{equation}
 \forall_i\ N_i^{can}=N^{can}
\end{equation}

\subsection{Vector geometries}

Let us denote
\begin{equation}
 V=\{v\in M^4\colon v\perp N^{can}\}
\end{equation}
We can consider subgroup of $\SO(1,3)$ that preserves $N^{can}$
\begin{equation}
 \SO(V)=\{G\in \SO(1,3)\colon GN^{can}=N^{can}\}
\end{equation}
Let us recall a notion of vector geometry introduced in \cite{Frank2}.

\begin{df}\label{def:vector-geometry}
 For the geometric $\SO(1,3)$ non-degenerate boundary data $v_{ij}$ (satisfying closure condition) we call a vector geometry  a collection
 \begin{equation}
 \{ G_i\in \SO(V)\}
 \end{equation}
 such that
 \begin{equation}
  v_{ij}^\G=-v_{ji}^\G
 \end{equation}
 where $v_{ij}^\G=G_iv_{ij}$,
modulo gauge transformations
\begin{equation}
 G_i\rightarrow GG_i,\quad G\in \SO(V)
\end{equation}
\end{df}

We have

\begin{lm}\label{lm:vector-geom}
 There is 1-1 correspondence between $\SO(1,3)$ geometric degenerate solutions and vector geometries (up to gauge equivalence on both sides).
\end{lm}

\begin{proof}
As all $N_i^\G$ are parallel there exists $G\in \SO(1,3)$ such that $GN_i^\G=(-1)^{s_i}N^{can}$, where $s_i\in\{0,1\}$. We can apply rotation and inversion gauge transformations to get $G_iN^{can}=N^{can}$. We can thus assume that $N_i^{\G}=N^{can}$. The remaining gauge freedom is as in vector geometries.

In this situation $G_iN^{can}=N^{can}$ so $G_i\in \SO(V)$
 \begin{equation}
  B_{ij}^\G={\Hodge}G_i(v_{ij}\wedge N^{can})={\Hodge} (G_iv_{ij}\wedge N^{can})
 \end{equation}
and we can define $v_{ij}^\G=G_iv_{ij}\in V$. One can check that all conditions for geometric solutions are equivalent to conditions for vector geometries.
\end{proof}

\subsection{Other signature solutions}

In this section we will relate pairs of degenerate solutions with non-degenerate simplex but of different than Lorentzian signature. We describe it in unified language applicable to both $N^{can}=e_0$ and $N^{can}=e_3$.

Let us introduce auxiliary space ${M^4}'$ that differs from Minkowski space $M^4$ by flipping the norm of $N^{can}$
\begin{equation}
 g_{\mu\nu}'=g_{\mu\nu}-2\frac{N^{can}_\mu N^{can}_\nu}{N^{can}\cdot N^{can}}
\end{equation}
where we used $g_{\mu\nu}$ for lowering indices. 
We will use prime to distinguish operations related to this metric (Hodge star $\Hodgp$, scalar product $\cdop$, contraction with use of the metric $\lrcornep$).
We introduce
\begin{equation}
 t^{can'}=N^{can}\cdop N^{can}=-N^{can}\cdot N^{can}\in\{-1,1\}.
\end{equation}
Let us notice that restricted to $V$ both scalar products coincide, thus $V$ can be regarded as subspace of both $M^4$ as well ${M^4}'$. For vectors in $V$ we can use exchangeably both scalar products.
The Hodge ${\Hodgp}$ operation satisfies in ${M^4}'$
\begin{equation}
 {\Hodgp}^2=1
\end{equation}
and inversion $I\in \SO({M^4}')$.
Let us introduce
\begin{equation}
 \Phi^\pm\colon \Lambda^2 {M^4}'\rightarrow V,\quad \Phi^\pm(B)=t^{can'}(\pm B+ {\Hodgp}B)\lrcornep N^{can}
\end{equation}
where we regard $V$ as a subspace of ${M^4}'$.
Let us notice also that for $v\in V\subset {M^4}'$
\begin{equation}\label{eq:v-B}
  \Phi^\pm({\Hodgp}v\wedge N^{can})=v
\end{equation}
The map $\tilde{\Phi}=(\Phi^+,\Phi^-)\colon \Lambda^2 {M^4}'\rightarrow V\oplus V$ is an isomorphism
\begin{equation}\label{eq:phi-minus}
 \tilde{\Phi}^{-1}(v^+,v^-)=\frac{1}{2}[(v^+-v^-)\wedge N^{can}+{\Hodgp}(v^++v^-)\wedge N^{can}]
\end{equation}
We can use this isomorphism to transform the action of $\SO({M^4}')$ from bivectors into $V\oplus V$. Let us notice that as $\SO({M^4}')$ preserves the decomposition into self-dual and anti-self-dual forms, the action is diagonal and we can define 
\begin{equation}
 \SO({M^4}')\ni G\rightarrow (\PHi^+(G),\PHi^-(G))
\end{equation}

\begin{lm}
$\PHi^\pm(G)\in O(V)$ 
\end{lm}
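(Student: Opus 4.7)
The plan is to realize $\PHi^\pm(G)$ as the action of $G$ on the self-dual (respectively anti-self-dual) part of $\Lambda^2 {M^4}'$ transported to $V$ via $\Phi^\pm$, and then use the fact that $G$ preserves the bivector pairing to conclude that $\PHi^\pm(G)$ preserves the inner product on $V$.

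First I would rewrite $\Phi^\pm$ in terms of the eigenspace decomposition of $\Hodgp$. Because $\Hodgp^2 = 1$ on $\Lambda^2 {M^4}'$, every bivector splits uniquely as $B = B^+ + B^-$ with $\Hodgp B^\pm = \pm B^\pm$. Substituting gives
\begin{equation*}
\Phi^+(B) = 2 t^{can'}\, B^+ \lrcornep N^{can}, \qquad \Phi^-(B) = -2 t^{can'}\, B^- \lrcornep N^{can},
\end{equation*}
so $\Phi^+$ (resp.\ $\Phi^-$) factors through the projection onto the self-dual (resp.\ anti-self-dual) subspace. Since $\tilde{\Phi}$ is an isomorphism and each summand is $3$-dimensional (matching $\dim V = 3$), the restrictions $\Phi^\pm|_{\Lambda^2_\pm}$ are linear isomorphisms onto $V$.

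Next I would observe that any $G \in SO({M^4}')$ commutes with $\Hodgp$, since it is special orthogonal in dimension four and therefore preserves the volume form. Hence $G$ preserves the decomposition $\Lambda^2 = \Lambda^2_+ \oplus \Lambda^2_-$, which is precisely the diagonality statement from which $\PHi^\pm(G)$ were defined. Explicitly, $\PHi^\pm(G)$ is the operator on $V$ obtained by conjugating $G|_{\Lambda^2_\pm}$ with the isomorphism $\Phi^\pm|_{\Lambda^2_\pm}$.

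For orthogonality I would use the bivector pairing $\cdop$ on $\Lambda^2 {M^4}'$, which $G$ preserves because it is an orthogonal transformation. The key algebraic step is to check that under $\Phi^+$ (and similarly $\Phi^-$) this pairing restricted to $\Lambda^2_\pm$ is proportional to the $V$-inner product: for any $v_1, v_2 \in V$ the bivectors $B_{v_i}^\pm := \tfrac{1}{2}(\Hodgp v_i \wedge N^{can} \pm v_i \wedge N^{can})$ lie in $\Lambda^2_\pm$ with $\Phi^\pm(B_{v_i}^\pm) = v_i$ by \eqref{eq:v-B} and \eqref{eq:phi-minus}, and a short direct computation with simple bivectors yields
\begin{equation*}
B_{v_1}^\pm \cdop B_{v_2}^\pm \;=\; c\; v_1 \cdop v_2
\end{equation*}
for a nonzero constant $c$ depending only on $N^{can}\cdop N^{can}$. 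Combining this with $G$-invariance of $\cdop$ and the intertwining property $\Phi^\pm(G B_{v_i}^\pm) = \PHi^\pm(G) v_i$ gives $\PHi^\pm(G) v_1 \cdop \PHi^\pm(G) v_2 = v_1 \cdop v_2$, i.e.\ $\PHi^\pm(G) \in O(V)$.

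The only step that is not purely formal is the proportionality identity $B_{v_1}^\pm \cdop B_{v_2}^\pm = c\, v_1 \cdop v_2$; once that is checked by a one-line computation expanding the simple wedges and using $v_i \perpu N^{can}$, the orthogonality of $\PHi^\pm(G)$ follows immediately.
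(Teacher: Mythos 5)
Your proof is correct and follows essentially the same route as the paper: both arguments pull the $G$-invariant bivector pairing back through $\tilde{\Phi}$ and show it is $\tfrac{1}{2}t^{can'}$ times the $V$-metric, your proportionality $B^\pm_{v_1}\cdop B^\pm_{v_2}=c\,v_1\cdop v_2$ being exactly the polarized, eigenspace-restricted form of the paper's computation \eqref{eq:VV}, after which block-diagonality of the action gives $\PHi^\pm(G)\in O(V)$.
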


\begin{proof}
The action on $\Lambda^2{M^4}'$ preserves the scalar product 
\begin{equation}
 B\cdop B={\Hodgp}(B\wedge {\Hodgp}B)
\end{equation}
It can be translated into $V\oplus V$ as follows. We need to compute
\begin{equation}
 \tilde{\Phi}^{-1}(v^+,v^-)\cdot \tilde{\Phi}^{-1}(v^+,v^-)
\end{equation}
but this is by \eqref{eq:phi-minus} after cancellations
\begin{align}
 &\frac{1}{2} {\Hodgp}[v^+\wedge N^{can}\wedge {\Hodgp}(v^+\wedge N^{can})+v^-\wedge N^{can}\wedge {\Hodgp}(v^-\wedge N^{can})]=\nonumber\\
 &\frac{1}{2}t^{can'}(v^+\cdot v^++v^-\cdot v^-)\label{eq:VV}
\end{align}
Thus
\begin{equation}
 \tilde{\Phi}^{-1}(v^+,v^-)\cdot \tilde{\Phi}^{-1}(v^+,v^-)=
 \frac{1}{2}t^{can'}(v^+\cdot v^++v^-\cdot v^-)
\end{equation}
So $\PHi^\pm(G)$ preserves the scalar product on $V$.
\end{proof}

Let us notice that the condition for simplicity of a bivector $B$
\begin{equation}
 0={\Hodgp}(B\wedge B)=\frac{1}{2}t^{can'}(v^+\cdot v^+-v^-\cdot v^-)\label{eq:simVV}
\end{equation}
is equivalent to $|v^+|^2=|v^-|^2$.

\begin{lm}\label{lm-exact}
 The following exact sequence holds
  \begin{equation}
  0\rightarrow \{1,I\}\rightarrow \SO({M^4}')\rightarrow^{\tilde{\PHi}} \SO(V)\times \SO(V)\rightarrow^{\tilde{\sgn}} Z_2
 \end{equation}
 where we defined
 \begin{align}
  \tilde{\PHi}(G)&=(\PHi^+(G),\PHi^-(G)),\\
  \tilde{\sgn}(G^+,G^-)&=\sgn(G^+)\sgn(G^-),
 \end{align}
and
 \begin{equation}
   \sgn(G^\pm)=\left\{\begin{array}{ll}
                1 & N^{can}=e_0\\
                1 & N^{can}=e_3\text{ and } G^\pm \text{ is time direction preserving},\\
                -1 & N^{can}=e_3\text{ and } G^\pm \text{ is time direction reversing}.
               \end{array}\right.
 \end{equation}
\end{lm}

\begin{proof}
Let us consider $N^{can}=e_3$.
 The only group elements preserving all bivectors are $1$ and $I$. We need to find now the image of $\SO({M^4}')$. The image of a connected component $\SO_+({M^4}')$ is $\SO_+(V)\times \SO_+(V)$ (dimensions of groups match). We need to determine the image of the group element
 \begin{equation}
  G=R_{e_1}R_{e_3}
 \end{equation}
as it is the generator of $\SO({M^4}')/\SO_+({M^4}')$. We have $G(e_3)=-e_3$ and $\PHi^\pm(G)$ are equal and
\begin{equation}
 \PHi^\pm(G)(e_0)=-e_0,\ \PHi^\pm(G)(e_1)=e_1,\ \PHi^\pm(G)(e_2)=-e_2,
\end{equation}
so $\PHi^\pm(G)\in \SO(V)$, $\sgn(\PHi^\pm(G))=-1$ and the image is in the kernel of $\tilde{\sgn}$ as $\sgn(\PHi^+(G))\sgn(\PHi^-(G))=1$.

In the case of $N^{can}=e_0$ 
\begin{equation}
 0\rightarrow \{1,I\}\rightarrow \SO({M^4}')\rightarrow^{\tilde{\PHi}} \SO(V)\times \SO(V)\rightarrow 0 
\end{equation}
by similar reasoning as before.
\end{proof}

\begin{lm}\label{lm-sign-vec}
 Let us suppose that we have two vector geometries $\{G_i\}$ and $\{G_i'\}$ then
 \begin{equation}
  \sgn G_i \sgn G_i'
 \end{equation}
is equal for every $i$.
\end{lm}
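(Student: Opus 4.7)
The plan splits into a trivial case and a substantive case. When $N^{can} = e_0$, the subspace $V$ is Euclidean so $SO(V) = SO(3)$ and $\sgn \equiv +1$; there is nothing to prove. I therefore focus on $N^{can} = e_3$, where $V$ has signature $(1,2)$ and $SO(V) = SO(1,2)$ has two connected components distinguished by $\sgn$. The gauge freedom of a vector geometry (definition \ref{def:vector-geometry}) is $G_i \mapsto KG_i$ for a common $K \in SO(V)$, acting independently on the two vector geometries $\{G_i\}$ and $\{G_i'\}$. Under such gauges the product $(\sgn G_i)(\sgn G_i')$ is multiplied by the $i$-independent factor $(\sgn K)(\sgn K')$, so constancy in $i$ is gauge-invariant and I may choose the gauge $G_0 = G_0' = \mathbb{I}$.

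The key step is to introduce $H_i := G_i' G_i^{-1} \in SO(V)$, which satisfies $\sgn H_i = (\sgn G_i)(\sgn G_i')$ since $\sgn$ is a homomorphism and $\sgn G^{-1} = \sgn G$. Applying the antisymmetry relation $v_{ij}^G = -v_{ji}^G$ to both vector geometries for the pair $(0,i)$, and using the gauge fix $G_0 = G_0' = \mathbb{I}$, gives $G_i^{-1} v_{0i} = -v_{i0} = (G_i')^{-1} v_{0i}$, whence $H_i v_{0i} = v_{0i}$. Thus each $H_i$ lies in the stabilizer of $v_{0i}$ inside $SO(V)$.

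Finally, the boundary data satisfy $v_{0i}\cdot v_{0i} = -t^{can}\rho_{0i}^2 = \rho_{0i}^2 > 0$, so $v_{0i}$ is timelike in $V$. The stabilizer of a timelike vector in $SO(1,2)$ is the compact group $SO(2)$ of rotations in the perpendicular spacelike plane; it sits inside the identity component of $SO(V)$ and therefore consists of time-direction preserving elements. Hence $\sgn H_i = +1$ for every $i$, and the product $(\sgn G_i)(\sgn G_i')$ is constant in $i$. The only potentially subtle point is verifying that the restriction of $G_i \in SO(1,3)$ to $V$ lies in $SO(V)$ rather than in $O(V)\setminus SO(V)$, which follows from $\det G_i = 1$ together with $G_i N^{can} = N^{can}$; no major obstacle arises.
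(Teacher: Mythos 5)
Your proof is correct, but it runs along a slightly different track than the paper's. The paper never gauge-fixes: it observes that within a \emph{single} vector geometry the product $\sgn G_i\,\sgn G_j$ is already forced by the boundary data, because $G_iv_{ij}=-G_jv_{ji}$ with $v_{ij},v_{ji}$ timelike (for $N^{can}=e_3$) means the images have opposite time orientation, so $\sgn G_i\,\sgn G_j$ equals $+1$ or $-1$ according to whether $v_{ij}$ and $v_{ji}$ have opposite or equal time direction; since this criterion is the same for the primed geometry, $\sgn G_i\sgn G_j=\sgn G_i'\sgn G_j'$ for all pairs, which is equivalent to constancy of $\sgn G_i\,\sgn G_i'$. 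You instead first note the (correct, and necessary) gauge-invariance of the constancy statement, fix $G_0=G_0'=\mathbb{I}$, and show that $H_i=G_i'G_i^{-1}$ stabilizes the timelike vector $v_{0i}$ (nonzero since $\jj_{0i}\neq 0$), hence lies in the orthochronous component, giving $\sgn G_i\,\sgn G_i'\equiv 1$ in that gauge. Both arguments rest on the same mechanism --- for $e_3$ the face vectors are timelike and $\sgn$ is the time-orientation character of $SO(V)\cong SO(1,2)$ --- but the paper's version needs no gauge choice and additionally identifies the common value of $\sgn G_i\,\sgn G_j$ in terms of boundary data, while yours gives the cleaner normal form ``product $\equiv 1$ in a suitable gauge'' via the stabilizer-of-a-timelike-vector argument. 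Two minor remarks: your closing worry about whether $G_i$ restricted to $V$ lies in $SO(V)$ is moot, since Definition \ref{def:vector-geometry} already takes $G_i\in SO(V)$; and all you really need from the stabilizer is orthochronicity (an element of $O(1,2)$ fixing a timelike vector cannot exchange the two light cones), the identification with the compact $SO(2)$ being a pleasant but inessential extra.
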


\begin{proof} 
It is a trivial statement for $N^{can}=e_0$. Let us consider $N^{can}=e_3$.

It is enough to show that 
 \begin{equation}
  s=\sgn G_i \sgn G_j\text{ is equal to } s'=\sgn G_i' \sgn G_j'
 \end{equation}
We know that 
\begin{equation}
 v_{ij}^\G=-v_{ji}^\G,\quad 
\end{equation}
so if one is future directed then the second is past directed. Group elements changing future into past directed vectors have $\sgn=-1$ thus
\begin{equation}
 s=\left\{\begin{array}{ll}
           1 & v_{ij} \text{ and } v_{ji} \text{ has opposite time direction}\\
           -1 & v_{ij} \text{ and } v_{ji} \text{ has the same time direction}
          \end{array}\right.
\end{equation}
for $s'$ it is the same conditions thus $s=s'$.
\end{proof}

\begin{lm}\label{lm:pm-G}
 The two statement are equivalent for $G\in \SO({M^4}')$
 \begin{itemize}
  \item $G N^{can}=(-1)^s N^{can}$, $s\in\{0,1\}$
  \item $\PHi^+(G)=\PHi^-(G)$
 \end{itemize}
 and then for $GI^s$ regarded as an element of $\SO(V)$
 \begin{equation}
  \PHi^\pm(G)=GI^s
 \end{equation}
\end{lm}

\begin{proof}
 In one direction: follows from the definition of $\Phi^\pm$ and \eqref{eq:v-B} that 
 \begin{equation}
  \PHi^\pm(G)(v)=\PHi^\pm(G)\Phi^\pm({\Hodgp}(v\wedge N^{can})=\Phi^\pm( {\Hodgp}(Gv\wedge \underbrace{GN^{can}}_{=(-1)^sN^{can}}))=
  (-1)^s Gv
 \end{equation}
so $\PHi^+(G)=\PHi^-(G)$.
 
If $\PHi^\pm(G)=\tilde{G}\in \SO(V)$ then
 \begin{equation}
   G=\tilde{G} I^{s}\in \SO({M^4}')
 \end{equation}
 where  $s\in\{0,1\}$.
\end{proof}

\subsection{Correspondence}

Let us notice that $\SO(1,3)$ geometric boundary data with all $N_i^{can}=N^{can}$ can be regarded as $\SO({M^4}')$ geometric boundary data. We will call it \emph{flipped geometric boundary data}.

\begin{thm}\label{thm:deg-geometric}
 There is a 1-1 correspondence between
 \begin{itemize}
  \item ordered pairs of two non-gauge equivalent vector geometries,
  \item geometric $\SO({M^4}')$ non-degenerate solutions (up to the inversion gauge transformations) for flipped geometric boundary data.
 \end{itemize}
 From the $\SO({M^4}')$ solution $\{G_i\}$ the two vector geometries $\{G_i^\pm\}$ are obtained as
 \begin{equation}
  G_i^\pm=\PHi^\pm(G_i)
 \end{equation}
 and $\Phi^\pm(B_{ij}^{\G})=v_{ij}^{{\{G^\pm\}}}$.
 \end{thm}

\begin{proof}
Let us consider geometric $\SO({M^4}')$ solution.
We have
\begin{equation}
 \Phi^\pm(B_{ij})=v_{ij},\quad \Phi^\pm(B_{ij}^\G)=G_i^\pm v_{ij}
\end{equation}
so the following conditions are equivalent
\begin{equation}
 B_{ij}^\G=-B_{ji}^\G\Longleftrightarrow v_{ij}^{\Gp}=-v_{ji}^{\Gp}\text{ and } v_{ij}^{\Gm}=-v_{ji}^{\Gm}
\end{equation}
Let us now suppose that we have two vector geometries $G_i^+$ and $G_i^-$. From lemma \ref{lm-sign-vec}, $s=\sgn G_i^+\sgn G_i^-$ is the same for all $i$. By gauge transforming $G_i^-$ by $G$ such that $\sgn G=s$ we can obtain situation when
\begin{equation}
 \forall_i \sgn G_i^+\sgn G_i^-=1
\end{equation}
so there exist (unique up to $I^{s_i}$) elements $G_i\in \SO({M^4}')$ that constitute an $\SO({M^4}')$ solution. 

Gauge equivalent $\SO({M^4}')$ geometric solutions are obtained from gauge equivalent pairs of  vector geometries. The only thing that is left is to prove that the $\SO({M^4}')$ geometric solution is non-degenerate exactly when two vector geometries are not gauge equivalent.

Let us assume that the geometric solution is degenerate. As boundary data is non-degenerate, the geometric solution is degenerate when all $N_i^\G$ are parallel. By gauge transformation (including inversion gauge transformation) we can assume that $N_i^\G=N^{can}$ then
\begin{equation}
 G_i^+=G_i^-
\end{equation}
so vector geometries were gauge equivalent.

Other way around, if two vector geometries are equivalent then by gauge transformation we can assume $G_i^+=G_i^-$ and $N_i^\G=G_iN^{can}=(-1)^{s_i}N^{can}$ by lemma \ref{lm:pm-G}.
\end{proof}

\begin{lm}\label{lm:deg-split}
Suppose that we have a non-degenerate geometric boundary data satisfying lengths and orientations matching conditions.
There is a 1-1  correspondence between gauge equivalent classes of geometric non-degenerate $\SO({M^4}')$ solutions and reconstructed $4$ simplices in ${M^4}'$ (up to shifts and $\SO({M^4}')$ rotations). Gauge equivalent classes of non-degenerate $\SO({M^4}')$ geometric solutions occur in pairs related to reflected $4$-simplices and their representatives are related by
\begin{equation}
 G_i'=R_{N^{can}} G_i R_{N^{can}}
\end{equation}
\end{lm}

Let us notice the following identity for $G\in \SO({M^4}')$
\begin{equation}
 \PHi^\pm(R_{N^{can}} G R_{N^{can}})=\PHi^\mp(G)\label{eq:RSO}
\end{equation}
Two non-equivalent geometric solutions thus satisfy
\begin{equation}
 \PHi^\pm(G_i')=\PHi^\mp(G_i)
\end{equation}

\subsubsection{Orientations}

We will now describe orientations matching condition in terms of self-dual and anti-self-dual forms. Let us introduce
\begin{equation}\label{two-vgeom}
 v_{ij}^{\geom\pm}=\Phi^\pm(B_{ij}^\geom)
\end{equation}
From simplicity of $B_{ij}^{\geom}$ and \eqref{eq:VV} and \eqref{eq:simVV}
\begin{equation}
 v_{ij}\cdot v_{ik}=v_{ij}^{\geom\pm}\cdot v_{ik}^{\geom\pm}
\end{equation}
We can introduce $G_i^{\geom\pm}\in O(V)$ by conditions
\begin{equation}
 \forall_{j\not=i}\ G_i^{\geom\pm} v_{ij}=v_{ij}^{\geom\pm}
\end{equation}

\begin{lm}\label{lm:orient-deg}
 $\det G_i^{\geom}=\det G_i^{\geom\pm}$
\end{lm}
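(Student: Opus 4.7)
The plan is to use the intertwining map $\Phi^\pm$ to compare the action of $G_i^{\geom}\in O({M^4}')$ on boundary bivectors with the action of $G_i^{\geom\pm}\in O(V)$ on boundary vectors. Since all $N_i^{can}=N^{can}$ and $B_{ij}=\Hodgp(v_{ij}\wedge N^{can})$, applying $G_i^{\geom}$ and invoking the transformation law $G(\Hodgp X)=(\det G)\,\Hodgp(GX)$ for $G\in O({M^4}')$ gives
\begin{equation}
G_i^{\geom}\,B_{ij}\;=\;(\det G_i^{\geom})\,\Hodgp(v_{ij}^{\geom}\wedge N_i^{\geom})\;=\;(\det G_i^{\geom})\,B_{ij}^{\geom}.
\end{equation}
Combined with $v_{ij}^{\geom\pm}=\Phi^\pm(B_{ij}^{\geom})$ this yields the key identity
\begin{equation}
v_{ij}^{\geom\pm}\;=\;(\det G_i^{\geom})\,\Phi^\pm(G_i^{\geom}\,B_{ij}).
\end{equation}

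I then split by the sign of $\det G_i^{\geom}$. If $\det G_i^{\geom}=1$, then $G_i^{\geom}\in SO({M^4}')$, and Lemma \ref{lm-exact} supplies $\PHi^\pm(G_i^{\geom})\in SO(V)$ together with the intertwining $\Phi^\pm(G_i^{\geom}B_{ij})=\PHi^\pm(G_i^{\geom})\,v_{ij}$. Hence $G_i^{\geom\pm}$ and $\PHi^\pm(G_i^{\geom})$ agree on the four vectors $\{v_{ij}\}_{j\ne i}$; by non-degeneracy of the boundary data these span $V$, so the two linear maps coincide on $V$ and $\det G_i^{\geom\pm}=1=\det G_i^{\geom}$.

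If $\det G_i^{\geom}=-1$, factor $G_i^{\geom}=\tilde G\,R_{N^{can}}$ with $\tilde G\in SO({M^4}')$. A direct check, using that $R_{N^{can}}$ is the identity on $V$, sends $N^{can}\to -N^{can}$, and has $\det=-1$, shows $R_{N^{can}}B_{ij}=B_{ij}$ (the two minus signs cancel). Therefore $G_i^{\geom}B_{ij}=\tilde G B_{ij}$, and the same reasoning as before produces $G_i^{\geom\pm}=-\PHi^\pm(\tilde G)$ as linear maps on $V$. Since $\dim V=3$ and $\PHi^\pm(\tilde G)\in SO(V)$, we conclude $\det G_i^{\geom\pm}=(-1)^3\cdot 1=-1=\det G_i^{\geom}$.

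The main technical point is tracking signs in the Hodge transformation law under orientation-reversing maps, in particular verifying $R_{N^{can}}B_{ij}=B_{ij}$: the negation coming from $N^{can}\mapsto -N^{can}$ inside $v_{ij}\wedge N^{can}$ is exactly cancelled by the factor $\det R_{N^{can}}=-1$ from pulling the reflection through $\Hodgp$. Once this is clean, the rest is linear algebra plus the use of non-degeneracy of the boundary data to promote the pointwise equality on $\{v_{ij}\}_{j\ne i}$ to an equality of linear maps on $V$.
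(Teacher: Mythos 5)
Your argument is correct, and it shares the paper's overall architecture: split off a reflection through $N^{can}$ so that the remaining factor lies in $SO({M^4}')$, use the equivariance of $\Phi^\pm$ together with non-degeneracy of the boundary data (the $v_{ij}$ span $V$) to identify $G_i^{\geom\pm}$ with $\pm\PHi^\pm$ of that factor, and then compare determinants using that $\PHi^\pm$ of an $SO({M^4}')$ element lies in $SO(V)$ (lemma \ref{lm-exact}). The bookkeeping differs slightly: the paper never invokes the $\det$-twisted transformation law of the Hodge star, but instead composes with $\PHi^\pm(\tilde G_i^{-1})$ and reads the determinant off the $N^{can}$-stabilizer $\tilde G_i^{-1}G_i^{\geom}$, whose determinant on $V$ equals its four-dimensional determinant; you instead establish the uniform identity $G_i^{\geom}B_{ij}=(\det G_i^{\geom})B_{ij}^{\geom}$, check $R_{N^{can}}B_{ij}=B_{ij}$, and extract the sign from $(-1)^{\dim V}$ with $\dim V=3$. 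Both mechanisms are sound; yours makes the sign tracking somewhat more explicit, at the cost of a case split, while the paper's single computation handles both orientations at once.
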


\begin{proof}
 We can write $G_i^{\geom}=\tilde{G}_i (IR_{N^{can}})^{s_i}$ where $(-1)^{s_i}=\det G_i^{\geom}$ and $\tilde{G}_i\in \SO({M^4}')$. Let us notice that
 \begin{equation}
  \tilde{G}_i^{-1}N_i^{\geom}=N^{can}
 \end{equation}
 We have
 \begin{align*}
  &\PHi^\pm(\tilde{G}_i^{-1})G_i^{\geom\pm}v_{ij}=\PHi^\pm(\tilde{G}^{-1})v_{ij}^{\geom\pm}=\\
  &=t^{can'}(\pm \tilde{G}_i^{-1}{\Hodgp}(v_{ij}^{\geom}\wedge N_i^{\geom})+\tilde{G}_i^{-1}(v_{ij}^{\geom}\wedge N_i^{\geom}))\lrcornep N^{can}=\\
  &=t^{can'}(\pm {\Hodgp}(\tilde{G}_i^{-1}v_{ij}^{\geom}\wedge N^{can})+\tilde{G}_i^{-1}v_{ij}^{\geom}\wedge N^{can})\lrcornep N^{can}=\\
  &=\tilde{G}_i^{-1}v_{ij}^{\geom}=\tilde{G}_i^{-1}G_i^{\geom}v_{ij}
 \end{align*}
and as also
\begin{equation}
 \tilde{G}_i^{-1}G_i^{\geom}N^{can}=N^{can}
\end{equation}
we see that
\begin{equation}
 \det G_i^{\geom\pm}=\det \PHi^\pm(\tilde{G}_i^{-1})G_i^{\geom\pm}=\det \tilde{G}^{-1}G_i^{\geom}=\det G_i^{\geom}
\end{equation}
because $\tilde{G}_i\in \SO({M^4}')$ and $\tilde{G}_i^\pm\in \SO(V)$.
\end{proof}

\section{Classification of solutions}\label{sec:classification}

In this section we will classify solutions and determine under which circumstances which geometric $\SO(1,3)$ solution can occur. We assume that the boundary data is non-degenerate, that is, that for every $i$, every $3$ out of the $4$ vectors $v_{ij}$ are independent. Using the correspondence between the geometric $\SO(1,3)$ solutions and critical points, this gives, in fact, a classification of the latter.

\subsection{Non-degenerate simplices}

Let us recall that non-degenerate $\SO(1,3)$ geometric solution can occur only if the boundary data is non-degenerate.

\begin{thm}\label{thm-1}
 Let us assume that the geometric boundary data is non-degenerate. Then the following are equivalent:
 \begin{itemize}
  \item lengths and orientations matching conditions are satisfied and the reconstructed $4$-simplex is non-degenerate Lorentzian,
  \item there exists a non-degenerate $\SO(1,3)$ geometric solution,
  \item there exist exactly two classes of gauge equivalent non-degenerate $\SO(1,3)$ geometric  solutions. We can choose their representatives $\{G_i\}$ and $\{\tilde{G}_i\}$ to be related by
 \begin{equation}
 \tilde{G}_i=R_{e_\alpha}G_iR_{N_i^{can}},\quad e_\alpha \text{ normalized}.
\end{equation}  
 \end{itemize}
\end{thm}

\begin{proof}
 Follows from theorem \ref{thm:nondeg-SO}.
\end{proof}

If we have one $\SO(1,3)$ geometric solution with additional choice of the gauge such that $G_i\in \SO_+(1,3)$, then the representative of the second gauge equivalence class is given as follows:
\begin{equation}
 \tilde{G}_i=R_{e_0}G_iR_{N_i^{can}}I^{r_i}\in \SO_+(1,3),
\end{equation}
where we added
\begin{equation}
 r_i=\left\{\begin{array}{ll}
             0 & \text{ when } N^{can}_i=e_0,\\ 1 & \text{ when } N^{can}_i=e_3,
            \end{array}\right.
\end{equation}
such that also the $\tilde{G}_i$  are in the connected component of the identity.

\begin{thm}\label{thm-2}
 Let us assume that the geometric boundary data is non-degenerate. Then the following are equivalent:
 \begin{enumerate}
  \item\label{one} lengths and orientations matching conditions are satisfied, and the reconstructed $4$-simplex is non-degenerate of split or Euclidean signature,
  \item \label{two} there exists a non-degenerate $\SO({M^4}')$ geometric solution for flipped geometric boundary data,
  \item\label{three} there exist exactly two gauge equivalence classes of non-degenerate $\SO({M^4}')$ geometric solutions for flipped geometric boundary data. We can choose  their representatives $\{G_i\}$ and $\{\tilde{G}_i\}$ to be related by
 \begin{equation}
 \tilde{G}_i=R_{e_\alpha}G_iR_{N^{can}}, \quad e_\alpha \text{ normalized},
\end{equation}  
\item\label{four} there exist at least two gauge equivalence classes of vector geometries $\{G_i^+\}$ and $\{G_i^-\}$,
\item\label{five} there exist exactly two gauge equivalence classes of vector geometries.
 \end{enumerate}
 The vector geometries $\{G_i^+\}$ and $\{G_i^-\}$ are obtained from geometric $\SO({M^4}')$ solution $\{G_i\}$ by
 \begin{equation}
  G_i^\pm=\PHi^\pm(G_i)
 \end{equation}
\end{thm}

\begin{proof}
 The equivalences $\ref{one}.\Leftrightarrow \ref{two}.\Leftrightarrow \ref{three}.$ follow from lemma \ref{lm:deg-split}. $\ref{three}.\Rightarrow \ref{five}.$: By theorem \ref{thm:deg-geometric} from an ordered pair of non-equivalent vector geometries one obtains one non-degenerate geometric solution. If there existed a third class of vector geometries, then taking all 6 possible ordered pairs, one would obtain 6 different geometric solutions. This contradicts \ref{three}. $\ref{five}\Rightarrow \ref{four}$ is trivial. $\ref{four}\Rightarrow \ref{two}$ is just an application of theorem \ref{thm:deg-geometric}.
\end{proof}

\subsection{Degenerate geometric solutions}

In this section we will prove several results about vector geometries under assumption that the lengths matching condition is satisfied.

\subsubsection{Possible signatures}

Let us now suppose that lengths matching condition is satisfied and we have a vector geometry $\{G_i\in \SO(V)\}$. We would like to determine possible signatures of $4$ simplices reconstructed from the Gram matrix. Let us denote reconstructed spacetime by $\underline{M^4}$.

For that let us choose an edge $34$ of the reconstructed $4$-simplex. As all faces are spacelike, the edge needs also to be spacelike. Let us denote its vector by $e_{34}^{\geom}$. The space perpendicular to $e_{34}^{\geom}$ can be identified with the subspace $W$ of bivectors of the form
\begin{equation}
 W=\{B\colon \exists n\perp e_{34}^{\geom}, B=n\wedge e_{34}^{\geom}\},\quad B^1\cdou B^{2}=- n^1 \cdou  n^{2}
\end{equation}
This space is spanned by geometric bivectors of the faces $(kl)$, $k,l\in\{0,1,2\}$ sharing the edge $e_{34}^{\geom}$
\begin{equation}
 W=\text{span}\{B_{kl}^{\geom},\quad k<l,\quad k,l\in \{0,1,2\}\}
\end{equation}
As every two such faces share a tetrahedron we can compute the scalar product between their bivectors without actual knowledge about the reconstruction, for example (as all faces are spacelike)
\begin{equation}
 -n_{kl}^\geom\cdou n_{kl'}^\geom=B_{kl}^{\geom}\cdou B_{kl'}^{\geom}=B_{kl}\cdou B_{kl'}.
\end{equation}
This can be also computed in the Lorentzian signature
\begin{equation}
B_{kl}\cdou B_{kl'}= B_{kl}\cdot B_{kl'}=-{\Hodge}B_{kl}^{\geom}\cdot {\Hodge}B_{kl'}^{\geom}=-N^{can}\cdot N^{can}v_{kl}^{\geom}\cdot v_{kl'}^{\geom}= -t^{can} v_{kl}^\G\cdot v_{kl'}^\G
\end{equation}
where $t^{can}=N^{can}\cdot N^{can}$.

Let us now notice that vectors $e_{3i}^{\geom}$ can be expressed as a linear combinations of $n_{kl}^{\geom}$ for $k,l\notin \{3,4\}$ and $e_{34}^{\geom}$. Therefore the vectors
\begin{equation}
 n_{kl}^{\geom},\ (kl)\in\{(01),(12),(20)\},\ e_{34}^{\geom}
\end{equation}
span the whole space of the $4$ simplex as vectors $\{e_{3i}^{\geom}\colon i\in\{0,1,2,4\}\}$ do.

Let us now consider the matrix of scalar products of $e_{34}^{\geom}$ and vectors $n_{kl}^{\geom}$, $(kl)\in\{(01),(12),(20)\}$: 
\begin{equation}\label{eq:scalar}
G'= \left(\begin{array}{c|c}
        n_{kl}^{\geom}\cdou n_{k'l'}^{\geom} & 0\\
        \hline
        0 & e_{34}^{\geom}\cdou e_{34}^{\geom}
       \end{array}\right).
\end{equation}
The submatrix $n_{kl}^{\geom}\cdou n_{k'l'}^{\geom}$ has the same signature as $t^{can} v_{kl}^\G\cdot v_{k'l'}^\G$. 

\begin{lm}
If lengths matching condition is satisfied, the reconstructed $4$ simplex is non-degenerate and there exists a vector geometry then the signature of the reconstructed $4$-simplex is either $++--$ or $----$.
\end{lm}

\begin{proof}
The matrix $v_{kl}^\G\cdot v_{k'l'}^\G$ ($(kl)\in\{(01),(02),(12)\}$ has rank $3$ thus is non-degenerate and has the same signature as a matrix of scalar products in $V$. Let us consider two cases:
\begin{itemize}
 \item $N^{can}=e_0$. Since $t^{can}=1$ the signature of $G'$ is $(---)$ plus $(-)$
 \item $N^{can}=e_3$. Since $t^{can}=-1$ the signature of $G'$ is $(-++)$ plus $(-)$.
\end{itemize} 
\end{proof}

Let us introduce for any pairwise different $i,j,k,\in\{0,1\ldots 4\}$ a matrix
\begin{equation}\label{matrix-tildeG}
  \tilde{G}_{ijk}=\left(\begin{array}{ccc}
                   v_{ij}\cdot v_{ij} & -v_{ji}\cdot v_{jk} & -v_{ij}\cdot v_{ik}\\
                   -v_{jk}\cdot v_{ji} & v_{jk}\cdot v_{jk} & -v_{kj}\cdot v_{ki}\\
                   -v_{ik}\cdot v_{ij} & -v_{ki}\cdot v_{kj} & v_{ki}\cdot v_{ki}
                  \end{array}\right).
\end{equation}

\begin{lm}\label{lm:gram-deg}
If lengths matching condition is satisfied and there exists a vector geometry then the following are equivalent
\begin{itemize}
 \item the reconstructed $4$-simplex is degenerate
 \item for any pairwise different $i,j,k,\in\{0,1\ldots 4\}$ the matrix $\tilde{G}_{ijk}$ from \eqref{matrix-tildeG} satisfies
 \begin{equation}
 \det \tilde{G}_{ijk}=0
 \end{equation}
\end{itemize}
\end{lm}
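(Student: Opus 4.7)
The plan is to identify $\tilde{G}_{ijk}$, via the vector geometry, with the Gram matrix of three vectors in $V$, and then to tie its degeneracy to that of the reconstructed $4$-simplex by passing through the Gram matrix of the three face bivectors sharing a common edge. I expect three main steps, and the key technical obstacle to be sign bookkeeping in the middle one.

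First, given a vector geometry $\{G_i\in SO(V)\}$, introduce $w_{ab}=G_av_{ab}\in V$, which satisfies $w_{ab}=-w_{ba}$ by definition of vector geometry. Since $G_a$ preserves the inner product, $v_{ab}\cdot v_{ac}=w_{ab}\cdot w_{ac}$ at any node $a$. A direct substitution of these identities into the matrix \eqref{matrix-tildeG}, absorbing the signs using $w_{ba}=-w_{ab}$, shows that every entry of $\tilde{G}_{ijk}$ reduces to $u_a\cdot u_b$ with $u_1=w_{ij}$, $u_2=w_{jk}$, $u_3=w_{ki}$; in other words, $\tilde{G}_{ijk}$ is exactly the Gram matrix of $(w_{ij},w_{jk},w_{ki})$ in the non-degenerate three-dimensional inner product space $V$. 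Therefore $\det\tilde{G}_{ijk}=0$ is equivalent to linear dependence of $w_{ij},w_{jk},w_{ki}$ in $V$.

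Second, I would reproduce for the three faces $(ij),(jk),(ki)$ of the reconstructed $4$-simplex the calculation carried out in the discussion preceding the lemma. These faces share the edge $e_{lm}^\geom$ with $\{l,m\}=\{0,\ldots,4\}\setminus\{i,j,k\}$, and any two of them share a tetrahedron, so the pairwise inner products of their bivectors can be evaluated at that common tetrahedron using the boundary data: combining $B_{\alpha\beta}|_\alpha\cdot B_{\alpha\gamma}|_\alpha=-t^{can}v_{\alpha\beta}\cdot v_{\alpha\gamma}$ (which follows from $\Hodge B\cdot\Hodge B'=-B\cdot B'$) with the antisymmetry $B_{\beta\alpha}=-B_{\alpha\beta}$ and translating back to the $w$'s yields the identification $\mathrm{Gram}(B_{ij}^\geom,B_{jk}^\geom,B_{ki}^\geom)=-t^{can}\,\tilde{G}_{ijk}$. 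Writing $B^\geom=n^\geom\wedge e_{lm}^\geom$ with $n^\geom\perpu e_{lm}^\geom$, and using that the common edge of three spacelike faces is itself spacelike so that $e_{lm}^\geom\cdou e_{lm}^\geom\neq 0$, one also has $\mathrm{Gram}(B_{ij}^\geom,B_{jk}^\geom,B_{ki}^\geom)=(e_{lm}^\geom\cdou e_{lm}^\geom)\,\mathrm{Gram}(n_{ij}^\geom,n_{jk}^\geom,n_{ki}^\geom)$. Hence $\det\tilde{G}_{ijk}=0$ iff this last $3\times 3$ Gram block is degenerate.

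Third, this Gram block is precisely the nontrivial block of the auxiliary matrix $G'$ of \eqref{eq:scalar}; the other block is the nonzero number $e_{lm}^\geom\cdou e_{lm}^\geom$, so $G'$ is degenerate iff the $n^\geom$ Gram block is degenerate. By Theorem \ref{Gram-matrxi}, the reconstructed $4$-simplex is degenerate iff its lengths Gram matrix $G^l$ is degenerate, and since in the non-degenerate case $G'$ is the Gram matrix of a basis of the ambient space spanned by the $4$-simplex, degeneracy of $G'$ is equivalent to degeneracy of the $4$-simplex. Chaining all the equivalences gives the desired statement for the specific triple $\{i,j,k\}$ associated with our choice of edge; as degeneracy of the reconstructed $4$-simplex does not depend on that choice, $\det\tilde{G}_{ijk}=0$ then holds for every triple or for none, which is exactly the statement of the lemma. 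The main obstacle I anticipate is the careful sign bookkeeping in the second step, combining the antisymmetry $B_{\beta\alpha}=-B_{\alpha\beta}$ with the Hodge identity and the factor $t^{can}$, so that the two Gram matrices coincide up to an overall nonzero scalar rather than up to a more complicated relation of signs.
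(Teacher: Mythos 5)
Your proposal is correct and follows essentially the same route as the paper's: you identify $\tilde{G}_{ijk}$ with the Gram matrix of the vector-geometry vectors, relate it (up to the nonzero factor $-t^{can}$) to the Gram matrix of the three face bivectors sharing the edge opposite to $\{i,j,k\}$, and read off degeneracy of the reconstructed $4$-simplex from the block matrix $G'$ of \eqref{eq:scalar} together with the spanning argument and Theorem \ref{Gram-matrxi}, the arbitrariness of the triple giving the ``for any'' form of the statement. The paper's own proof is just a terser version of your second and third steps, resting on the same discussion preceding the lemma.
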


\begin{proof} 
The Matrix $G'$ from \eqref{eq:scalar} is degenerate exactly when the matrix of the scalar products $v_{kl}^\G\cdot v_{k'l'}^\G$, $(kl)\in\{(01),(12),(20)\}$ is degenerate. The latter is equal to $\tilde{G}_{012}$. As choice of indices is arbitrary we obtain equivalence.
\end{proof}

\subsubsection{Possible degenerate points}

We will now prove a version of Pleba\'nski classification (see \cite{plebanski, Urbantke1984, Bengtsson1995, Reisenberger2016,Baez}).

\begin{lm}\label{lm:two-deg}
Let us assume that we have non-degenerate geometric boundary data for $N_i^{can}=N^{can}$.
There exist at most two (up to an overall rotation by $G\in \OO(V)$) sets of vectors $\{w_{ij}\}_{i\not=j,i,j\in\{0,\ldots 4\}}$ in $V$ satisfying
 \begin{enumerate}
  \item $\forall_{i\not=j}\ w_{ij}=-w_{ji}$
  \item $\forall_i\ \sum_{j\not=i} w_{ij}=0$
  \item $\forall_{i,j,k} w_{ij}\cdot w_{ik}=v_{ij}\cdot v_{ik}$
 \end{enumerate}
If for any pairwise different $i,j,k,\in\{0,1\ldots 4\}$ the matrix from \eqref{matrix-tildeG} satisfies
 \begin{equation}
  \det \tilde{G}_{ijk}=0
 \end{equation}
then there exists at most one solution.
\end{lm}

\begin{proof}
 We will first prove that there are at most two matrices $G_{ij,kl}$ (indices are pairs $i \not=j$ and $k\not=l$) satisfying
 \begin{equation}
  G_{ij,ik}=v_{ij}\cdot v_{ik},\quad G_{ij,kl}=-G_{ji,kl}=-G_{ij,lk}=G_{kl,ij},\quad 
  \sum_{j\not= i} G_{ij,kl}=0
 \end{equation}
 Matrix $G_{ij,kl}$ should be the matrix of scalar product $w_{ij}\cdot w_{kl}$. If there are at most two such matrices there will be at most two sets of vectors $\{w_{ij}\}$ up to rotations from $\OO(V)$.

The only undetermined entries are $G_{ij,kl}$ where all indices are different. Let us assume that $i,j,k,l$ are different and $m$ is the lacking index. We know that
\begin{equation}
 0=\sum_{n\not=i} G_{in,kl}=\overbrace{\sum_{n\in\{k,l\}} G_{in,kl}}^{\text{expressed by }v_{ab}\cdot v_{ac}} +G_{ij,kl}+G_{im,kl}.
\end{equation}
So $G_{ij,kl}$ can be expressed by $G_{im,kl}$ and $v_{ab}\cdot v_{ac}$. Similarly we can replace by $m$ any other index. As all permutations are generated by transpositions $(mi)$, $(mj)$, $(mk)$, $(ml)$, we see that we can express any $G_{ij,kl}$ with different indices as a linear combination of $G_{01,23}$ and $v_{ab}\cdot v_{ac}$. Let us denote $G_{01,23}=\alpha$.

As $w_{ij}$ are vectors in $V$, every determinant of a $4$ by $4$ minor of $G_{ij,kl}$ need to vanish. Let us take the determinant of
\begin{equation}
\left(\begin{array}{l|l}
 G_{0i,0j} & G_{0i,23}\\
 \hline
 G_{23,0j} & G_{23,23}
 \end{array}
 \right),\quad i,j\in\{1,2,3\}
\end{equation}
The only unknown entry is $\alpha=G_{01,23}=G_{23,01}$ and it appears twice in the matrix. The coefficients in the contribution of $\alpha^2$ to the determinant is $-(G_{02,03}^2-G_{02,02}G_{03,03})\not=0$ as $v_{02}$ and $v_{03}$ are not parallel. The determinant is thus quadratic polynomial and there are at most two solutions in $\alpha$ to the equation 
\begin{equation}
\det\left(\begin{array}{l|l}
 G_{0i,0j} & G_{0i,23}\\
 \hline
 G_{23,0j} & G_{23,23}
 \end{array}
 \right)=0.
\end{equation}

If $G_{ij,kl}$ is a matrix of scalar product then its range of  is $3$ dimensional (non-degeneracy of boundary data) and the signature is the same as signature of $V$. There exist vectors $M_{ij}^\mu\in V$ such that
\begin{equation}
 G_{ij,kl}=M_{ij}^\mu\eta_{\mu\nu} M_{kl}^\nu,\ \eta_{\mu\nu}\text{ scalar product in } V.
\end{equation}
If there exists another set of vectors ${M'_{ij}}^\mu$ then there exists $G\in \OO(V)$ such that
\begin{equation}
 M'_{ij}=GM_{ij}.
\end{equation}
This proves that there are at most two solutions up to rotations.

If $\det \tilde{G}_{012}=0$ then $w_{01}$, $w_{12}$, $w_{20}$ are not independent (if they were the signature of $\tilde{G}_{012}$ would be the same as $V$). Thus
\begin{align}
 0=\det \left(\begin{array}{ccc}
                   G_{01,12} & G_{01,20}& G_{01,23}\\
                   G_{12,12} & G_{12,20} & G_{12,23}\\
                   G_{20,12} & G_{20,20} & G_{20,23}
                  \end{array}\right)
 =\det \left(\begin{array}{ccc}
                   -v_{10}\cdot v_{12} & -v_{01}\cdot v_{02} & G_{01,23}\\
                   v_{21}\cdot v_{21} & -v_{21}\cdot v_{20} & -v_{21}\cdot v_{23}\\
                   -v_{20}\cdot v_{21} & v_{20}\cdot v_{20} & v_{20}\cdot v_{23}
                  \end{array}\right).                  
\end{align}
As $v_{21}$ and $v_{20}$ are independent (boundary non-degeneracy condition) this gives linear equation for $G_{01,23}$ as 
$(v_{21}\cdot v_{21})(v_{20}\cdot v_{20})-(v_{20}\cdot v_{21})^2\not=0$. So there exists at most one matrix of scalar products.
\end{proof}

\begin{lm}\label{lm:only-two}
There are at most two vector geometries up to gauge transformations. If 
for any pairwise different $i,j,k,\in\{0,1\ldots 4\}$ the matrix $\tilde{G}_{ijk}$ from \eqref{matrix-tildeG} satisfies
 \begin{equation}
 \det \tilde{G}_{ijk}=0
 \end{equation}
then there exists at most one.
\end{lm}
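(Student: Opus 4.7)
The plan is to reduce the statement to Lemma \ref{lm:two-deg} by setting up a correspondence between vector geometries and the sets $\{w_{ij}\}$ enumerated there. Starting from a vector geometry $\{G_i \in SO(V)\}$, I would define $w_{ij} := G_i v_{ij} \in V$ and check the three conditions of Lemma \ref{lm:two-deg}: antisymmetry $w_{ij} = -w_{ji}$ is precisely the defining property $v_{ij}^{\G} = -v_{ji}^{\G}$ of a vector geometry; the closure $\sum_{j \neq i} w_{ij} = 0$ follows by applying $G_i$ to the boundary closure $\sum_{j} v_{ij} = 0$; and the scalar-product identity $w_{ij} \cdot w_{ik} = v_{ij} \cdot v_{ik}$ is automatic because $G_i$ is orthogonal.

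For the converse, given $\{w_{ij}\}$ satisfying the three conditions, I would build, for each $i$, a linear map $G_i \colon V \to V$ by $G_i v_{ij} = w_{ij}$, $j\neq i$. The non-degeneracy of the boundary data ensures that any three of the four vectors $\{v_{ij}\}_{j \neq i}$ are independent, so they span $V$ and the only relation among them is the closure $\sum_{j \neq i} v_{ij} = 0$. This relation is respected because the $w_{ij}$ obey the analogous closure, hence $G_i$ is well defined on $V$, and the scalar-product condition then forces $G_i \in O(V)$.

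The main obstacle is reconciling $O(V)$ with $SO(V)$, since a vector geometry requires $G_i \in SO(V)$ while Lemma \ref{lm:two-deg} only classifies $\{w_{ij}\}$ up to $O(V)$ rotation. Under an overall rotation $w_{ij} \mapsto R w_{ij}$ with $R \in O(V)$, each reconstructed map transforms as $G_i \mapsto R G_i$, so all determinants $\det G_i$ flip simultaneously by $\det R$. Within a single $O(V)$-orbit of $\{w_{ij}\}$ the sign pattern $(\det G_0, \ldots, \det G_4)$ is therefore well defined up to a global flip, and the gauge group $SO(V)$ cannot alter individual determinants. Hence an $O(V)$-orbit yields a vector geometry (up to gauge) if and only if all $\det G_i$ share a common sign, in which case it yields exactly one such class (the representative with all $\det G_i = +1$).

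Combining these observations with Lemma \ref{lm:two-deg}, which bounds the number of $O(V)$-orbits of $\{w_{ij}\}$ by two in general and by one under the additional hypothesis $\det \tilde{G}_{ijk} = 0$ for some (and hence every) triple $i,j,k$, gives the desired bounds: at most two vector geometries up to gauge equivalence in general, and at most one in the degenerate case.
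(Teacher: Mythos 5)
Your proposal is correct and follows essentially the same route as the paper: identify vector geometries with sets $\{w_{ij}\}$ satisfying the hypotheses of lemma \ref{lm:two-deg} via $w_{ij}=G_iv_{ij}$ (using non-degeneracy of the boundary data to reconstruct $G_i\in O(V)$), invoke that lemma to bound the number of $O(V)$-orbits by two (or one), and use the fact that an overall reflection flips all $\det G_i$ simultaneously while $SO(V)$ gauge transformations preserve them, so each $O(V)$-orbit contributes at most one gauge class with all $\det G_i=1$. Your treatment is somewhat more explicit about the well-definedness of the reconstructed $G_i$ and the sign bookkeeping, but the argument is the same as the paper's.
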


\begin{proof}
For the sets of vectors $\{w_{ij}\}$ satisfying assumptions of lemma \ref{lm:two-deg} we can introduce $G_i^{\{w\}}$ defined by (for non-degenerate boundary data)
\begin{equation}
 G_i^{\{w\}}v_{ij}=w_{ij}.
\end{equation}
We have $G_i^{\{w\}}\in \OO(V)$. Vectors $\{w_{ij}\}$ are defined by vector geometry if and only if $\det G_i^{\{w\}}=1$ for all $i$.
There exist at most two up to an overall $\OO(V)$ transformation sets of vectors from lemma \ref{lm:two-deg}. From two sets differing by rotations not from $\SO(V)$ at most one has $\forall_i\ \det G_i^{\{w\}}=1$, so there are at most two vector geometries up to gauge transformations.

If $\det \tilde{G}_{ijk}=0$ then there exists at most one set of possible vectors $\{w_{ij}\}$ up to rotations from $\OO(V)$. As before only one can have $\forall_i\ \det G_i^{\{w\}}=1$.
\end{proof}

\subsection{Degenerate $\SO(1,3)$ geometric solutions}

\begin{lm}
Let us assume that the lengths matching condition is satisfied, the geometric boundary data is non-degenerate and there exists a vector geometry $\{\tilde{G}_i\}$ and if in addition
\begin{itemize}
 \item the reconstructed $4$-simplex is non-degenerate in ${M^4}'$ (flipped with respect to $N^{can}$)  then orientations matching condition is satisfied and there exist exactly two gauge inequivalent classes of vector geometries,
 \item the reconstructed $4$-simplex is degenerate  then orientations matching condition is satisfied and there exists exactly one gauge equivalent class of vector geometries
\end{itemize}
\end{lm}

\begin{proof}
If lengths matching condition is satisfied, the reconstructed $4$ simplex is non-degenerate and there is one vector geometry then we can consider $3$ sets of vectors satisfying assumptions of lemma \ref{lm:two-deg}
\begin{equation}
 v_{ij}^{\geom\pm},\ v_{ij}^{\tilde{G}}
\end{equation}
and $v_{ij}^{\geom+}$ and $v_{ij}^{\geom-}$ from equation \eqref{two-vgeom} are not related by a rotation\footnote{If $v_{ij}^{\geom+}=Gv_{ij}^{\geom-}$ then $G=G_i^{\geom+}(G_i^{\geom-})^{-1}\in \SO(V)$ ($\det G_i^{\geom+}=\det G_i^{\geom-}$) for non-degenerate boundary data. In such case all bivectors of the reconstructed $4$-simplex are annihilated by $\PHi^-(G)N^{can}$, which contradicts its non-degeneracy.}.

From lemma \ref{lm:two-deg} we know that there exists $s\in\{+,-\}$ and $G\in \OO(V)$ such that
\begin{equation}
 v_{ij}^{\geom\ s}=G v_{ij}^{\tilde{G}}
\end{equation}
thus by lemma \ref{lm:orient-deg}
\begin{equation}
 G_i^{\geom\ s}=G\tilde{G}_i ,\quad \det G_i^{\geom}=\det G\tilde{G}_i=r,
\end{equation}
where we introduced $r=\det G$. As a result, the orientations matching condition is satisfied and by theorem \ref{thm-2} there exist two gauge inequivalent vector geometries. By lemma \ref{lm:only-two} there exists no other gauge class of vector geometry.

If the reconstructed $4$ simplex is degenerate then by lemma \ref{lm:gram-deg} $\det \tilde{G}_{012}=0$ and by lemma \ref{lm:two-deg} there exists at most one solution $w_{ij}$ (up to rotation from $\OO(V)$). We know one solution $v_{ij}^{\tilde{G}}$ and the second follows from the geometric construction of $v_{ij}^{\geom+}$ from \eqref{two-vgeom} ( it follows that they differ by $G\in \OO(V)$)
\begin{equation}
 r=\det G=\det G\tilde{G}_i=\det G_i^{\geom+}=\det G_i^{\geom}
\end{equation}
Thus orientations matching condition is satisfied. By lemma \ref{lm:only-two} there exists no other vector geometry.
\end{proof}

\subsection{Classification of solutions}

Degenerate and non-degenerate solutions cannot occur at the same time. If lengths matching condition is satisfied, but orientations matching condition is not satisfied then we cannot have any solution.

When lengths matching condition is not satisfied we can have neither non-degenerate $\SO(1,3)$ geometric solution (reconstructed $+---$ simplex satisfies lengths matching condition) nor two degenerate solutions (reconstructed $++--$ or $----$ simplex need to satisfy lengths matching condition). There still might exist a single vector geometry in this case.

By lemma \ref{lm:map} this classification applies to real critical points of the action.
\section{Phase difference}\label{sec:phase}

In this section we will give an interpretation in terms of Regge geometries of the difference of the phases between two critical points. The overall phase can be changed by adjusting the phases of the coherent states. Therefore, in the study of the asymptotic behaviour of the vertex amplitude the phase difference $\Delta S$ is of main interest. It is defined up to $2\pi i$ because it appears in the exponent. In Section \ref{sec:da} we will show that it agrees with the expected Regge term $\Delta S^\geom$ up to $\pi \iu$. We will improve this result in Section \ref{sec:pi} by using certain deformation argument and we will show that this agreement holds exactly, i.e. up to $2\pi \iu$:
$$
\Delta S =\Delta S^\Delta \mod 2\pi \iu.
$$

\subsection{The Regge term $\Delta S^\geom$}\label{sec:Regge}

Regge calculus \cite{Regge} was devised as an extension of gravitational action to certain distributional metrics that are flat everywhere except on the faces of the simplicial decomposition. On these faces, the geometry is not smooth anymore,  and deficit angles appear. Faces are assumed to have flat inner simplicial geometry. The action is a sum of contributions from single simplices given by $$S_{\Delta}=\sum_{i<j} A_{ij} \theta_{ij},$$
where $A_{ij}$ is the area of the triangle $(ij)$ and $\theta_{ij}$ is the dihedral angle between the tetrahedra $i$ and $j$ at the triangle $(ij)$. However, apart from the Euclidean case, the definition of the dihedral angle is nontrivial. For example, in the Lorentzian theory one need to consider separately the cases when the tetrahedra form a thin or thick wedge at the triangle, respectively, \cite{BarrettFoxon, Frank2} (see figure \ref{fig:dihedral}). 

In this paper we limit ourselves to the case where the faces are spacelike, but the normals to tetrahedra can be timelike or spacelike. We basically follow the definition of the dihedral angles in the relevant cases from \cite{Suarez}\footnote{Our sign is opposite to \cite{Suarez} and we subtract $\pi$ in the case of Euclidean and split signatures. It also differs from \cite{Brewin}. Our dihedral angles are always real opposed to \cite{Sorkin}. }. We argue that this is the right definition because the Schl\"afli identity holds \cite{Suarez, Rivin}
 \begin{equation}
  0=\sum_{i<j} A_{ij}\delta \theta_{ij},
 \end{equation}
where $\delta \theta_{ij}$ are variations of dihedral angles under changes of the shape of the $4$-simplex. This identity is crucial for Regge calculus \cite{Regge}.

The dihedral angles are defined as follows:
\begin{itemize}
 \item Euclidean $(----)$ or split signature $(++--)$ and $N_i^{can}=N_{j}^{can}$. The dihedral angle is a unique angle $\theta_{ij}$ such that
 $$
  \cos\theta_{ij}=t^{can'}N_i^{\geom}\cdot' N_j^{\geom},\quad t^{can'}\theta_{ij}\in(0,\pi)
 $$
 \item Lorentzian signature $(+---)$ and $N_i^{\geom}\cdot N_j^{\geom}>0$ and $N_i^{can}=N_{j}^{can}$. The dihedral angle $\theta_{ij}>0$ is the unique angle such that
$$
N_i^{\geom}\cdot N_j^{\geom} = \cosh\theta_{ij} 
$$
This case contains thick wedge for both normals timelike and thin wedge for both normals spacelike.
 \item Lorentzian signature $(+---)$ and $N_i^{\geom}\cdot N_j^{\geom}<0$ and $N_i^{can}=N_{j}^{can}$. The dihedral angle $\theta_{ij}<0$ is the unique angle such that
$$
N_i^{\geom}\cdot N_j^{\geom} = -\cosh\theta_{ij}
$$
This case contains thick wedge for both normals spacelike and thin wedge for both normals timelike.
\item Lorentzian signature $(+---)$ and $N_i^{can}\neq N_{j}^{can}$. The dihedral angle $\theta_{ij}$ is the unique angle such that
$$
N_i^{\geom}\cdot N_j^{\geom} = \sinh\theta_{ij}.
$$
\end{itemize}

\begin{figure}\label{fig:dihedral}
 \includegraphics[width=\textwidth]{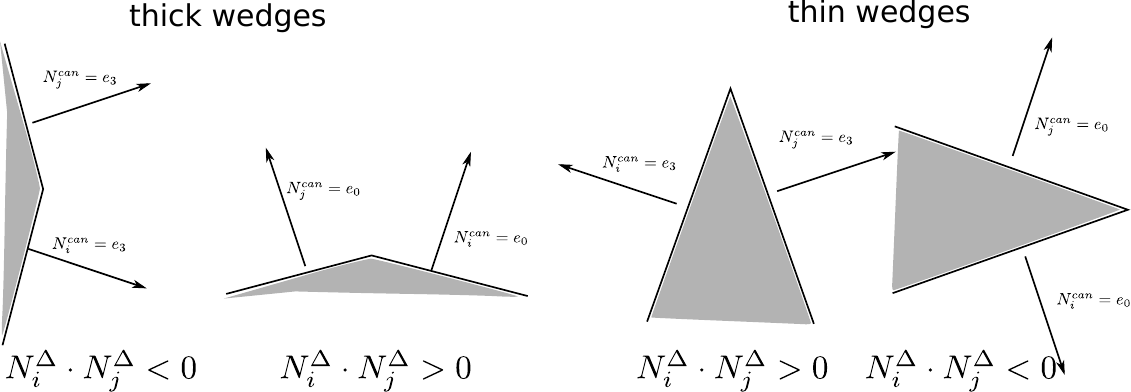}
 \caption{Thick and thin wedges.}
\end{figure}

The area of the triangle $(ij)$ is equal to
$$
A_{ij}=\frac{1}{2} |B_{ij}^{\geom}|.
$$
By the reconstruction theorem (Theorem \ref{thm:reconstr}) we know that 
$$
B_{ij}^{\geom}=\pm B_{ij}^{G}.
$$
Since $|B_{ij}^{G}|^2=\rho_{ij}^2$, it immediately follows that the Regge action for the geometries reconstructed from the geometric solution $\{G_i\}$  becomes
$$
S_{\Delta}=\frac{1}{2}\sum_{i<j} \rho_{ij} \theta_{ij}. 
$$

We define the geometric difference of the phase $\Delta S^{\geom}$ to be equal to
\begin{equation}
\Delta S^\geom:=2 i r S_{\Delta}
\left\{\begin{array}{ll}
           1 & \text{reconstructed simplex is Lorentzian,}\\
          \gamma^{-1}& \text{reconstructed simplex in other signature,}
          \end{array}\right.
\end{equation}
where $r$ is the Plebański orientation (definition \ref{df:r}).

\subsection{Revisiting the phase difference}

Let us recall that the difference of the phases between two critical points is given in terms of its $\SL$ solutions $\{g_i\}$ and $\{\tilde{g}_i\}$ by lemma \ref{lm:difference-phases} as
\begin{equation}
\Delta S=\tilde{S}-S=i\sum_{i<j} \rho_{ij} r_{ij}+2\j_{ij}\phi_{ij}
\end{equation}
where
\begin{equation}
 g_i \tilde{g}_i^{-1}\tilde{g}_j g_j^{-1}=e^{\phi_{ij}\M_{ij}-ir_{ij}\M_{ij}},
\end{equation}
where $\M_{ij}=\frac{2}{\rho_{ij}} \frac{2\gamma}{\gamma-i}(\B^\beta_{ij})^T$. Taking the image in $\SO_+(1,3)$ (see lemma \ref{lm:map}) we obtain
\begin{align}
 &G_i\tilde{G}_i^{-1}\tilde{G}_jG_j^{-1}=
 e^{2\phi_{ij}\frac{1}{\rho_{ij}}B_{ij}^\G-2r_{ij}\frac{1}{\rho_{ij}}{\Hodge}B_{ij}^\G},
\end{align}
as $\pi(\M_{ij})=\frac{2}{\rho_{ij}}B_{ij}^\G$.

Let us notice that $B_{ij}^\G\cdot B_{ij}^\G=\rho_{ij}^2$, and that $\frac{1}{\rho_{ij}}B_{ij}^\G$ generates rotation with period $2\pi$.
The difference of the contribution to the phase from the link action is
\begin{equation}
 i(\rho_{ij}r_{ij}+2\j_{ij}\phi_{ij}).
\end{equation}

\subsection{Determination of $r_{ij}$ and $\phi_{ij}$}\label{sec:phase_determination}

In this section we will determine $r_{ij}$ and $\phi_{ij}$ in a geometric way. The price we pay is an ambiguity of $\pi$ in the phase that will be fixed by a deformation argument in section \ref{sec:pi}.

\begin{lm}\label{lm-geom-phase-ndeg}
The contributions $\phi_{ij}$ and $r_{ij}$ to the difference in phase for two critical points corresponding to two non-degenerate $\SO(1,3)$ solutions $\{G_i\}$ and $\{\tilde{G}_i\}$ satisfy
\begin{equation}\label{eq:IRRpd}
  I^sR_{N_i^\G}R_{N_j^\G}=e^{-2r_{ij}\frac{1}{\rho_{ij}}{\Hodge}B_{ij}^\G+2\phi_{ij} \frac{1}{\rho_{ij}}B_{ij}^\G},
\end{equation}
where $s=0$ if both canonical normals are of the same type, and $s=1$ if they are of different types.
\end{lm}

\begin{proof}
We can compute
 \begin{align}
  G_i\tilde{G}_i^{-1}\tilde{G}_jG_j^{-1}&=G_i I^{r_i} R_{N_i^{can}} G_i^{-1} R_{e_0} R_{e_0} G_j R_{N_j^{can}} I^{r_j}G_j^{-1}=\nonumber\\
 &=I^{r_i+r_j} R_{N_i^\G}R_{N_j^\G},
 \end{align}
where $r_i\in\{0,1\}$ are such that
\begin{equation}
 \tilde{G}_i=R_{e_0}G_iR_{N_i^{can}} I^{r_i}\in \SO_+(1,3).
\end{equation}
We see that $s=r_i+r_j=0$ modulo $2$ if and only if both canonical normals are the same.
\end{proof}

We will now obtain a similar result for two degenerate $\SO(1,3)$ geometric solutions (vector geometries).

\begin{lm}\label{lm-translate}
Let $G_i,G_j\in \SO({M^4}')$ be such that
\begin{equation}
 \PHi^\pm(G_i)=G_i^\pm,\ \PHi^\pm(G_j)=G_j^\pm.
\end{equation}
Let $N_i^\G=G_i N^{can}$ and $N_j^\G=G_j N^{can}$. Then
\begin{equation}
 \PHi^\pm(R_{N_i^\G} R_{N_j^\G})=(G_i^\pm) (G_i^\mp)^{-1} (G_j^\mp)(G_j^\pm)^{-1}.
\end{equation}
\end{lm}

\begin{proof}
 From the equality \eqref{eq:RSO} we have
 \begin{align}
 &\PHi^\pm(G_i R_{N^{can}} G_i^{-1}  G_j R_{N^{can}} G_j^{-1})=\nonumber\\
&=\PHi^\pm(G_i (R_{N^{can}} G_i^{-1} R_{N^{can}}) (R_{N^{can}} G_j R_{N^{can}}) G_j^{-1})=\nonumber\\
&=(G_i^\pm) (G_i^\mp)^{-1} (G_j^\mp)(G_j^\pm)^{-1}.
 \end{align} 
That prove the equality.
\end{proof}

\begin{lm}\label{lm:B-split}
 Suppose that we have a bivector $B$ in $\Lambda^2{M^4}'$ then
 \begin{equation}\label{eq:eB}
  \PHi^\pm(e^B)=e^{B^\pm},
 \end{equation}
where $B^\pm\in so(V)\approx \Lambda^2 V$ are determined by the equality
\begin{equation}
\Phi^\pm(B^\pm)=\Phi^\pm(B)\in V,
\end{equation}
and we embed bivectors from $V$ into ${M^4}'$.
\end{lm}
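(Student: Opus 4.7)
My plan is to exploit the fact that $\PHi^\pm: SO({M^4}')\to O(V)$ is a Lie group homomorphism (this is the content of lemma \ref{lm-exact}), so by naturality of the exponential map
$\PHi^\pm(e^B) = \exp\bigl(d\PHi^\pm(B)\bigr),$
where $d\PHi^\pm:so({M^4}')\to so(V)$ is the induced Lie algebra homomorphism. Under the identifications $so({M^4}')\cong\Lambda^2{M^4}'$ and $so(V)\cong\Lambda^2 V$, it therefore suffices to show that the linear map $d\PHi^\pm:\Lambda^2{M^4}'\to\Lambda^2 V$ coincides with the map $B\mapsto B^\pm$ of the statement. Both are linear, so I would pin them down on a complementary decomposition of $\Lambda^2{M^4}'$.

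For the first piece, consider $B\in\Lambda^2 V\subset \Lambda^2{M^4}'$. Then $e^B$ lies in $SO(V)\subset SO({M^4}')$ and fixes $N^{can}$, so by lemma \ref{lm:pm-G} one has $\PHi^\pm(e^B)=e^B$ in $SO(V)$; differentiating in $B$ shows $d\PHi^\pm$ restricts to the identity on $\Lambda^2 V$. On the other hand, $\Phi^\pm|_{\Lambda^2 V}$ is an isomorphism onto $V$ (it is the Hodge star in $V$ up to a factor $t^{can'}$, as one checks in an orthonormal basis adapted to $N^{can}$), so the defining condition $\Phi^\pm(B^\pm)=\Phi^\pm(B)$ with $B\in\Lambda^2 V$ forces $B^\pm=B$. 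Hence both maps restrict to the identity on $\Lambda^2 V$.

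For the second piece, $SO({M^4}')$ preserves the splitting $\Lambda^2{M^4}' = \Lambda^2_+\oplus\Lambda^2_-$ into self-dual and anti-self-dual bivectors -- this is precisely what made $(\PHi^+,\PHi^-)$ well-defined as a pair -- so $d\PHi^\pm$ annihilates $\Lambda^2_\mp$. Directly from the formula $\Phi^\pm(B)=t^{can'}(\pm B+\Hodgp B)\lrcornep N^{can}$ the map $\Phi^\pm$ itself vanishes on $\Lambda^2_\mp$; combined with injectivity of $\Phi^\pm|_{\Lambda^2 V}$, the equation $\Phi^\pm(B^\pm)=0$ forces $B^\pm=0$ for $B\in\Lambda^2_\mp$. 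Thus both candidate maps also vanish on $\Lambda^2_\mp$. The subspaces $\Lambda^2 V$ and $\Lambda^2_\mp$ are each three-dimensional; their intersection is trivial since $\Hodgp$ in ${M^4}'$ sends $\Lambda^2 V$ into the complementary three-dimensional subspace $N^{can}\wedge V$, so an identity $\Hodgp B=\mp B$ with $B\in\Lambda^2 V$ forces $B=0$. Hence $\Lambda^2 V$ and $\Lambda^2_\mp$ are complementary in the six-dimensional $\Lambda^2{M^4}'$, and two linear maps agreeing on a pair of complementary subspaces are equal. This yields $d\PHi^\pm(B)=B^\pm$ and therefore $\PHi^\pm(e^B)=e^{B^\pm}$.

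The only technical subtlety is verifying complementarity of $\Lambda^2 V$ and $\Lambda^2_\mp$, which reduces to the observation that $\Hodgp$ swaps $\Lambda^2 V$ with $N^{can}\wedge V$; this is a one-line check in an adapted orthonormal basis, valid both for ${M^4}'$ of Euclidean signature ($N^{can}=e_0$) and of split signature ($N^{can}=e_3$).
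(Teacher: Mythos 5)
Your proof is correct, and it reaches the result by a mildly different organization than the paper. The paper works at the group level: it splits $B=B_++B_-$ into self-dual and anti-self-dual parts, uses that the one-parameter subgroups they generate commute to write $\PHi^\pm(e^B)=\PHi^\pm(e^{B_\pm})$, settles the case of $N^{can}$-preserving $B$ via lemma \ref{lm:pm-G}, and then extends to general $B$ by noting that the chiral part of such special $B$ is arbitrary. You instead linearize: naturality of $\exp$ for the homomorphism $\PHi^\pm$ reduces everything to showing $d\PHi^\pm$ equals the linear map $B\mapsto B^\pm$, which you verify on the complementary subspaces $\Lambda^2 V$ (both are the identity, using lemma \ref{lm:pm-G} on one side and injectivity of $\Phi^\pm|_{\Lambda^2 V}$ on the other) and $\Lambda^2_\mp$ (both vanish), with a clean dimension/intersection argument for complementarity. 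This buys you an explicit proof that $B\mapsto B^\pm$ is well defined and linear, and it replaces the paper's terse ``$B_+$ is arbitrary in this case'' extension step by transparent linear algebra. One step in your write-up is stated too quickly: that $SO({M^4}')$ preserves the self-dual/anti-self-dual splitting only gives that $\mathrm{ad}_B$ is block diagonal, not that $d\PHi^\pm$ annihilates $\Lambda^2_\mp$; what you actually need is that the two chiral subalgebras are commuting ideals of $so({M^4}')$ (so an anti-self-dual generator acts trivially on self-dual bivectors), which is precisely the commutation fact the paper invokes. With that sentence added, your argument is complete and equivalent in substance to the paper's.
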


\begin{proof}
We can decompose $B$ into its self-dual/anti-self-dual part, $B=B_++B_-$. Group elements generated by them commute, thus
\begin{equation}
 \PHi^\pm(e^B)=\PHi^\pm(e^{B_+})\PHi^\pm(e^{B_-})=\PHi^\pm(e^{B_\pm}).
\end{equation}
As $B_\pm$ are independent and $B^+$ is determined by $B_+$ ($B^-$ is determined by $B_-$),
it is enough to check what the image of $\PHi^\pm(e^{B_\pm})$ is in terms of $\Phi^\pm(B_\pm)=\Phi^\pm(B)$. 
Let us assume that $B$ preserves $N^{can}$, then by lemma \ref{lm:pm-G}
\begin{equation}
 e^{B^\pm}=e^B\in \SO(V),
\end{equation}
where we identify subgroups $\SO(V)$ from Minkowski and from ${M^4}'$. In this special case
\begin{equation}
t^{can'}( \pm B^{\pm}+{\Hodgp}B^\pm)\lrcornep N^{can}=\Phi^\pm(B)=\Phi^\pm(B_\pm),
\end{equation}
as $B^\pm$ preserves $N^{can}$. Let us notice that as $B_+=B_-$ is arbitrary in this case so we have
\begin{equation}
 t^{can'}(\pm B^{\pm}+{\Hodgp}B^\pm)\lrcornep N^{can}=\Phi^\pm(B_\pm)=\Phi^\pm(B)
\end{equation}
for arbitrary $B$.
\end{proof}

We have

\begin{lm}\label{lm-geom-phase-deg}
The contributions $\phi_{ij}$ and $r_{ij}$ for the difference in phase between two critical points corresponding to two vector geometries  $\{G_i^+\}$ and $\{G_i^-\}$ satisfies an identity written in terms of $\SO({M^4}')$ geometric solutions $G_i$ and  $\tilde{G}_i$ for flipped geometric boundary data, where
\begin{equation}
 \PHi^\pm(G_i)=G_i^\pm,\quad \tilde{G}_i=R_{N^{can}}G_iR_{N^{can}}.
\end{equation}
Namely
\begin{equation}\label{eq:RRdeg}
  R_{N_i^\G}R_{N_j^\G}=e^{2\phi_{ij} \frac{1}{\rho_{ij}}{\Hodgp}B_{ij}^\G},\quad r_{ij}=0
\end{equation}
where $B_{ij}^\G$ is the bivector from the $\SO({M^4}')$ geometric solution.
\end{lm}

\begin{proof}
From lemma \ref{lm-translate} and \ref{lm:difference-phases}  we know that (regarding $\SO(V)$ as subgroup of $\SO(1,3)$)
\begin{equation}
  \PHi^\pm(R_{N_i^\G}R_{N_j^\G})=(G_i^\pm) (G_i^\mp)^{-1} (G_j^\mp)(G_j^\pm)^{-1}=e^{\mp r_{ij}\frac{2}{\rho_{ij}}{\Hodge}B_{ij}^{\{G^\pm\}} \pm \phi_{ij} \frac{2}{\rho_{ij}}B_{ij}^{\{G^\pm\}}},
\end{equation}
as $+$ counts difference from $G^+$ to $G^-$ and $-$ in opposite direction.
 
If $r_{ij}\not=0$ then right hand side would not be in $\SO(V)\subset \SO(1,3)$. However every element $G_i^\pm$ belongs to this subgroup. Thus $r_{ij}=0$.
 
Regarding $B^{\{G^\pm\}}_{ij}$ as bivectors in ${M^4}'$, we have with use of theorem \ref{thm:deg-geometric}
\begin{equation}
 \Phi^\pm(\Hodgp B_{ij}^{\G})=\pm \Phi^\pm(B_{ij}^{\G})=\pm v_{ij}^{\{G^\pm\}}=\pm\Phi^\pm(B_{ij}^{\{G^\pm\}}).
\end{equation}
Thus by lemma \ref{lm:B-split} we have
\begin{equation}
 \PHi^\pm\left(e^{2\phi_{ij} \frac{1}{\rho_{ij}}{\Hodgp}B_{ij}^\G}\right)=e^{\pm 2\phi_{ij} 
 \frac{1}{\rho_{ij}}B_{ij}^{\{G^\pm\}}}.
\end{equation}
Comparing the images of two group elements under $\tilde{\Phi}$ we obtain their equality up to $I$
\begin{equation}
  I^sR_{N_i^\G}R_{N_j^\G}=e^{2\phi_{ij} \frac{1}{\rho_{ij}}{\Hodgp}B_{ij}^\G}.
\end{equation}
As ${\Hodgp}B_{ij}^\G$ is simple, we see that $s=0$.
\end{proof}

\subsection{Geometric difference of the phase modulo $\pi$}\label{sec:da}

We know that
\begin{equation}
 I^s R_{N_i^\G}R_{N_j^\G}=I^s R_{N_i^{\geom}}R_{N_j^{\geom}}
\end{equation}
This is a group element that appears in lemmas \ref{lm-geom-phase-ndeg} and \ref{lm-geom-phase-deg}.

Let us recall that ${\Hodgu}(N_i^{\geom}\wedge N_j^{\geom})$ is spacelike. We have (see appendix \ref{app:NN}):
\begin{itemize}
\item For $N_i^{can}=N_j^{can}$
\begin{equation}\label{eq:phasedifferenceSameN}
 R_{N_i^\G}R_{N_j^\G}=e^{2\T_j^{can}\T_i^{can'}\theta_{ij}\frac{N_i^{\geom}\wedge N_j^{\geom}}{|{\Hodgu}N_i^{\geom}\wedge N_j^{\geom}|}}.
\end{equation}
\item For $N_i^{can}\not= N_j^{can}$, (in Lorentzian signature)
\begin{equation}\label{eq:phasedifferenceLorentziandifferent}
 IR_{N_i^\G}R_{N_j^\G}=e^{2\T_j^{can}\T_i^{can}\theta_{ij}\frac{N_i^{\geom}\wedge N_j^{\geom}}{|\Hodgu N_i^{\geom}\wedge N_j^{\geom}|}+\pi{\Hodgu}\frac{N_i^{\geom}\wedge N_j^{\geom}}{|{\Hodgu}N_i^{\geom}\wedge N_j^{\geom}|}}.
\end{equation}
\end{itemize}
Moreover
\begin{equation}
 |B_{ij}^\G|^2=\rho_{ij}^2,
\end{equation}
so the rotation generated by either
\begin{equation}
 \frac{1}{\rho_{ij}}B_{ij}^\G,\quad\text{ or  }\quad
 \frac{1}{\rho_{ij}}{\Hodgp}B_{ij}^\G \text{ in flipped signature},
\end{equation}
has period $2\pi$.

From the geometric reconstruction theorem we know that
\begin{equation}
 B_{ij}^\G=-\frac{1}{Vol^{\geom}}r W_i^{\geom}W_j^{\geom}{\Hodgu}(N_j^{\geom}\wedge N_i^{\geom}),
\end{equation}
and the sine law shows that $\left|\frac{1}{Vol^{\geom}}r W_i^{\geom}W_j^{\geom}\right||{\Hodgu}N_j^{\geom}\wedge N_i^{\geom}|=\rho_{ij}$. So
\begin{equation}
 \frac{1}{\rho_{ij}}B_{ij}^\G=\sigma_{ij}\frac{1}{|{\Hodgu}N_i^{\geom}\wedge N_j^{\geom}|}{\Hodgu}(N_j^{\geom}\wedge N_i^{\geom}),
\end{equation}
where
\begin{equation}
 \sigma_{ij}=-r\ \text{sign}\ W_i^{\geom}W_j^{\geom}.
\end{equation}
Let us notice (see \eqref{eq:N-geom}) that
\begin{equation}
 \sigma_{ij}=-r \T^{can}_i\T^{can}_j=\left\{\begin{array}{ll}
                      -r & N_i^{can}=N_j^{can}\\
                      r & N_i^{can}\not=N_j^{can}
                     \end{array}\right.
\end{equation}
By comparing \eqref{eq:phasedifferenceSameN}, \eqref{eq:phasedifferenceLorentziandifferent} with \eqref{eq:IRRpd} and \eqref{eq:RRdeg} we get the following
\begin{itemize}
 \item For Lorentzian signature (non-degenerate solutions) and normals of the same type
 \begin{equation}
  2\phi_{ij}=0\text{ mod } 2\pi,\quad 2r_{ij}=2 r\theta_{ij}
 \end{equation}
\item For Lorentzian signature (non-degenerate solutions) and normals of different types
 \begin{equation}\label{eq:pi2}
  2\phi_{ij}=\pi\text{ mod } 2\pi,\quad 2r_{ij}=2r\theta_{ij}
 \end{equation}
\item For other signature solutions (degenerate solutions)
 \begin{equation}
  2\phi_{ij}=2r\theta_{ij}\text{ mod } 2\pi,\quad 2r_{ij}=0
 \end{equation}
\end{itemize}

We will now consider contributions of $\frac{\pi}{2}$ appearing in the action from \eqref{eq:pi2}.

\begin{lm}\label{lm:pi-two}
 The following holds for given boundary data:
 \begin{equation}
  \sum_{\substack{i<j\colon 2j_{ij}\text{odd}\\N_i^{can}\not=N_j^{can}}} s_{ij}\frac{\pi}{2}\in \pi\Z,\qquad
  \sum_{\substack{i<j\\ N_i^{can}\not=N_j^{can}}} s_{ij}2\j_{ij}\frac{\pi}{2}\in \pi\Z,
 \end{equation}
where $s_{ij}\in\{-1,1\}$ are arbitrary.
\end{lm}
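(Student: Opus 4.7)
The plan is to reduce both statements to a single parity count on an auxiliary graph. First I would observe that since $s_{ij}\frac{\pi}{2}=\pm\frac{\pi}{2}$, the first sum lies in $\pi\Z$ (for every choice of signs $s_{ij}$) if and only if the number $n$ of pairs $i<j$ with $N_i^{can}\not=N_j^{can}$ and $2\j_{ij}$ odd is even. For the second sum, rewrite $s_{ij}2\j_{ij}\frac{\pi}{2}=\pi\,s_{ij}\j_{ij}$; since $2\j_{ij}\in\Z$, flipping any $s_{ij}$ changes the sum by an integer multiple of $\pi$, so the condition that the sum lies in $\pi\Z$ is insensitive to the choice of signs and is equivalent to $\sum_{i<j,\,N_i^{can}\not=N_j^{can}}\j_{ij}\in\Z$. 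A spin $\j_{ij}$ fails to be an integer precisely when $2\j_{ij}$ is odd, so this too is equivalent to $n$ being even. Thus both statements reduce to the same parity claim.

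Next I would invoke the integrability condition stated earlier (Section on spin structures): for each node $i$, $\sum_{j\not=i}\j_{ij}\in\Z$, equivalently the number $\#\{j\not=i:2\j_{ij}\text{ odd}\}$ is even. Introducing the auxiliary graph $G$ on the five nodes $\{0,\ldots,4\}$ whose edges are exactly those pairs $ij$ with $2\j_{ij}$ odd, the integrability condition is precisely the statement that every vertex of $G$ has even degree.

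Finally I would partition the vertex set according to the type of canonical normal, $A=\{i:N_i^{can}=e_0\}$ and $B=\{i:N_i^{can}=e_3\}$, so that a pair $ij$ with $N_i^{can}\not=N_j^{can}$ is exactly an edge of $G$ crossing between $A$ and $B$. By the handshake argument applied to the subset $A$,
\begin{equation*}
 \sum_{i\in A}\deg_G(i)=2\cdot\#\{\text{edges of }G\text{ inside }A\}+\#\{\text{crossing edges of }G\},
\end{equation*}
and the left-hand side is even by the previous step. Hence the number of crossing edges of $G$ is even, which is exactly the parity $n\equiv 0\pmod 2$ and concludes both statements. The only slightly subtle point is the initial translation of the second sum to a sign-independent condition; after that the argument is a standard double-counting.
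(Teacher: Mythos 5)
Your proof is correct, and it takes a genuinely different combinatorial route from the paper. The paper also starts from the integrability condition $\sum_{j\not=i}\j_{ij}\in\Z$ (even valence of each node in the half-integer-spin subgraph), but then invokes the existence of Euler cycles in that subgraph: walking along a cycle, the number of changes of normal type between consecutive nodes is even, and since every half-integer edge is traversed exactly once this number equals the number of crossing edges, which settles the first sum; the second sum is then handled by adding the trivially $\pi\Z$-valued contribution of $s_{ij}2[\j_{ij}]\tfrac{\pi}{2}$, where $[\j_{ij}]$ is the integer part. You instead first reduce both sums, by the sign-independence observation, to a single parity statement (the number of crossing edges with $2\j_{ij}$ odd is even), and then prove that parity by a handshake count over the set $A=\{i:N_i^{can}=e_0\}$, $\sum_{i\in A}\deg_G(i)=2\#\{\text{edges inside }A\}+\#\{\text{crossing edges}\}$. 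Your double counting is arguably the more elementary argument: it does not need the existence of Euler cycles (which, strictly speaking, is per connected component — a point the paper glosses over, though it is harmless for the parity count), and it makes the role of the bipartition by normal type explicit. What the paper's choice buys is internal economy: the same Euler-cycle device is reused almost verbatim in the later deformation arguments (lemmas \ref{lm:def-nondeg} and \ref{lm:def-deg}), although those too could be rephrased as degree counts in your style.
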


\begin{proof}
Let us consider the sub-graph of the spin network consisting of those links that have half-integer spins. As this is a graph with all the vertices of even valence, there exists Euler cycles, i.e., cycles such that every link of half-integer spin belongs to the cycle exactly once. 

 Let us count number of changes of the type of normals from consecutive vertices on one of those cycles. As it is a cycle, the number is even. However as we go through all links exactly once this is the number of links $ij$ with $N_i^{can}\not=N_j^{can}$. Hence
 \begin{equation}
   \sum_{\substack{i<j\colon 2j_{ij}\text{odd}\\N_i^{can}\not=N_j^{can}}} s_{ij}\frac{\pi}{2}\in \pi\Z,
 \end{equation}
 as the sum is over even number of $\frac{\pi}{2}$.
 Let us denote by $[j]$ the largest integer smaller than $j$. Definitely
 \begin{equation}
  \sum_{\substack{i<j\\ N_i^{can}\not=N_j^{can}}} s_{ij}2[\j_{ij}]\frac{\pi}{2}\in \pi\Z.
 \end{equation}
Summing the two equalities, we obtain the desired result.
\end{proof}

As we determined $S$ modulo $\pi$ we can skip the $\frac{\pi}{2}$ terms coming from \eqref{eq:pi2} by lemma \ref{lm:pi-two} and write
\begin{equation}
 \Delta S=\Delta S^{\geom} \text{ mod } \pi i,
\end{equation}
where
\begin{equation}
\Delta S^{\geom}=ir\left\{\begin{array}{ll}
                  \sum_{i<j}\rho_{ij}\theta_{ij} & \text{ Lorentzian signature,}\\
                  \sum_{i<j}2\j_{ij}\theta_{ij} & \text{ other signatures.}
                 \end{array}\right.
\end{equation}

\subsection{Deformation argument to fix the remaining ambiguity}\label{sec:pi}

We only need to determine the remaining ambiguity of $\pi$ in the action. This however depends on the choice of $\SL$ lifts, and it needs to be done consistently for the whole $4$-simplex. Taking into account that the only source of ambiguity are the contributions $\phi_{ij}$ for half-integer spins, we have
\begin{equation}
 \Delta S-\Delta S^{\geom}=i\sum_{i<j\colon 2\j_{ij}\text{ odd}}\begin{cases}	
\phi_{ij}& \text{(non-deg.\ Lorentzian solutions)}\\
\phi_{ij}-r\theta_{ij}&\text{(two degenerate solutions)}
 \end{cases}
\end{equation}

\subsubsection{Lorentzian signature case}

\begin{lm}\label{lm:def-nondeg}
Suppose that we have non-degenerate geometric boundary data $v_{ij}^0$ and $N_i^{can0}$ with spins $\j_{ij}^0$ with non-degenerate
 $\SO(1,3)$ geometric solutions $\{G_i^0\}$. Let us assume that there exists a continuous path
 \begin{equation}
  G_i(t),\quad v_{ij}(t),\quad N_i^{can}(t)=N_i^{can0},
 \end{equation}
 such that:
 \begin{itemize}
  \item For all $i\not= j$
  \begin{equation}
   G_i^0=G_i(0),\quad v_{ij}^0=v_{ij}(0),
  \end{equation}
  \item for all $t\in[0,1]$, $\{G_i(t)\}$ is a $\SO(1,3)$ geometric solution for $v_{ij}(t)$ boundary data,
  \item for all $t\not=1$ the boundary data $v_{ij}(t)$ is non-degenerate,
  \item for all $i\not= j$ $v_{ij}(1)\not=0$,
  \item for all $t\not=1$ solution $\{G_i(t)\}$ is non-degenerate, and 
  \item for $t=1$ solution $\{G_i(t)\}$ and $\{\tilde{G}_i(t)=R_{e_\alpha} G_i(t) R_{N^{can}_i}\}$ are gauge equivalent.
 \end{itemize}
Then
\begin{equation}
 \Delta S^0=\Delta S^{\geom0}\text{ mod } 2\pi i.
\end{equation}
\end{lm}

\begin{proof}
 The function
 \begin{equation}
  f(t)=\sum_{i<j\colon 2\j_{ij}^0\text{ odd}}\phi_{ij}(t)\text{ mod } 2\pi
 \end{equation}
takes values in $\{0,\pi\}$  by lemma \ref{lm:pi-two} and is changing continuously if we compute differences between two solutions $\{G_i(t)\}$ and $\{\tilde{G}_i(t)=R_{e_\alpha} G_i(t) R_{N^{can}_i}\}$. Its value needs to be constant.  
We need to show $\lim_{t\rightarrow 1} f(t)=0$. As in the limit $\{G_i(1)\}$ and $\{\tilde{G}_i(1)\}$ are gauge equivalent
thus there exists $G\in \SO(1,3)$ such that $\tilde{G}_i=GG_i$ for all $i$.

Let us introduce lifts $g_i$ and $\tilde{g}_i$ and $g$ of these elements. There exists $s_i\in\{0,1\}$ such that
\begin{equation}
 \tilde{g}_i=(-1)^{s_i} gg_i,
\end{equation}
and then
\begin{equation}
 (-1)^{s_i+s_j}=g_i \tilde{g}_i^{-1}\tilde{g}_j g_j^{-1}=e^{\phi_{ij}\M_{ij}-ir_{ij}\M_{ij}}.
\end{equation}
Thus from $\M_{ij}\not=0$ it follows that $\phi_{ij}(1)=(s_i+s_j)\pi$ mod $2\pi$. We have
\begin{equation}
 \Delta S^0-\Delta S^{\geom0}=i\sum_{i<j\colon 2\j_{ij}^0\text{ odd}}(s_i+s_j)\pi\text{ mod }2\pi i
\end{equation}
We can consider an Euler cycle in the subgraph consisting of edges with odd spin. For such a cycle 
\begin{equation}
 \sum_{i<j\colon (ij)\in \text{cycle}}(s_i+s_j)\pi=\sum_{i\in \text{cycle}} 2s_i\pi=0\text{ mod } 2\pi.
\end{equation}
As the subgraph of half-integer spins has even-valent nodes, we can decompose it into Euler cycles thus $\Delta S^0=\Delta S^{\geom0}$ mod $2\pi i$.
\end{proof}

Let us deform our boundary data by deforming solution as follows: We choose a spacelike plane described by a simple, normalized bivector $V$ in generic position i.e.
\begin{equation}\label{eq:generic}
 \forall_{i\not=j} V\wedge {\Hodge}B_{ij}^\G\not=0.
\end{equation}
We now contract directions in ${\Hodge}V$ (perpendicular to directions in $V$). All the time $B_{ij}^{\{G(t)\}}\not=0$ and the solution $\{G_i(t)\}$ obtained by reconstruction from the bivectors is non-degenerate. Let us now consider the limit where we shrink $*V$ to zero. We will denote this time as $t=1$. Due to \eqref{eq:generic}, the limits $\lim_{t\rightarrow 1}B_{ij}^{\{G(t)\}}$ exist and are nonzero. The shrinking has a dual action on the geometric normals (their directions in $*V$ expands but we need to apply normalization).
As the geometric normal vectors $N_i^{\geom}(t)$ do not lie in the $V$ plane (see \eqref{eq:generic}), they also have a limit and it is equal to their normalized components lying in the plane $\Hodge V$. By suitable definition of $v_{ij}(t)$ we can assume that the limits $G_i(1)=\lim_{t\rightarrow 1} G_i^{\geom}(t)$ exist.
The $4$-simplex is now highly degenerate, contained in a $2d$ plane. All bivectors are proportional to $V$. 

So we have for any non-degenerate boundary data by lemma \ref{lm:def-nondeg}
\begin{equation}
\Delta S=\Delta S^{\geom} \text{ mod } 2\pi i.
\end{equation}

\subsubsection{The case of other signatures}

The difference $\Delta S$ is well-defined if we fix between which two degenerate points we need to compute the difference of the phase.

\begin{lm}\label{lm:def-deg}
Let us consider a set of non-degenerate geometric boundary data, with spins $\j_{ij}^0$ and all canonical normals $N_i^{can0}=N^{can}$.
Let us suppose that there are two non-equivalent vector geometries for this boundary data. We associate with them a non-degenerate $\SO({M^4}')$ geometric solution $\{G_i^0\}$ with boundary data $v_{ij}^0$. Let us suppose that there exists a continuous path
 \begin{equation}
  G_i(t),\quad v_{ij}(t),\quad N_i^{can}(t)=N^{can},
 \end{equation}
 such that 
 \begin{itemize}
  \item one has
  \begin{equation}
   G_i^0=G_i(0),\quad v_{ij}^0=v_{ij}(0),\quad k\in\{0,1\},
  \end{equation}
  \item for all $t\in[0,1]$, $\{G_i(t)\}$ is a $\SO({M^4}')$ geometric solution for $v_{ij}(t)$ boundary data,
  \item for all $t\in[0,1]$ the boundary data $v_{ij}(t)$ is non-degenerate,
  \item for $t\in[0,1)$ the $\SO({M^4}')$ geometric solution $\{G_i(t)\}$ is non-degenerate,
  \item $\{G_i(1)\}$ is a degenerate $\SO({M^4}')$ geometric solution.
 \end{itemize}
Then
\begin{equation}
\Delta S^0=\Delta S^{\geom0} \text{ mod } 2\pi i.
\end{equation}
\end{lm}

\begin{proof}
 The function
 \begin{equation}
  f(t)=\sum_{i<j\colon 2j_{ij}^0\text{ odd}}\phi_{ij}(t)-r\theta_{ij}(t)\text{ mod } 2\pi
 \end{equation}
takes values in $\{0,\pi\}$ and is changing continuously if we compute differences between two critical points determined by vector geometries $\{G^\pm_i(t)\}$ obtained from a $\SO({M^4}')$ geometric solution $\{G_i(t)\}$. Thus it is constant. For simplicity let us work in a suitable gauge, such that
\begin{equation}
 G_i^+(1)=G_i^{-}(1)\in \SO_+(1,3),\quad N_i^{\{G(1)\}}=(-1)^{r_i}N^{can},\quad r_i\in\{0,1\}.
\end{equation}
We have for the lifts  $g_i^+(1)=(-1)^{s_i} g_i^{-}(1)$. Similar considerations using Euler cycles as for Lorentzian signature show that
\begin{equation}
 \sum_{i<j\colon 2\j_{ij}^0\text{ odd}}\phi_{ij}(1)=0\text{ mod } 2\pi.
\end{equation}
Also, as
\begin{equation}
 \theta_{ij}(1)=(r_i+r_j)\pi \text{ mod } 2\pi, 
\end{equation}
an argument with an Euler cycle shows that
\begin{equation}
 \sum_{i<j\colon 2\j_{ij}^0\text{ odd}}\theta_{ij}(1)=0\text{ mod } 2\pi.
\end{equation}
Thus $f(1)=0$ and
\begin{equation}
\Delta S^0-\Delta S^{\geom0}=if(0)=if(1)=0 \text{ mod } 2\pi i.
\end{equation}
\end{proof}

Let us deform the boundary data as follows: We choose $N$ of the same type as $N^{can}$, in generic position, i.e., such that
\begin{equation}
 \forall_{ij}\ N\wedge B_{ij}^\G\not=0.
\end{equation}
We contract in the direction of $N$ in ${M^4}'$. During contraction we have a continuous path of non-degenerate $\SO({M^4}')$ geometric solutions (with non-degenerate boundary data). At the end we obtain a degenerate $4$-simplex for non-degenerate boundary data. By lemma \ref{lm:def-deg} we have
\begin{equation}
\Delta S=\Delta S^{\geom}\text{ mod } 2\pi i.
\end{equation}

\subsection{Summary}\label{sub:summary}

Let us now choose all outer pointing normals $N_i^{\geom}$.
We obtained the following formula for the difference of the phase between two critical points
\begin{equation}
\Delta S=ir\sum_{i<j}\rho_{ij}\theta_{ij}
\left\{\begin{array}{ll}
           1 & \text{reconstructed simplex is Lorentzian,}\\
          \gamma^{-1}& \text{reconstructed simplex in other signature,}
          \end{array}\right.
\end{equation}
where  $r$ is the Plebanski orientation (definition \ref{df:r}), and $\theta_{ij}$ are generalized dihedral angles (see \ref{sec:da}) to be reconstruct as follows:
\begin{equation}
 R_{N_i^{\geom}}R_{N_j^{\geom}}=O e^{\frac{2 \T_j^{can}\T_i^{can}\theta_{ij}}{|\Hodgu N_j^{\geom}\wedge N_i^{\geom}|}N_i^{\geom}\wedge N_j^{\geom}}
\end{equation}
where
\begin{equation}
 O=\left\{\begin{array}{ll}
           \text{ inversion in the plane } N_i^{\geom}\wedge N_j^{\geom} & \text{ for }N_i^{can}\not=N_j^{can},\\
           \text{ identity } & \text{ for } N_i^{can}=N_j^{can}.
          \end{array}\right.
\end{equation}
\section{Computation of the Hessian}

In this chapter we will finish our analysis by computing the scaling property of the measure factor in the formal application of stationary phase approximation. We assume that after taking into account gauge transformations the remaining Hessian is non-degenerate. The contribution to the stationary phase approximation of the amplitude $A$ from the Hessian is
\begin{equation}
 \sqrt{\det\mathcal H}^{-1/2}.
\end{equation}
The integrand depends on the following variables
\begin{itemize}
 \item $\z_{ij}$ for $i\not=j$, giving $80$ real variables,
 \item $g_i\in \SL$, giving $30$ real variables. 
\end{itemize}
But these variables are subject to gauge transformations, which reduces the effective number of nontrivial variables: 
\begin{itemize}
 \item $\z_{ij}\rightarrow \lambda_{ij}\z_{ij}$, giving $40$ real gauge parameters,
 \item $\SL$ gauge gives $6$ real gauge parameters.
\end{itemize}
This gives $64$ nontrivial variables, and the scaling
\begin{equation}
 \sqrt{\det\mathcal H}^{-1/2}=C\Lambda^{-32}.
\end{equation}
Together with the scaling $\Lambda^{20}$ (see \eqref{eq:c-scaling}) of $c(\Lambda)$ this gives the scaling of the measure factor,
\begin{equation}
 \Lambda^{-32}\Lambda^{20}=\Lambda^{-12}.
\end{equation}
Let us notice that $\mathcal H$ does not depend on the choice of the phase of the coherent states (phase of $\n_{ij}$). Its determinant can be expressed by geometric quantities like $v_{ij}$ and $N_i^{\G}$, $B_{ij}^\G$ in a covariant way. It is thus a geometric quantity depending on lengths and orientations.
\section{Outlook}
In this paper we performed the complete analysis of the critical points in the extended EPRL setting. 
Besides the important open questions discussed in section \ref{sec:conjectures}, there are several points which merit further analysis:
 \begin{itemize}
 \item Extension of our results to the case of surfaces of mixed signature. A spin foam model for surfaces of this kind was introduced in \cite{Conrady2010b} based on coherent state techniques. There is no known EPRL-like construction. We suspect that there might be some nongeometric contributions to the asymptotic formula in this case.
 \item The determinant of the Hessian is a complex number, thus it contributes to the phase. It is an important part of the asymptotics and can be described in geometric terms if the boundary state is geometric \cite{PR, Roberts, Kaminski2013}. The spread of the coherent states has no obvious geometric meaning but it influences the measure factors. In the Euclidean EPRL model the values of the Hessian at two different critical points are equal thus, their phase can be cancelled by proper choice of the phase of coherent states. It is tempting to conjecture that similar statement is true for the Lorentzian models. However nothing of this kind was proven even in the standard EPRL setup.
 \item A determination of the absolute value of the determinant of the Hessian is interesting for the following reason. In order to evaluate the spin foam amplitudes of more complicated triangulations, one needs to take a product over many vertex amplitudes. In the semiclassical limit one can hope that one can replace the vertex amplitudes in the product by their asymptotic forms.  In such a situation, the phase is exactly the Regge action \cite{Regge} of discrete gravity (with some problems regarding orientations \cite{Engle2015}) and the measure of the path integral is obtained from the absolute value of the determinant of the Hessian. However, it seems that in the case of current spin foam models the amplitude of the whole foam cannot be obtained by this approximation \cite{Hellmann2013, Hellmann-Kaminskishort} (but see also \cite{Han2013} for possible resolutions).
 \item Another open problem is the extension of our result to the case of non-vanishing cosmological constant. The asymptotic analysis of a corresponding version of the EPRL model was given in \cite{Haggard2014,Haggard-tetrahedra}. However, moving away from Euclidean signature even the formulation of the model becomes very formal. There is no satisfactory proposal for the intertwiners representing timelike tetrahedra for the situation with cosmological constant (see however \cite{Han2010, Fairbairn2010} for possible, alternative rigorous deformations). 
\end{itemize}

\section*{Acknowledgements}
We would like to thank Bianca Dittrich, Sebastian Steinhaus and Hal Haggard for interactions at the early stage of this work. We would like to express our gratitude for Jeff Hnybida for fruitful discussions about the topic. We also thank the anonymous referees for their valuable comments on a previous version of this work. This article is partially based upon the work from COST Action MP1405 QSPACE, supported by COST (European Cooperation in Science and Technology). WK thanks the Institute for Quantum Gravity at the Friedrich-Alexander Universit\"at Erlangen-N\"urnberg (FAU) for hospitality. This work was also partially supported by the Polish National Science Centre grant No. 2011/02/A/ST2/00300. 
\appendix
\section*{Appendices}
\addcontentsline{toc}{section}{Appendices}
\renewcommand{\thesubsection}{\Alph{subsection}}

\subsection{Notation summary}

Notation summary 
\begin{itemize}
 \item $i,j,k$ nodes (tetrahedra) numbers (indices),
 \item spacetime indices $\mu,\nu$,
 \item $\v,\u,\r,\z$ spinors, $\z_{ij}$ spinors labelled by pairs of indices (tetrahedra), $\n_{ij}$ boundary data spinors,
 \item $\B,\M$  traceless matrices $sl(2,\C)$, $\B_{ij}$ traceless matrices in $sl(2,\C)$ labelled by two tetrahedra,
 \item $\eta_N=N^\mu\sigma_\mu$, $\hat{\eta}_N=N^\mu\hat{\sigma}_\mu$,
 \item $\langle \u,\v\rangle_N=\u^\dagger \eta_N^T\v$ Hermitian scalar product defined by normal $N$,
 \item $[\u,\v]=\u^T\omega \v$,
 \item $\omega$ symplectic form for spinors,
 \item $N$ normal vectors, $N_i^\geom$ outer pointing normal vector to $i$-th tetrahedron, $N_i^{can}$ canonical normal vector either $e_0=(1,0,0,0)$ or $e_3=(0,0,0,1)$, $N_i^\G$ normal vector to $i$-th tetrahedron obtained from geometric solution,
 \item $v,l$ vectors ($l$ null vector), $v_{ij}$ boundary data vectors,
 \item $\cdot$ scalar product,
 \item $R_N$ reflection with respect to normal $N$,
 \item $I$ inversion,
 \item $g\in \SL$ group elements, 
 \item $B_{ij}$ bivectors, $B_{ij}^{\G}$ bivectors from the geometric solution, $B_{ij}^{\geom}$ geometric bivectors from the simplex,
 \item $M^4$ Minkowski spacetime,
 \item ${M^4}'$ spacetime with flipped $N^{can}$ (split or euclidean signature),
 \item $r$ Plebanski orientation (relating orientation of $M^4$ or ${M^4}'$ to the combinatoric orientation of the simplex),
 \item $G\in \OO(M^4)$ or $\OO({M^4}')$ ($\SO(M^4)$, $\SO({M^4}')$),
 \item we denote $\SO_+(M^4)=\SO_+(1,3)$ connected component of $\SO(M^4)=\SO(1,3)$,
 \item $V$ space perpendicular to $N^{can}$ in Minkowski (also embeddable in ${M^4}'$),
 \item $\Phi^\pm$ maps from bivectors in ${M^4}'$ into $V$ (self-dual and anti-self-dual forms), $\PHi^\pm$ corresponding maps from $\SO({M^4}')$ into $\SO(V)$,
 \item Hodge star ${\Hodge}$,
 \item if we are working explicitly in the flipped spacetime then we use $\Hodgp$, $\cdop$ and for contraction with use of the metric $\llcornep$ and $\lrcornep$,
 \item if we are working  in arbitrary signature spacetime then we use $\Hodgu$, $\cdou$, $\llcorneu$ and $\lrcorneu$,
 \item representation labels ($\jj$, $\rho$) for $\SL$ group, $\j_{ij}$ and $\rho_{ij}$ representation labels for edge connecting tetrahedron $i$ with $j$,
 \item $\lambda$ number,
 \item $\C^*$ invertible complex numbers, $\R^*$ invertible real numbers
 \item $\Lambda$ integer scaling of spins,
 \item $\theta_{ij}$ dihedral angle between tetrahedron $i$ and $j$,
 \item $A_{ij}$ area of the face between tetrahedron $i$ and $j$,
 \item $S$ action, $S_{ij}$, $S_{ij}^\beta$ parts of the action.
\end{itemize}

\subsection{Conventions}
\label{sec:conventions}

We are using abstract definition of Grassmann algebra $\bigwedge X$ and $\wedge$ by its universal properties. For $b\in X^*$ we define left ($\llcorner$) and right ($\lrcorner$) antiderivatives of order $-1$ 
\begin{equation}
 b\llcorner w,\quad w\lrcorner b,\quad w\in \bigwedge X
\end{equation}
by its action on $X\subset \bigwedge X$
\begin{equation}
 b\llcorner x= x\lrcorner b=(b,x),\quad x\in X
\end{equation}
Suppose that we have a metric $\underline{g}_{\mu\nu}$.
The norm of $k$ vectors is given by
\begin{equation}
 a_1\wedge \ldots \wedge a_k\cdou b_1\wedge \ldots \wedge b_k=\sum_{\sigma\in S_k} (-1)^{\sgn\sigma} \prod_{i=1}^k a_i\cdou b_{\sigma(i)}
\end{equation}
The Hodge dual is defined by
\begin{equation}
 \Hodgu(a_1\wedge \ldots \wedge a_k)=a_k\llcorneu a_{k-1}\llcorneu \cdots a_1\llcorneu \Omega
\end{equation}
where $\Omega$ is the normalized volume $n$-vector (choice of the orientation) and $\llcorneu$ is the contraction with dualized vector by the scalar product.

\subsection{Restriction of representations of $\SL$}\label{app:decomp}

In this appendix we will collect results of \cite{Bargmann} and \cite{Conrady2010} about restriction of the irreducible unitary representations of $\SL$ to the subgroup $\St(N^{can})$. We introduce spinors
\begin{equation}
 \n_0=\bvec{1}{0},\quad \n_1=\bvec{0}{1}
\end{equation}
Let us consider generator of rotation around $z$ axis. We can introduce bases of eigenfunction in the representations of $\SU(1,1)$ and $\SU(2)$.
On the groups $\St(N^{can})$ we consider right regular representations. 

\subsubsection{The embedding map in the case of $\St(N^{can})=\SU(2)$}

In this case following \cite{EPRL} we consider embedding of the spin $\j$ representation. We can realize this representation as functions on $\SU(2)$ given by matrix elements of the representation in $L_z$ eigenbasis. The embedding of the functions
$$
\Psi_{\j m}(u)=\sqrt{2\j+1}D^{\j}_{\j m}(u),\quad u\in \SU(2)
$$
into the representation space of unitary irreducible representation $(\j,\rho=2\gamma \j)$ is given by
$$
F_{\j m}(\z)=\frac{1}{\sqrt{\pi}}\braket{\z}{\z}_{{N^{can}}}^{\iu \frac{\rho}{2}-1} \Psi_{\j m}(u(\z)),
$$
where ${N^{can}}=e_0$ and
$$
u(\z)=\frac{1}{\sqrt{\braket{\z}{\z}_{{N^{can}}}}}
\begin{pmatrix}
z_0& z_1\\ -\overline{z}_1 & \overline{z}_0
\end{pmatrix},\quad \z=\bvec{z_0}{z_1}
$$
and 
$$
\braket{\u}{\v}_{{N^{can}}}=\overline{u}_0 v_0 + \overline{u}_1 v_1.
$$
Using the explicit expression for the representation matrices of the $\SU(2)$ group we write $F$ explicitly: 
$$
F_{\j m}(\z)=\sqrt{\frac{\Gamma(2\j+2)}{2\pi\Gamma(\j+m+1) \Gamma(\j-m+1)}}\braket{\z}{\z}_{N^{can}}^{\iu \frac{\rho}{2}-1-\j}  \braket{\n_0}{\z}_{N^{can}}^{\j+m} \braket{\n_1}{\z}_{N^{can}}^{\j-m},
$$

\subsubsection{The embedding map in the case of $\St({N^{can}})=\SU(1,1)$}

Following Hnybida and Conrady \cite{Conrady2010, Conrady2010a} we consider embeddings of the basis functions of the $\pm$ discrete series unitary irreducible representations of spin $\j$ of the $\SU(1,1)$ group. Again we can realize them as a functions on the group
\begin{align}
\Psi^+_{\j m}(v)&=\sqrt{2\j-1} D^{+,\j}_{\j m}(v),\quad m\geq \j\\
\Psi^-_{\j m}(v)&=\sqrt{2\j-1} D^{-,\j}_{-\j m}(v),\quad m\leq -\j
\end{align}
Their image in the representation space of unitary irreducible representation $(\j,\rho=2\gamma \j)$ is given by
$$
F^{\tau}_{jm}(\z)=\frac{1}{\sqrt{\pi}}\theta(\tau\braket{\z}{\z}_{{N^{can}}})\braket{\z}{\z}_{{N^{can}}}^{\iu \frac{\rho}{2}-1} \Psi^\tau_{\j m}(v^\tau(\z)),
$$
where $\tau=\pm 1$, ${N^{can}}=e_3$,
\begin{align}
v^+(\z)&=\frac{1}{\sqrt{\braket{\z}{\z}}_{{N^{can}}}}\begin{pmatrix}z_0& z_1\\ \overline{z}_1 & \overline{z}_0\end{pmatrix},\quad \z=\bvec{z_0}{z_1} \\
v^-(\z)&=\frac{1}{\sqrt{-\braket{\z}{\z}_{{N^{can}}}}}\begin{pmatrix}\overline{z}_1& \overline{z}_0\\ z_0 & z_1\end{pmatrix},
\end{align}
and
$$
\braket{\u}{\v}_{{N^{can}}}:= \overline{u}_0 v_0 - \overline{u}_1 v_1
$$
is the $\SU(1,1)$-invariant scalar product. 
Using the explicit expression for the representation matrices of the $\SU(1,1)$ group \cite{Bargmann} we write the distributions $F^\pm$ explicitly: 
\begin{align}
 F^{\tau}_{\j m}(\z)=&\sqrt{\frac{\Gamma(\tau m+\j)}{\pi\Gamma(2\j-1) \Gamma(\tau m-\j+1)}}\theta(\tau\braket{\z}{\z}_{N^{can}})\nonumber\\
 &\braket{\z}{\z}_{N^{can}}^{\iu \frac{\rho}{2}-1+\j}  \braket{\z}{\n_0}_{N^{can}}^{-\j-m} (\tau \braket{\z}{\n_1}_{N^{can}})^{-\j+m},
\end{align}

\subsubsection{Coherent states}

All three families of representations have certain common feature. In the $L_z$ eigenbasis
\begin{align}
\begin{array}{lll}
 \Psi_{\j m},& m\in\{-\j,-\j+1,\ldots \j\} & \text{ for } \SU(2)\\
 \Psi^+_{\j m},& m\in\{\j,\j+1,\ldots \} & \text{ for } \SU(1,1)\\
 \Psi^-_{\j m},& m\in\{\ldots,-\j-1,-\j\} & \text{ for } \SU(1,1)
\end{array}
\end{align}
we can consider extremal eigenfunctions. These are basic coherent states \cite{Perelomov}. They usefulness for asymptotic analysis comes from their simple form (see \cite{Bargmann, Conrady2010a})
\begin{itemize}
 \item For ${N^{can}}=e_0$
 \begin{equation}
 F_{\j\j}(\z)=\sqrt{\frac{2\j+1}{2\pi}} \braket{\z}{\z}_{{N^{can}}}^{\iu \frac{\rho}{2}-1-\j} \braket{\n_0}{\z}_{{N^{can}}}^{2\j}  
 \end{equation}
 \item For ${N^{can}}=e_3$
 \begin{align}
 F^{+}_{\j\j}(\z)&=\sqrt{\frac{2\j-1}{\pi}}\theta(\braket{\z}{\z}_{{N^{can}}}) \braket{\z}{\z}_{{N^{can}}}^{\iu \frac{\rho}{2}-1+\j} \braket{\z}{\n_0}_{{N^{can}}}^{-2\j},\\
F^{-}_{\j-\j}(\z)&=\sqrt{\frac{2\j-1}{\pi}}\theta(-\braket{\z}{\z}_{{N^{can}}})(-\braket{\z}{\z}_{{N^{can}}})^{\iu \frac{\rho}{2}-1+\j} (-\braket{\z}{\n_1}_{{N^{can}}})^{-2\j}
 \end{align}
\end{itemize}
where $\n_0=\bvec{1}{0}$ and $\n_1=\bvec{0}{1}$.

All other coherent states are obtained through transformation of these basic coherent states by group action of $\St({N^{can}})$. Let us notice that the group action preserves $\braket{\cdot}{\cdot}_{N^{can}}$ and we can move group elements from $\z$ into $\n_i$ obtaining
\begin{align}
 \Psi^{\n}(\z)&=F_{\j\j}(u^T\z),&\quad \n=(u^T)^{-1}\n_0,\\
 \Psi^{+,\n^+}(\z)&=F^+_{\j\j}(u^T\z),&\quad \n^+=(u^T)^{-1}\n_0,\\
 \Psi^{-,\n^-}(\z)&=F^-_{\j-\j}(u^T\z),&\quad \n^-=(u^T)^{-1}\n_1
\end{align}
We have the following classification:
\begin{itemize}
 \item Every spinor $\n_+$ such that $\langle \n_+,\n_+\rangle_{{N^{can}}}=1$ with ${N^{can}}=e_3$ is obtained as $\n_+=(u^T)^{-1}\n_0$ for $u\in \SU(1,1)$
 \item Every spinor $\n_-$ such that $\langle \n_-,\n_-\rangle_{{N^{can}}}=-1$ with ${N^{can}}=e_3$ is obtained as $\n_-=(u^T)^{-1}\n_1$ for $u\in \SU(1,1)$
 \item Every spinor $\n$ such that $\langle \n,\n\rangle_{{N^{can}}}=1$ with ${N^{can}}=e_0$ is obtained as $\n=(u^T)^{-1}\n_0$ for $u\in \SU(2)$
\end{itemize}
This allow us to write coherent states as follows
\begin{align}
\Psi^{+,\n^+}(\z)&=\theta(\braket{\z}{\z}_{{N^{can}}})\sqrt{\frac{2\j-1}{\pi}}\braket{\z}{\z}_{{N^{can}}}^{\iu \frac{\rho}{2}-1+\j}\braket{\z}{\n^+}_{{N^{can}}}^{-2\j},\\
\Psi^{-,\n^-}(\z)&=\theta(-\braket{\z}{\z}_{{N^{can}}})\sqrt{\frac{2\j-1}{\pi}}(-\braket{\z}{\z}_{{N^{can}}})^{\iu \frac{\rho}{2}-1+\j}(-\braket{\z}{\n^-}_{{N^{can}}})^{-2j}. 
\end{align}
with ${N^{can}}=e_3$ and arbitrary
\begin{equation}
 \langle \n^+,\n^+\rangle_{{N^{can}}}=1,\quad \langle \n^-,\n^-\rangle_{{N^{can}}}=-1
\end{equation}
and for $\SU(2)$ embedding
$$
\Psi^{\n}(\z)=\sqrt{\frac{2\j+1}{2\pi}}\braket{\z}{\z}_{{N^{can}}}^{\iu \frac{\rho}{2}-1-\j}\braket{\n}{\z}_{{N^{can}}}^{2\j}.
$$
with ${N^{can}}=e_0$ and arbitrary $\langle \n,\n\rangle_{{N^{can}}}=1$.

\subsection{Traceless matrices}\label{app:traceless}

In this section we will describe relation between traceless matrices in two complex dimensions and spinors.
Let us assume that $\delta g$ is traceless then
\begin{equation}
 \omega \delta g^T+\delta g\omega=0
\end{equation}
thus for spinors $\u$ and $\v$
\begin{equation}
 [\u,\delta g^T \v]=\u^T\omega \delta g^T \v=-\v^T\delta g \omega \u=\v^T\omega \delta g^T \u=[\v,\delta g^T \u]
\end{equation}
where we transposed the whole formula in the middle equality.

\begin{lm}\label{lm:id2}
 Let us assume that $[\u,\v]=1$ then
 \begin{equation}
  (\v\u^T-\u\v^T)\omega ={\mathbb I}
 \end{equation}
\end{lm}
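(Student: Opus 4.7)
The plan is to use that $\u$ and $\v$ form a basis of $\mathbb{C}^2$ (since $[\u,\v]=1$ implies linear independence), and to check the matrix identity by evaluating both sides on this basis. This is cleaner than expanding in components.

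First I would rewrite the left-hand side applied to an arbitrary spinor $\mathbf{w}$ in terms of the symplectic bracket. Since $\u^T\omega\mathbf{w}=[\u,\mathbf{w}]$ and $\v^T\omega\mathbf{w}=[\v,\mathbf{w}]$ are scalars, we get
\begin{equation}
(\v\u^T-\u\v^T)\omega\,\mathbf{w}=[\u,\mathbf{w}]\,\v-[\v,\mathbf{w}]\,\u.
\end{equation}

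Next I would specialise $\mathbf{w}=\u$ and $\mathbf{w}=\v$. Using the antisymmetry of $[\cdot,\cdot]$ together with $[\u,\v]=1$, evaluation on $\u$ gives $[\u,\u]\v-[\v,\u]\u=0-(-1)\u=\u$, and evaluation on $\v$ gives $[\u,\v]\v-[\v,\v]\u=\v-0=\v$. Hence the matrix $(\v\u^T-\u\v^T)\omega$ agrees with $\mathbb{I}$ on a basis, so the two matrices are equal.

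There is no real obstacle here; the only thing to be careful about is the sign conventions $\omega^T=-\omega$ and $[\u,\v]=-[\v,\u]$, which are the sole source of a possible sign error. The argument is essentially one short line once the bracket reformulation is made, so no further structural discussion is needed.
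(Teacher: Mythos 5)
Your proof is correct and follows exactly the paper's own argument: both evaluate $(\v\u^T-\u\v^T)\omega$ on the basis $\{\u,\v\}$ using $[\u,\v]=1$ and antisymmetry, and conclude by linear independence. No further comment is needed.
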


\begin{proof}
 We apply left hand side to $\v$ and $\u$
 \begin{align}
  (\v\u^T-\u\v^T)\omega \v=\v\underbrace{\u^T\omega \v}_{=1}- \u\underbrace{\v^T\omega \v}_{=0}=\v\\
  (\v\u^T-\u\v^T)\omega \u=\v\underbrace{\u^T\omega \u}_{=0}- \u\underbrace{\v^T\omega \u}_{=-1}=\u
 \end{align}
so as $\u$ and $\v$  are independent ($[u,v]\not=0$) and span the whole space of spinors we show that it is identity operator.
\end{proof}

We can write ($\delta g$ traceless $[\u,\v]=1$)
\begin{equation}\label{eq:re2}
[\u,\delta g^T\v]= \tr \v\u^T\omega \delta g^T=\tr \left(\v\u^T\omega -\frac{1}{2}{\mathbb I}\right)\delta g^T=\frac{1}{2}\tr (\v\u^T+\u\v^T)\omega \delta g^T
\end{equation}
In the last expression $\frac{1}{2}(\v\u^T+\u\v^T)\omega$ is traceless.

\begin{lm}\label{lm:sigma}
 Let us assume that $[\u,\v]=1$ then
 \begin{equation}
  (\v\u^T-\u\v^T)\omega ={\mathbb I}
 \end{equation}
Furthermore the matrix
 \begin{equation}
 \M=(\v\u^T+\u\v^T)\omega
\end{equation}
is traceless and
\begin{equation}
 \M\v=\v,\quad \M\u=-\u
\end{equation}
are two eigenvectors. Every traceless matrix with eigenvalues $\pm 1$ is of this form.
\end{lm}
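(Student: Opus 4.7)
The plan is to unpack the four claims by direct $2\times 2$ matrix manipulations using the symplectic bracket $[\u,\v]=\u^T\omega\v$, together with a short linear-algebra argument for the converse.

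First I would establish the identity $(\v\u^T-\u\v^T)\omega=\mathbb{I}$. The matrix $\v\u^T-\u\v^T$ is antisymmetric, so in dimension two it must be proportional to $\omega$. Computing one entry, e.g.\ $(\v\u^T-\u\v^T)_{01}=v_0u_1-u_0v_1=-[\u,\v]=-1$, fixes the coefficient and yields $\v\u^T-\u\v^T=-\omega$. Multiplying on the right by $\omega$ and using $\omega^2=-\mathbb{I}$ from \eqref{eq:symp} gives $\mathbb{I}$, as claimed.

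Next, for $\M=(\v\u^T+\u\v^T)\omega$, I would verify the eigenvalue relations by direct evaluation:
\begin{equation}
\M\v=\v\,(\u^T\omega\v)+\u\,(\v^T\omega\v)=[\u,\v]\,\v+0=\v,
\end{equation}
where $\v^T\omega\v=0$ by antisymmetry of $\omega$ and $[\u,\v]=1$ by hypothesis. The computation for $\M\u=-\u$ is analogous, using $[\v,\u]=-[\u,\v]=-1$. Tracelessness follows either immediately from the fact that the trace equals the sum of eigenvalues on the basis $\{\u,\v\}$, or from the cyclic identities $\tr(\v\u^T\omega)=\u^T\omega\v=[\u,\v]$ and $\tr(\u\v^T\omega)=\v^T\omega\u=[\v,\u]$, which cancel.

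For the converse, let $\M'$ be any traceless $2\times 2$ matrix with eigenvalues $\pm 1$. Choose eigenvectors $\v$ (for $+1$) and $\u_0$ (for $-1$); they are linearly independent because they belong to distinct eigenvalues, hence $[\u_0,\v]\neq 0$ (otherwise they would be proportional). Rescaling $\u:=\u_0/[\u_0,\v]$ preserves the eigenvector property and enforces $[\u,\v]=1$. The matrix $\M=(\v\u^T+\u\v^T)\omega$ built from this pair then satisfies $\M\v=\v$ and $\M\u=-\u$ by the previous step, so $\M$ and $\M'$ agree on a basis and therefore coincide. The only real care needed is bookkeeping of signs in $\omega$ and in the bracket; no genuine obstacle arises.
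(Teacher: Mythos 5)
Your proof is correct and follows essentially the same elementary route as the paper's (Appendix lemmas \ref{lm:id2} and \ref{lm:sigma2}): direct evaluation of $\M$ on the basis $\{\u,\v\}$, the trace computation via $[\u,\v]+[\v,\u]=0$, and uniqueness of a matrix determined by its action on two independent eigenvectors. The only cosmetic differences are that you derive the first identity from antisymmetry plus one matrix entry rather than by acting on the basis, and that you spell out the rescaling $\u_0\mapsto\u_0/[\u_0,\v]$ in the converse, which the paper leaves implicit.
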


\begin{proof}
 We can compute
 \begin{equation}
  \tr(\v\u^T+\u\v^T)\omega=[\u,\v]+[\v,\u]=0
 \end{equation}
and
\begin{align}
  (\v\u^T+\u\v^T)\omega \v=\v\underbrace{\u^T\omega \v}_{=1}+ \u\underbrace{\v^T\omega \v}_{=0}=\v\\
  (\v\u^T+\u\v^T)\omega \u=\v\underbrace{\u^T\omega \u}_{=0}+ \u\underbrace{\v^T\omega \u}_{=-1}=-\u
 \end{align}
Matrix $\M$ with eigenvectors $\u$ and $\v$ such that $\M\u=-\u$ and $\M\v=\v$ is of this form.
\end{proof}

We can write (for $\delta g$ traceless and $[\u,\v]=1$)
\begin{equation}\label{eq:re}
[\u,\delta g^T\v]=\frac{1}{2}\tr (\v\u^T+\u\v^T)\omega \delta g^T
\end{equation}

\begin{lm}\label{lm:comp}
 Suppose that spinors $\u,\v,\tilde{\u},\tilde{\v}$ satisfy
 \begin{equation}
  [\u,\v]=1,\quad [\tilde{\u},\tilde{\v}]=1
 \end{equation}
and
\begin{itemize}
 \item for all traceless matrices $\delta g$
\begin{equation}
\tr (\v\u^T+\u\v^T)\omega \delta g^T=\tr (\tilde{\v}\tilde{\u}^T+\tilde{\u}\tilde{\v}^T)\omega \delta g^T
\end{equation}
then there exists $\lambda\in \C^*$ such that
\begin{equation}
 \tilde{\u}=\lambda \u\quad \tilde{\v}=\lambda^{-1} \v
\end{equation}
\item and for all traceless matrices $\delta g$
\begin{equation}
 \tr (\v\u^T+\u\v^T)\omega \delta g^T=-\tr (\tilde{\v}\tilde{\u}^T+\tilde{\u}\tilde{\v}^T)\omega \delta g^T
\end{equation}
then there exists $\lambda\in \C^*$ such that
\begin{equation}
 \tilde{\u}=\lambda \v\quad \tilde{\v}=-\lambda^{-1} \u
\end{equation}
\end{itemize}
\end{lm}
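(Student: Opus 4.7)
The plan is to convert the pairing hypothesis into a statement about the matrices themselves via non-degeneracy of the trace form, and then read off the relation between the spinors from the eigenvector characterization in Lemma \ref{lm:sigma}.

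First I would set $\M=(\v\u^T+\u\v^T)\omega$ and $\tilde{\M}=(\tilde{\v}\tilde{\u}^T+\tilde{\u}\tilde{\v}^T)\omega$. By Lemma \ref{lm:sigma} both are traceless, so $\M\mp\tilde{\M}$ is traceless in either case of the lemma. The trace pairing $(A,B)\mapsto \tr(AB^T)$ is non-degenerate on $2\times 2$ matrices; its restriction to the traceless subspace is still non-degenerate since the orthogonal complement of the traceless matrices is $\C\cdot{\mathbb I}$, and the only traceless multiple of the identity is zero. Hence the hypothesis $\tr((\M\mp\tilde{\M})\delta g^T)=0$ for all traceless $\delta g$ implies $\tilde{\M}=\pm\M$.

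Next I would use the eigenvector analysis of Lemma \ref{lm:sigma}. In case 1, $\tilde{\M}=\M$, so $\tilde{\M}$ has the same $(+1)$-eigenline $\C\v$ and $(-1)$-eigenline $\C\u$ as $\M$; since Lemma \ref{lm:sigma} identifies $\tilde{\v}$ and $\tilde{\u}$ as eigenvectors of $\tilde{\M}$ with eigenvalues $+1$ and $-1$ respectively, there must exist $\lambda,\mu\in\C^*$ with $\tilde{\v}=\lambda^{-1}\v$ and $\tilde{\u}=\mu\u$. Here I parameterized so that the final answer matches the statement. Imposing the normalization $[\tilde{\u},\tilde{\v}]=\mu\lambda^{-1}[\u,\v]=\mu\lambda^{-1}=1$ forces $\mu=\lambda$, giving $\tilde{\u}=\lambda\u$ and $\tilde{\v}=\lambda^{-1}\v$. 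In case 2, $\tilde{\M}=-\M$, so the $(+1)$-eigenline of $\tilde{\M}$ is $\C\u$ and the $(-1)$-eigenline is $\C\v$; hence $\tilde{\v}=\lambda_1\u$ and $\tilde{\u}=\lambda_2\v$ for some $\lambda_1,\lambda_2\in\C^*$, and the normalization $[\tilde{\u},\tilde{\v}]=\lambda_1\lambda_2[\v,\u]=-\lambda_1\lambda_2=1$ yields $\lambda_1=-\lambda_2^{-1}$. Setting $\lambda=\lambda_2$ gives $\tilde{\u}=\lambda\v$ and $\tilde{\v}=-\lambda^{-1}\u$, as claimed.

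The only non-routine ingredient is the non-degeneracy of the trace pairing on traceless matrices, which I have sketched above; everything else is direct bookkeeping using Lemma \ref{lm:sigma}. I do not anticipate a real obstacle: the argument is essentially ``same matrix $\Rightarrow$ same eigenlines $\Rightarrow$ same spinors up to scaling, with scalings tied together by the symplectic normalization.''
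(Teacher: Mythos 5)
Your proof is correct and takes essentially the same route as the paper's own argument (Lemmas \ref{lm:sigma2}, \ref{lm:comp2} and \ref{lm:comp22} in the appendix): tracelessness of both matrices together with the pairing hypothesis forces $\tilde{\M}=\pm\M$, and the $\pm 1$ eigenvector characterization plus the normalization $[\tilde{\u},\tilde{\v}]=1$ ties the two scalings together. The only cosmetic differences are that you make explicit the non-degeneracy of the trace pairing on the traceless subspace, which the paper leaves implicit, and you handle the second case by direct eigenline analysis instead of the paper's substitution $\hat{\u}=\tilde{\v}$, $\hat{\v}=-\tilde{\u}$ reducing it to the first case.
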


\begin{proof}
 From lemma \ref{lm:sigma} matrices 
 \begin{equation}
  \M=\frac{1}{2}(\v\u^T+\u\v^T)\omega,\quad \tilde{\M}=\frac{1}{2} (\tilde{\v}\tilde{\u}^T+\tilde{\u}\tilde{\v}^T)\omega
 \end{equation}
are traceless thus $M=\tilde{M}$. The result now follows from uniqueness of eigenvectors up to scaling and lemma \ref{lm:sigma}.
 
 As for the second part we can write
 \begin{equation}
  -\frac{1}{2}\tr (\tilde{\v}\tilde{\u}^T+\tilde{\u}\tilde{\v}^T)\omega =
  \frac{1}{2}\tr (\hat{\v}\hat{\u}^T+\hat{\u}\hat{\v}^T)\omega 
 \end{equation}
 where $\hat{\u}=\tilde{\v}$ and $\hat{\v}=-\tilde{\u}$. Notice that $[\hat{\u},\hat{\v}]=1$ and we can apply previous part.
\end{proof}

\subsubsection{Spinors $\u$ and $\v$}

We introduce spinors
\begin{equation}
 \u=s\omega \eta_{N^{can}} \bar{\n},\quad \v=\n
\end{equation}
that satisfies
\begin{equation}
 [\u,\v]=s\n^\dagger \eta_{N^{can}}^T\omega^T\omega \n=s\langle \n,\n\rangle_{N^{can}}=1
\end{equation}

\begin{lm}\label{lm:u-r}
 For any spinor $\r$ holds
 \begin{equation}
  \langle \n,\r\rangle_{N^{can}}=s[\u,\r],\quad [\n,\r]=-st\langle \u,\r\rangle_{N^{can}}
 \end{equation}
\end{lm}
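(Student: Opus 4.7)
The plan is a direct computation based on the explicit definition $\u=s\omega\eta_{N^{can}}\bar\n$, leveraging three elementary identities as the only machinery: (i) $\omega^T=-\omega$ together with $\omega^2=-\mathbb{I}$, so that $\omega^T\omega=\mathbb{I}$; (ii) the Hermiticity $\eta_{N^{can}}^\dagger=\eta_{N^{can}}$, which can be rewritten as $\bar\eta_{N^{can}}=\eta_{N^{can}}^T$; and (iii) the standard $2\times 2$ determinant identity $A\omega A^T=(\det A)\,\omega$, which applied to $A=\eta_{N^{can}}$ gives $\eta_{N^{can}}\omega\eta_{N^{can}}^T=t\omega$ since $\det\eta_{N^{can}}=N^{can}\cdot N^{can}=t$.

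To verify the first identity, I plan to substitute the definition of $\u$ into $s[\u,\r]=s\u^T\omega\r$. Transposing the factors inside $\u^T$ produces $-s\bar\n^T\eta_{N^{can}}^T\omega$, and identity (i) collapses the remaining $\omega\omega$ to $-\mathbb{I}$. Together with $s^2=1$, the expression simplifies to $\bar\n^T\eta_{N^{can}}^T\r$. Recognising this as $\n^\dagger\eta_{N^{can}}^T\r=\langle\n,\r\rangle_{N^{can}}$ finishes the first claim.

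For the second identity, the key step is computing $\u^\dagger=-s\n^T\eta_{N^{can}}\omega$: conjugation turns $\bar\n$ back into $\n$, Hermiticity (ii) turns $\bar\eta_{N^{can}}^T$ into $\eta_{N^{can}}$, and $\omega^\dagger=\omega^T=-\omega$ produces the overall sign. Plugging this into $\langle\u,\r\rangle_{N^{can}}=\u^\dagger\eta_{N^{can}}^T\r$ isolates the sandwich $\eta_{N^{can}}\omega\eta_{N^{can}}^T$, at which point identity (iii) reduces it to $t\omega$. The expression becomes $-s\,\n^T(t\omega)\r=-st\,[\n,\r]$, and rearranging (using that $(st)^2=1$ makes multiplication by $-st$ an involution) yields the stated form $[\n,\r]=-st\langle\u,\r\rangle_{N^{can}}$.

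I do not anticipate any conceptual obstacle; the proof is entirely mechanical once the three identities above are in hand. The only place where care is needed is tracking the signs produced by transposing $\omega$ and by complex-conjugating the real matrix $\omega$ when computing $\u^\dagger$, but these are routine once one writes the expressions out in index-free form.
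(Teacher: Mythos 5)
Your proposal is correct and follows essentially the same route as the paper: both verify the first identity by substituting $\u=s\omega\eta_{N^{can}}\bar\n$ and using $\omega^T\omega=\mathbb I$, and both obtain the second from $\eta_{N^{can}}\omega\eta_{N^{can}}^T=t\omega$ together with $\eta_{N^{can}}^\dagger=\eta_{N^{can}}$ and $\omega^\dagger=-\omega$ (you merely compute $\langle\u,\r\rangle_{N^{can}}$ first and rearrange, while the paper runs the same chain starting from $[\n,\r]$).
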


\begin{proof}
The first equality follows from
 \begin{equation}
  s[\u,\r]=\bar{\n}^T\eta_{N^{can}}^T\omega^T\omega  \r=\langle \n,\r\rangle_{N^{can}}
 \end{equation}
 The second by $\eta_{N^{can}}\omega \eta_{N^{can}}^T=t\omega$ 
\begin{equation}
 [\n,\r]=\n^T\omega \r=t\n^T\eta_{N^{can}}\omega\eta_{N^{can}}^T \r=-st\u^\dagger \eta_{N^{can}}^T \r=-st\langle \u,\r\rangle_{N^{can}}
\end{equation}
where we used $\eta_{N^{can}}^\dagger=\eta_{N^{can}}$ and $\omega^\dagger=-\omega$.
\end{proof}

\begin{lm}\label{lm:nn-uu}
 We have equality
 \begin{equation}
  \eta_{N^{can}}^T=\eta_{N^{can}}^T(s\n\n^\dagger +st\u\u^\dagger)\eta_{N^{can}}^T
 \end{equation}
\end{lm}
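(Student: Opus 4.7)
The plan is to show that $(s\n\n^\dagger + st\u\u^\dagger)\eta_{N^{can}}^T$ is in fact the identity matrix; the desired equality then follows by multiplying on the left by $\eta_{N^{can}}^T$. So the claim reduces to a completeness-type identity for the basis $\{\n,\u\}$ of $\C^2$.

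First, I would establish that for every spinor $\r \in \C^2$ one has the decomposition
\begin{equation}
\r = [\u,\r]\,\n - [\n,\r]\,\u.
\end{equation}
This is a direct consequence of Lemma \ref{lm:sigma} applied to the pair $(\u,\v)=(\u,\n)$ (which satisfies $[\u,\n]=1$ by construction): the identity $(\n\u^T - \u\n^T)\omega = \mathbb{I}$ applied to $\r$ yields exactly the expansion above, after using $[\u,\r] = \u^T\omega\r$ and $[\n,\r] = \n^T\omega\r$.

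Next I convert the symplectic pairings into hermitian pairings using Lemma \ref{lm:u-r}: substituting $[\u,\r] = s\langle\n,\r\rangle_{N^{can}}$ and $-[\n,\r] = st\langle\u,\r\rangle_{N^{can}}$ into the decomposition gives
\begin{equation}
\r = s\,\n\,\langle\n,\r\rangle_{N^{can}} + st\,\u\,\langle\u,\r\rangle_{N^{can}}
   = \left(s\,\n\n^\dagger + st\,\u\u^\dagger\right)\eta_{N^{can}}^T\,\r,
\end{equation}
where in the last step I used the definition $\langle\z_1,\z_2\rangle_{N} = \z_1^\dagger\eta_N^T\z_2$ to pull out the $\r$ on the right. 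Since $\r$ is arbitrary, this forces the operator identity $(s\n\n^\dagger + st\u\u^\dagger)\eta_{N^{can}}^T = \mathbb{I}$. Multiplying both sides on the left by $\eta_{N^{can}}^T$ delivers the statement of the lemma.

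There is no real obstacle here: the whole argument is a routine combination of the two preceding lemmas together with the definition of the hermitian form $\langle\cdot,\cdot\rangle_{N^{can}}$. The only point that warrants attention is keeping track of the signs $s$ and $t$, which enter through the normalization $\langle\n,\n\rangle_{N^{can}}=s$ and through the identity $\eta_{N^{can}}\omega\eta_{N^{can}}^T = t\omega$ that Lemma \ref{lm:u-r} uses to relate the two pairings.
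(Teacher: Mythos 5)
Your proof is correct and follows essentially the same route as the paper: both reduce the lemma to the operator identity $(s\n\n^\dagger + st\u\u^\dagger)\eta_{N^{can}}^T = \mathbb{I}$ and rely on Lemma \ref{lm:u-r} for the sign bookkeeping. The only cosmetic difference is that you obtain this identity by expanding an arbitrary spinor $\r$ via the completeness relation of Lemma \ref{lm:sigma}, whereas the paper simply checks the action of the operator on the basis $\{\n,\u\}$.
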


\begin{proof}
We will prove that
\begin{equation}
 {\mathbb I}=(s\n\n^\dagger +st\u\u^\dagger)\eta_{N^{can}}^T
\end{equation}
It is enough to check that action of both sides coincides on spinors $\v=\n$ and $\u$. We have by lemma \ref{lm:u-r}
\begin{equation}
 (s\n\n^\dagger +st\u\u^\dagger)\eta_{N^{can}}^T\n=s\n\underbrace{\langle \n,\n\rangle_{N^{can}}}_{=s}+st\u\underbrace{\langle \u,\n\rangle_{N^{can}}}_{=-st[\n,\n]=0}=\n
\end{equation}
and similarly
\begin{equation}
 (s\n\n^\dagger +st\u\u^\dagger)\eta_{N^{can}}^T\u=s\n\underbrace{\langle \n,\u\rangle_{N^{can}}}_{=s[\u,\u]=0}+st\u\underbrace{\langle \u,\u\rangle_{N^{can}}}_{=-st[\v,\u]=st}=\u
\end{equation}
\end{proof}

Consider $S^\rho, S^{\j}, S^{aux}$ defined in section \ref{se_derivatives}.
\begin{lm}\label{lm:reality2}
 On real manifold
 \begin{equation}
  \Re S^\rho=0,\quad \Re S^{\j}\leq 0
 \end{equation}
and the equality $\Re S^{\j}=0$ holds if and only if 
\begin{equation}
 \exists_{\xi\in\C}\ g^T\z=\xi \n
\end{equation}
\end{lm}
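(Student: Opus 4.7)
The plan is to treat the two factors $e^{S^\rho}$ and $e^{S^\j}$ separately, with lemma \ref{lm:nn-uu} providing the decisive algebraic input. For $S^\rho$ the argument is immediate: on the real manifold, under the domain restriction, the base $s\langle g^T\z,g^T\z\rangle_{N^{can}}$ is a strictly positive real number, so raising it to the purely imaginary power $\iu\rho/2$ produces a quantity of modulus one, hence $\Re S^\rho=0$.

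For $S^\j$ I would first derive the key identity by sandwiching the matrix identity of lemma \ref{lm:nn-uu} between $(g^T\z)^\dagger$ on the left and $(g^T\z)$ on the right. Using $\langle\z_1,\z_2\rangle_{N^{can}}=\z_1^\dagger\eta_{N^{can}}^T\z_2$ together with the conjugate symmetry $\overline{\langle\z_1,\z_2\rangle_{N^{can}}}=\langle\z_2,\z_1\rangle_{N^{can}}$, this yields
\begin{equation}
s\langle g^T\z,g^T\z\rangle_{N^{can}} = |\langle \n,g^T\z\rangle_{N^{can}}|^2 + t\,|\langle \u,g^T\z\rangle_{N^{can}}|^2,
\end{equation}
from which one reads off
\begin{equation}
t\bigl(|\langle \n,g^T\z\rangle_{N^{can}}|^2 - s\langle g^T\z,g^T\z\rangle_{N^{can}}\bigr) \le 0,
\end{equation}
with equality (in either sign of $t$) iff $\langle \u,g^T\z\rangle_{N^{can}}=0$.

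Next I would translate this inequality into a bound on $\Re S^\j$. Since conjugation changes only the imaginary part, $\Re S^\j=\Re S^{aux}$ in both cases $t=\pm 1$, and a direct computation from the definition of $e^{S^{aux}}$ gives
\begin{equation}
\Re S^\j = t\,\j\,\ln\frac{|\langle \n,g^T\z\rangle_{N^{can}}|^2}{s\langle g^T\z,g^T\z\rangle_{N^{can}}}.
\end{equation}
Combined with the preceding inequality and $\j>0$, this yields $\Re S^\j\le 0$ in both sectors. For the equality case, $\langle \u,g^T\z\rangle_{N^{can}}=0$ translates via lemma \ref{lm:u-r} into $[\n,g^T\z]=0$, which is precisely the condition that $g^T\z=\xi\n$ for some $\xi\in\C$.

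The only subtlety, and the reason I rely on lemma \ref{lm:nn-uu} rather than a textbook Cauchy--Schwarz inequality, is that for $t=-1$ the form $\langle\cdot,\cdot\rangle_{N^{can}}$ has indefinite signature $(1,1)$ and the na\"ive Cauchy--Schwarz fails; the lemma provides an exact resolution of the identity that, once restricted to the domain $s\langle g^T\z,g^T\z\rangle_{N^{can}}>0$, produces the appropriate reverse inequality with just the right sign to ensure $\Re S^\j\le 0$ in a unified fashion.
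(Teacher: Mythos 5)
Your proof is correct and follows essentially the same route as the paper: the positivity of the base gives $\Re S^\rho=0$, lemma \ref{lm:nn-uu} sandwiched between $(g^T\z)^\dagger$ and $g^T\z$ yields the same signed identity and inequality, and the equality case is reduced via lemma \ref{lm:u-r} to $[\n,g^T\z]=0$, i.e.\ $g^T\z=\xi\n$. The only cosmetic difference is that you express $\Re S^\j=\Re S^{aux}$ through a logarithm while the paper raises the inequality to the power $2t\j$; the content is identical.
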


\begin{proof}
 From construction $\Re S^\rho=0$. We will consider $S^{\j}$. The reality condition is equivalent to reality condition for $S^{aux}$.
 
 We have equality by lemma \ref{lm:nn-uu}
 \begin{align}
  ts(g^T\z)^\dagger \eta_{N^{can}}^T (g^T\z)&=t(g^T\z)^\dagger \eta_{N^{can}}^T\n\n^\dagger \eta_{N^{can}}^T (g^T\z)+(g^T\z)^\dagger \eta_{N^{can}}^T\u\u^\dagger \eta_{N^{can}}^T (g^T\z)\nonumber\\&\geq t(g^T\z)^\dagger \eta_{N^{can}}^T\n\n^\dagger \eta_{N^{can}}^T (g^T\z)
 \end{align}
This is equivalent to
\begin{equation}
 t(s\langle g^T\z,g^T\z\rangle_{N^{can}})=t|\langle g^T\z,\n\rangle_{N^{can}}|^2+|\langle g^T\z,\u\rangle_{N^{can}}|^2\geq t|\langle g^T\z,\n\rangle_{N^{can}}|^2
\end{equation}
and because $s\langle g^T\z,g^T\z\rangle_{N^{can}}> 0$ and $|\langle g^T\z,\n\rangle_{N^{can}}|>0$ we can write it as
\begin{equation}
 (s\langle g^T\z,g^T\z\rangle_{N^{can}})^{t\ 2\j}\geq |\langle g^T\z,\n\rangle_{N^{can}}|^2)^{t\ 2\j}
\end{equation}
that is equivalent to $\Re S^{aux}\leq 0$.

The equality holds only if
\begin{equation}
 0=\langle \u,g^T\z\rangle_{N^{can}}=st[\n,g^T\z]
\end{equation}
that is $g^T\z=\xi \n$.
\end{proof}

\begin{lm}\label{lm:sij2}
 On real manifold $\Re S\leq 0$ and under condition $\Re S=0$
 \begin{equation}
  \delta_g S=\left(i\frac{\rho}{2}+\jj\right) [\u,\delta g^T \v]=\frac{1}{2}\left(i\frac{\rho}{2}+\j\right) \tr (\u\v^T+\v\u^T)\omega \delta g^T 
 \end{equation}
 where
 \begin{equation}
  \u=s\omega \eta_{N^{can}} \bar{\n},\quad \v=\n
 \end{equation}
 and $s=\langle \n,\n\rangle_{N^{can}}\in\{-1,1\}$
\end{lm}

\begin{proof}
 We have
 \begin{align}
  \delta_g S_\rho&=i\frac{\rho}{2}\frac{\langle g^T \z,\delta g^T g^T\z\rangle_{N^{can}}}{\langle g^T \z, g^T\z\rangle_{N^{can}}}=
  i\frac{\rho}{2}\frac{|\xi|^2\langle \n,\delta g^T \n\rangle_{N^{can}}}{|\xi|^2\langle \n, \n\rangle_{N^{can}}}=\nonumber\\
 &= i\frac{\rho}{2}\frac{s[\u,\delta g^T \n]}{s[\u,\n]}=i\frac{\rho}{2}[\u,\delta g^T \v]
 \end{align}
From lemma \ref{lm:SxSxbarn} when reality conditions are satisfied we have
\begin{equation}
 \delta_g S^{\j}=-t \delta_g S^{aux}
\end{equation}
The latter can be computed
\begin{equation}
 \delta_g S^{aux}=-t\jj\frac{\langle g^T \z,\delta g^T g^T\z\rangle_{N^{can}}}{\langle g^T \z, g^T\z\rangle_{N^{can}}}=-t\jj[\u,\delta g^T \v]
\end{equation}
thus the total variation
\begin{equation}
 \delta_g S=\delta_g S^\rho+\delta_g S^{\j}=\left(i\frac{\rho}{2}+\jj\right) [\u,\delta g^T \v]
\end{equation}
By \eqref{eq:re} it can be written in the given form.
\end{proof}

\subsubsection{Subalgebra of the subgroup preserving the normal $N$}
\label{sec:subalgebra}

We will use description of $\St(N)$ from \ref{sec:subgroup}.
Differentiating \eqref{eq:sub} we obtain that the Lie subalgebra of $\SL$ consists of these traceless matrices $\B$ that satisfies
\begin{equation}
 \B\eta_N+\eta_N \B^\dagger=0,\quad \B=-\eta_N \B^\dagger\eta_N^{-1}
\end{equation}

\begin{lm}
 Suppose that $\M$ is a traceless matrix such that $\M^T$ belongs to the Lie algebra of the group preserving normal $N$, then 
 the eigenvalues of $\M$ are either real or purely imaginary.
\end{lm}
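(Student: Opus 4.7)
The plan is to use the explicit characterization of the Lie algebra of $St(N)$ recorded in section \ref{sec:subalgebra}. The condition that $\M^T$ lies in that Lie algebra reads
\begin{equation*}
\M^T\eta_N+\eta_N(\M^T)^\dagger=0,
\end{equation*}
and since $(\M^T)^\dagger=\bar{\M}$, this can be rewritten as the similarity
\begin{equation*}
\M^T=-\eta_N\,\bar{\M}\,\eta_N^{-1}.
\end{equation*}

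From here the argument is pure linear algebra. Similar matrices have the same spectrum, so the eigenvalues of $\M^T$ agree with those of $-\bar{\M}$. On the one hand, $\M^T$ has the same characteristic polynomial as $\M$; since $\M$ is traceless and $2\times 2$, its eigenvalues are of the form $\{\lambda,-\lambda\}$ for some $\lambda\in\C$. On the other hand, the eigenvalues of $-\bar{\M}$ are $\{-\bar\lambda,\bar\lambda\}$. Equating these two unordered pairs gives $\{\lambda,-\lambda\}=\{\bar\lambda,-\bar\lambda\}$, which forces either $\lambda=\bar\lambda$ (so $\lambda\in\R$) or $\lambda=-\bar\lambda$ (so $\lambda\in i\R$).

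There is no substantial obstacle: the only subtlety is the bookkeeping of complex conjugation versus transpose, and remembering that $\eta_N$ is hermitian (so $\bar{\eta}_N=\eta_N^T$), but this plays no active role once the similarity $\M^T\sim -\bar{\M}$ has been extracted. The tracelessness of $\M$, which reduces the spectrum to the pair $\{\lambda,-\lambda\}$, is what turns the general constraint ``spectrum is stable under $\lambda\mapsto-\bar\lambda$'' into the stated dichotomy.
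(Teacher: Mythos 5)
Your proof is correct and follows essentially the same route as the paper: both reduce the stabilizer condition to the statement that the spectrum $\{\lambda,-\lambda\}$ of the traceless matrix $\M$ must be invariant under $\lambda\mapsto-\bar\lambda$ (the paper phrases this as $\M^\dagger$ having the same eigenvalues as $-\M$, which is equivalent to your similarity $\M^T\sim-\bar\M$), forcing $\lambda$ real or purely imaginary.
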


\begin{proof}
We can always write
\begin{equation}
 \M=i\lambda (\v\u^T+\u\v^T)\omega,\quad [\u,\v]=1
\end{equation}
with $\v$ being $i\lambda$ and $\u$ being $-i\lambda$ eigenvectors with $\lambda\in \C$.

The condition for $\M$ is
\begin{equation}
 \M=-(\eta_N^{-1})^T \M^\dagger \eta_N^T
\end{equation}
means that $\M^\dagger$ has the same eigenvalues as $-\M$ ($\pm i\lambda$) so either $\lambda\in\R$ or $\lambda\in i\R$.
\end{proof}

We will call $\M$ spacelike if its eigenvalues are purely imaginary (see section \ref{sec:bivectors}).

\begin{lm}
 Suppose that $\M$ is a traceless matrix such that $\M^T$ belongs to the Lie algebra of the group preserving normal $N$ and is spacelike then 
 there exists
 \begin{equation}
  \lambda>0,\text{ and } \n \text{ spinor } \langle \n,\n\rangle_N=s\in\{-1,1\}
 \end{equation}
 such that
 \begin{equation}
  \M=i\lambda (\v\u^T+\u\v^T)\omega,\quad [\u,\v]=1
 \end{equation}
 where 
 \begin{equation}
  \u=s\omega\eta_N\bar{\n},\quad \v=\n,
 \end{equation}
\end{lm}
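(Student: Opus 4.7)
The plan is to start from the parametrization obtained in the preceding lemma, namely $\M = i\lambda(\v\u^T + \u\v^T)\omega$ with $[\u,\v]=1$, where $\v,\u$ are the $\pm i\lambda$--eigenvectors; since $\M$ is assumed spacelike, that lemma already gives $\lambda \in \R$. The remaining task is to (i) arrange $\lambda>0$ by relabelling eigenvectors if necessary, (ii) identify $\u$ with $s\,\omega\eta_N\bar{\v}$ for a real number $s$, and (iii) rescale $\v$ so that $s\in\{-1,+1\}$ and $s=\langle \v,\v\rangle_N$.

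For step (i), I would note that swapping the roles of $\u$ and $\v$ (with the sign adjustment $\u'=-\v$, $\v'=\u$ needed to preserve $[\u',\v']=1$) simply flips the sign of $\lambda$, because $\v$ must be the $+i\lambda$--eigenvector in the parametrization; so we may always take $\lambda>0$. For step (ii), I would differentiate the subgroup condition $g\eta_N g^\dagger=\eta_N$ in the form used in Section \ref{sec:subalgebra}, namely $\M^\dagger \eta_N^T + \eta_N^T \M = 0$. Applied to the eigenvalue equation $\M\v = i\lambda\v$, this gives
\begin{equation*}
\M^\dagger(\eta_N^T \v) = -i\lambda\,\eta_N^T \v.
\end{equation*}
Conjugating and using $\bar\lambda=\lambda$, together with $\overline{\eta_N^T}=\eta_N$, yields $\M^T(\eta_N\bar{\v}) = i\lambda\,\eta_N\bar{\v}$. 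The traceless identity $\M^T = -\omega^{-1}\M\omega$ (a consequence of $\omega\M^T+\M\omega=0$) shows that the $i\lambda$--eigenspace of $\M^T$ is spanned by $\omega^{-1}\u$, so there is a scalar $\mu\in\C^*$ (nonzero because $\eta_N$ is invertible and $\bar\v\ne 0$) with
\begin{equation*}
\eta_N\bar{\v} = \mu\,\omega^{-1}\u, \qquad \text{equivalently} \qquad \u = \mu^{-1}\,\omega\,\eta_N\,\bar{\v}.
\end{equation*}

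For step (iii), I would substitute this expression for $\u$ into $[\u,\v]=1$ and compute
\begin{equation*}
[\u,\v] = \mu^{-1}(\omega\eta_N\bar{\v})^T\omega\v = \mu^{-1}\bar{\v}^T\eta_N^T(\omega^T\omega)\v = \mu^{-1}\langle\v,\v\rangle_N,
\end{equation*}
using $\omega^T\omega=\mathbb{I}$. Thus $\mu=\langle\v,\v\rangle_N\in\R^*$, in particular real and nonzero. A rescaling $\v\mapsto c\v$, $\u\mapsto c^{-1}\u$ (which preserves both $[\u,\v]=1$ and the form of $\M$) changes $\mu$ by $|c|^2$; choosing $|c|=|\mu|^{-1/2}$ brings $\mu$ to $s=\mathrm{sgn}(\mu)\in\{-1,+1\}$. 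Setting $\n:=\v$, we obtain $\langle\n,\n\rangle_N=s$ and $\u=s^{-1}\omega\eta_N\bar{\n}=s\,\omega\eta_N\bar{\n}$, as required.

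The only delicate point is verifying that swapping $\u\leftrightarrow\v$ to enforce $\lambda>0$ remains compatible with the normalization $s\in\{-1,+1\}$; this reduces to checking the identity $\omega\eta_N\omega\eta_N^T=-t\,\mathbb{I}$ (which follows from $\omega A^T\omega=-\det(A)\,A^{-1}$ applied to $A=\eta_N$ with $\det\eta_N=t$), and shows that the new $s$ after the swap equals $st\in\{-1,+1\}$. Everything else is routine manipulation of the spinor identities collected in Appendix \ref{app:traceless}.
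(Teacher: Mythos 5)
Your proposal is correct and follows essentially the same route as the paper: both derive $\u=\xi\,\omega\eta_N\bar{\v}$ from the stabilizer condition and then fix $\xi=\langle\v,\v\rangle_N^{-1}$ via $[\u,\v]=1$ before normalizing $\v$ to a unit spinor $\n$. The only difference is that you obtain the proportionality by tracking the $i\lambda$--eigenvectors of $\M^\dagger$ and $\M^T$ directly, whereas the paper rewrites the reality condition as an equality of two canonical traceless matrices and invokes lemma \ref{lm:comp}; your explicit handling of the $\u\leftrightarrow\v$ swap needed for $\lambda>0$ (and the resulting sign $st$) is a detail the paper leaves implicit.
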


\begin{proof}
We can always write
\begin{equation}
 \M=i\lambda (\v\u^T+\u\v^T)\omega,\quad [\u,\v]=1
\end{equation}
with $\v$ being $i\lambda$ and $\u$ being $-i\lambda$ eigenvectors with $\lambda\geq 0$.

The condition for $M$ is
\begin{equation}
 \M=-(\eta_N^{-1})^T \M^\dagger \eta_N^T=t\omega\eta_N (-\omega)\M^\dagger  \eta_N 
\end{equation}
where $t=\det\eta_N=N\cdot N$ from the identity $t(\eta_N^{-1})^T=\omega \eta_N(-\omega)$.
It is equivalent to
\begin{equation}
 i\lambda (\v\u^T+\u\v^T)\omega=it\lambda\omega \eta_N (-\omega)^2(\bar{\v}\bar{\u}^T+\bar{\u}\bar{\v}^T)\eta_N \omega^2
\end{equation}
We can write it in the form
\begin{equation}
 (\v\u^T+\u\v^T)\omega=-(\tilde{\v}\tilde{\u}^T+\tilde{\u}\tilde{\v}^T)\omega
\end{equation}
where we introduced 
\begin{equation}
 \tilde{\u}=\omega\eta_N \bar{\u},\quad \tilde{\v}=t\omega \eta_N\bar{\v}
\end{equation}
Let us notice that 
\begin{equation}
 [\tilde{\u},\tilde{\v}]=t\bar{\u}^T\underbrace{\eta_N^T\omega \eta_N}_{=t\omega}\bar{\v}=
 \bar{\u}^T\omega \bar{\v}=\widebar{[\u,\v]}=1
\end{equation}
From lemma \ref{lm:comp} there exists $\xi\in\C$
 \begin{equation}
  \u=\xi \omega \eta_N\bar{\v},\ \v=-\xi^{-1}t\omega \eta_N \bar{\u}
 \end{equation}
We have also the equality
\begin{equation}
 [\u,\v]=\xi \v^\dagger \eta_n^T\omega^T\omega \v=\xi \langle \v,\v\rangle_N
\end{equation}
Normalizing $\v$ we can introduce spinor $\n$ such that
 \begin{equation}
  \u=s\omega\eta_N\bar{\n},\quad \v=\n
 \end{equation}
Spinor $\n$ is unique up to a phase.
\end{proof}

\subsection{Characterization of the bivectors}

We will now characterize 
$\pi^{-1}\left( -\frac{2\gamma}{\gamma-i}\B_{ij}^T\right)$.
Let us introduce a vector $l^\mu_{ij}$ by an identity
\begin{equation}
 l_{ij}^\mu\hat{\sigma}_\mu=\left(2s_{ij}\rho_{ij}\ {\n}_{ij}\n_{ij}^\dagger\right)^T
\end{equation}
It is null because the matrix on the right-hand side has rank one. It is future directed if $s_{ij}=1$ and past directed if $s_{ij}=-1$.

Contracting with ${\sigma}_\nu$ and taking the trace we obtain
\begin{equation}
 l_{ij}^\mu=s_{ij}\rho_{ij}\n_{ij}^T {\sigma}^\mu \bar{\n}_{ij}
\end{equation}
Let us consider decomposition (introducing vector $v_{ij}$)
\begin{equation}
 l_{ij}={v}_{ij}+cN_i^{can},\quad {v}_{ij}\perp N_i^{can},\quad c=\frac{l_{ij}\cdot N_i^{can}}{N_i^{can}\cdot N_i^{can}}
\end{equation}
We have
\begin{equation}
 l_{ij}\cdot N_i^{can}=s_{ij}\rho_{ij}\n_{ij}^T\eta_{N_i^{can}}\bar{\n}_{ij}=s_{ij}\rho_{ij}\langle \n_{ij}|\n_{ij}\rangle_{N_i^{can}}=\rho_{ij}
\end{equation}
thus
\begin{equation}
 v_{ij}\cdot v_{ij}=l_{ij}\cdot l_{ij}-\frac{(l_{ij}\cdot N_i^{can})^2}{N_i^{can}\cdot N_i^{can}}=-t_i \rho_{ij}^2
\end{equation}

\begin{lm}\label{lm:char}
We have
\begin{equation}
 \pi(B_{ij}')=-\frac{2\gamma}{\gamma-i}\B_{ij}^T
\end{equation}
where $B_{ij}'= {\Hodge}({v}_{ij}\wedge N^{can}_i)={\Hodge}(l_{ij}\wedge N^{can}_i)$.
\end{lm}

\begin{proof}
The traceless matrix $\B'_{ij}=\pi(B_{ij}')$ satisfies
\begin{equation}
 \B'_{ij}=i \frac{1}{4}(\eta_{l_{ij}}\hat{\eta}_{N^{can}_i}-\eta_{N^{can}_i}\hat{\eta}_{l_{ij}})=
 -i\frac{1}{2}(\eta_{N^{can}_i}\hat{\eta}_{l_{ij}}-l_{ij}\cdot N^{can}_i{\mathbb I})
\end{equation}
For any traceless matrix ${\mathbb K}$
\begin{equation}
 \tr \frac{2\gamma}{\gamma-i}\B_{ij} {\mathbb K}=i\rho_{ij} s_{ij}\langle \n_{ij}, {\mathbb K} \n_{ij}\rangle_{N_i^{can}}
\end{equation}
but also
\begin{equation}
 \tr {\B'}_{ij}^T {\mathbb K}=-\tr  (i\rho_{ij}s_{ij}\n_{ij}\n_{ij}^\dagger \eta_{N_i^{can}}^T-l_{ij}\cdot N^{can}_i{\mathbb I}) {\mathbb K}=-i
 \rho_{ij} s_{ij}\langle \n_{ij}, {\mathbb K} \n_{ij}\rangle_{N_i^{can}}
\end{equation}
This equality shows that the traceless matrices $\B_{ij}'$ and $-\frac{2\gamma}{\gamma-i}\B_{ij}^T$ are equal. 
\end{proof}

\subsection{Geometric bivectors}\label{sec:bivectors22}

We will perform the following construction of the $k$ form corresponding to the simplex spanned on the points $x_0,\ldots x_k$ in $\R^n$.

Let us introduce auxiliary space $\R^{n+1}$ and in this space vectors
\begin{equation}
 y_i=\vc{x_i^0\\ \vdots\\ x_i^n\\ 1},\text { and } a=\vc{0\\ \vdots\\ 0\\ 1}
\end{equation}
and covector
\begin{equation}
 A=(0,\ldots 0, 1).
\end{equation}
Let us introduce $k+1$ vectors
\begin{equation}
 \tilde{V}_{\alpha_0\cdots \alpha_{k}}=y_{\alpha_0}\wedge\cdots \wedge y_{\alpha_{k}}
\end{equation}
$k$-vectors in $\R^n$ can be identified with $k$-vectors $\Omega$ in $\R^{n+1}$ that satisfy
\begin{equation}
 A\llcorner\Omega=0
\end{equation}
From $\tilde{V}_{\alpha_0\cdots \alpha_{k}}$ we can produce a $k$-vector in $\R^n$ 
\begin{equation}
V_{\alpha_0\cdots \alpha_k}= A\llcorner \tilde{V}_{\alpha_0\cdots \alpha_k}
\end{equation}
as $A\llcorner A\llcorner \tilde{V}_{\alpha_0\cdots \alpha_k}=0$. 

We can check that
\begin{align}
 &A\llcorner y_{\alpha_0}\wedge\cdots \wedge y_{\alpha_k}=\nonumber\\
&= A\llcorner y_{\alpha_0}\wedge(y_{\alpha_1}-y_{\alpha_0})\cdots \wedge (y_{\alpha_k}-y_{\alpha_0})=(y_{\alpha_1}-y_{\alpha_0})\cdots \wedge (y_{\alpha_k}-y_{\alpha_0})
\end{align}
as the only nonzero last component is in $y_{\alpha_0}$. This gives (after restriction to $\R^n$)
\begin{equation}
 V_{\alpha_0\cdots \alpha_k}=(x_{\alpha_1}-x_{\alpha_0})\cdots \wedge (x_{\alpha_k}-x_{\alpha_0})
\end{equation}
that is the volume $k$-vector of the $k$-simplex multiplied by $k!$.

Let us notice that this $k$-vector depends on the order of $\alpha_0,\ldots,\alpha_k$. As $\tilde{V}_{\alpha_0\cdots \alpha_k}$ change by $(-1)^{\sgn\sigma}$ under permutation of points (even permutations preserves it) the same is true for $V_{\alpha_0\cdots \alpha_k}$.

Suppose that we have a simplex determined by points with indices $0,\ldots n$. We can define codimension $1$ and $2$ simplices by indicating which points we skip.

Let us introduce
\begin{equation}
 V_i=(-1)^iV_{0\cdots \hat{i}\cdots n}
\end{equation}
where $\hat{\cdot}$ means omission,
\begin{align}
 B_{ij}=\left\{\begin{array}{ll}
                (-1)^{i+j+1}V_{0\cdots \hat{i}\cdots\hat{j} \cdots n} & i<j\\
                (-1)^{i+j}V_{0\cdots \hat{j}\cdots\hat{i} \cdots n} & i>j\\
               \end{array}\right.
\end{align}
and similarly $\tilde{V}_i$ and $\tilde{B}_{ij}$.
With this definition $B_{ij}=-B_{ji}$.

\begin{thm}
 The following holds
 \begin{equation}\label{eq:geom_bivector_identities}
 \sum_i V_i=0,\quad \forall_i \sum_{j\not=i} B_{ij}=0,
 \end{equation}
\end{thm}

\begin{proof}
 Let us consider
 \begin{equation}
  B=A\llcorner \tilde{V}_{i}=A\llcorner \tilde{V}_{0\cdots \hat{i}\cdots n}
 \end{equation}
 It can be written as
\begin{equation}
 \sum_{j\not=i} (-1)^{s_j}y_{0}\wedge \cdots\wedge(A\llcorner y_{j})\wedge\cdots\wedge y_{n}=
 \sum_{j\not=i} (-1)^{s_j}y_{0}\wedge \cdots\wedge\hat{y}_{j}\wedge\cdots\wedge y_{n}
\end{equation}
where $s_j$ is the number of site on which we contract.
\begin{equation}
 s_j=\left\{\begin{array}{ll}
             j+1 & j<i\\
             j & j>i
            \end{array}\right.
\end{equation}
The sum can be written as
\begin{equation}
(-1)^i \sum  \tilde{B}_{ij}
\end{equation}
However this $n-1$ vector contracted with $A$ is zero 
\begin{equation}
 0=A\llcorner A\llcorner \tilde{V}_i=A\llcorner B=(-1)^i \sum  B_{ij}
\end{equation}
This finishes the proof of the second equality in \eqref{eq:geom_bivector_identities}. We prove the first equality in similar way:
\begin{equation}
 0=A\llcorner A\llcorner \tilde{V}_{0\cdots n}=A\llcorner \sum_i (-1)^i \tilde{V}_{0\cdots \hat{i}\cdots n}=
 \sum_i V_i.
\end{equation}
\end{proof}

Let us restrict to the case of $n=4$,

\begin{df}
The geometric bivectors of the $4$ simplex determined by vertices $x_0,\ldots x_4\in\R^4$ are
\begin{equation}
 B_{ij}^{\geom}= B_{ij}.
\end{equation}
\end{df}

Let us consider a scaling transformation
 \begin{equation}
  x_i\rightarrow\lambda x_i,\quad \lambda\in\R^*.
 \end{equation}
Under this transformation bivectors changes
\begin{equation}
 B_{ij}^{\geom}\rightarrow \lambda^2 B_{ij}^{\geom}.
\end{equation}
Let us notice that in particular inversion transformation $\lambda=-1$ preserves $B_{ij}^{\geom}$.

\subsubsection{Nondegenerate case}

Let us now assume that $x_i$ do not lay in the hyperplane that is $y_i$ are linearly independent. We introduce a dual basis
$\hat{y}_i$ defined by
\begin{equation}\label{eq:yhat}
 \hat{y}_i\lrcorner y_j=\delta_{ij}
\end{equation}
Let us also introduce $\tilde{y}_i$ defined by
\begin{equation}
 \hat{y}_i=\tilde{y}_i+\mu_i A,\quad \tilde{y}_i\lrcorner a=0
\end{equation}
These covectors can be regarded as belonging to $\R^n$.
We have
\begin{equation}
 V_i=A\llcorner \hat{y}_{i}\llcorner \tilde{V}_{0\cdots n}=- \hat{y}_{i}\llcorner V_{0\cdots n}.
\end{equation}
This can be written with the use of $\tilde{y}_i$ in terms of $\R^n$
\begin{equation}
 V_i=-\tilde{y}_i\llcorner V_{0\cdots n}
\end{equation}
and similarly
\begin{equation}
  B_{ij}=\hat{y}_j\llcorner \hat{y}_i\llcorner V_{0\cdots n}=\tilde{y}_j\llcorner \tilde{y}_i\llcorner V_{0\cdots n}.
\end{equation}

\begin{lm}
 We have 
 \begin{equation}
  \sum_{i=0}^n \hat{y}_i=A
 \end{equation}
\end{lm}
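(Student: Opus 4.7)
The plan is a one-line linear algebra argument: show that the covector $\sum_i \hat{y}_i$ and the covector $A$ agree on the basis $\{y_0,\ldots,y_n\}$ of $\R^{n+1}$, and conclude they are equal.

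First, I would use the defining relation \eqref{eq:yhat} to compute
\begin{equation}
\left(\sum_{i=0}^{n}\hat{y}_i\right)\lrcorner y_j = \sum_{i=0}^n \delta_{ij}=1
\end{equation}
for every $j=0,\ldots,n$. Next, I would compute $A\lrcorner y_j$. By the construction of $y_j$, the last component of $y_j$ is $1$, and $A=(0,\ldots,0,1)$ is precisely the covector extracting the last component, so $A\lrcorner y_j=1$ for every $j$. Thus both covectors take the value $1$ on each $y_j$.

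Since we are in the non-degenerate case (the $x_i$ do not lie in a common hyperplane), the $y_i$ are linearly independent and hence form a basis of $\R^{n+1}$. A covector on $\R^{n+1}$ is determined by its values on a basis, so $\sum_i \hat{y}_i = A$. No genuine obstacle arises; the content of the lemma is simply the tautology that in the augmented space, the last-component projector is the sum of the dual basis to the augmented position vectors, which is what makes the subsequent passage between $\tilde{y}_i$ on $\R^n$ and $\hat{y}_i$ on $\R^{n+1}$ consistent.
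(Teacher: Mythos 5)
Your proof is correct and is essentially the same as the paper's: contract $\sum_i\hat{y}_i$ and $A$ with each basis vector $y_j$, obtain $1$ on both sides via $\hat{y}_i\llcorner y_j=\delta_{ij}$ and the last component of $y_j$ being $1$, and conclude equality since a covector is determined by its values on the basis $\{y_j\}$ (guaranteed by the non-degeneracy assumption, which the paper uses implicitly).
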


\begin{proof}
 It is enough to check that both sides of equality give the same value contracted with the elements of the basis $y_j$
 \begin{equation}
  \sum_{i=0}^n \hat{y}_i\llcorner y_j=\sum_{i=0}^n \delta_{ij}=1=A\llcorner y_j
 \end{equation}
\end{proof}

Thus we have 
\begin{equation}
  \sum_{i=0}^n \tilde{y}_i=0
\end{equation}
Covectors $\tilde{y}_i$ are conormal to subsimplices $V_{0\cdots \hat{i}\cdots n}$.

Let us now add metric tensor on $\R^n$. It defines also a scalar product on $k$-vectors by
\begin{equation}
 a_1\wedge\cdots a_k \cdou b_1\wedge\cdots b_k:=\sum_{\sigma\in S_k} (-1)^{\sgn\sigma}\prod_{i=1}^k a_i\cdou b_{\sigma_i}
\end{equation}
We can also introduce normalization of  $V_{0\cdots n}$ by
\begin{equation}\label{eq:vol}
 V_{0\cdots n}\cdou V_{0\cdots n}=(-1)^d(Vol^{\geom})^2,\quad Vol^{\geom}>0
\end{equation}
By the definition of Hodge star (see Appendix \ref{sec:conventions})
\begin{equation}
 B_{ij}=-Vol^{\geom}\ {\Hodgu}(\tilde{y}_j\wedge \tilde{y}_i),\quad V_i=-Vol^{\geom}\ {\Hodgu}\tilde{y}_i
\end{equation}
If codimension $1$ subsimplices are not null we can introduce normal vectors $N_i$ and positive numbers $W_i$ and $\T_i\in\{-1,1\}$ such that
\begin{equation}
 \tilde{y}_i=\frac{1}{Vol^{\geom}} W_iN_i,\quad N_i\cdou N_i=\T_i
\end{equation}
then we have 
\begin{equation}\label{B-geom}
 B_{ij}=-\frac{1}{Vol^{\geom}} W_iW_j {\Hodgu}(N_j\wedge N_i),\quad \sum_i W_i N_i=0
\end{equation}
Let us consider an altitude for our simplex from point $x_i$. Its base we will denote by $h_i$. Let us notice that as it lays inside the hyperplane of remaining points
\begin{equation}
 \exists_{\alpha_k,k\not=i}\colon \sum_{k\not=i}\alpha_k=1,\quad\sum_{k\not=i}\alpha_kx_k=h_k
\end{equation}
We have (identifying vectors and covectors using scalar product)
\begin{equation}
\tilde{y}_i\cdou (x_i-h_i)=\hat{y}_i\llcorner (y_i-\sum_{k\not= i}\alpha_ky_k)=1\geq 0 
\end{equation}
Thus vectors 
\begin{equation}
N_i^{\geom}=-\T_iN_i\label{eq:N-geom}
\end{equation}
are outer directed and introducing $W_i^{\geom}=-\T_iW_i$ ($W_i>0$) we get
\begin{equation}
 \sum_i W_i^{\geom} N_i^{\geom}=0
\end{equation}
Let us notice that 
\begin{itemize}
 \item $B_{ij}^{\geom}=-B_{ji}^{\geom}$
 \item $B_{ij}^{\geom}=-\frac{1}{Vol^{\geom}} W_i^{\geom}W_j^{\geom} {\Hodgu}(N_j^{\geom}\wedge N_i^{\geom})$
 \item $B_{ij}^{\geom}\lrcorneu N_i^{\geom}=0$.
\end{itemize}
Let us notice that for a spacelike face its area is equal to 
\begin{equation}
A_{ij}^\geom= \frac{1}{2}|B_{ij}^{\geom}|.
\end{equation}

\subsection{Generalized sine law}\label{sec:sin-law}

Taking the scalar product of \eqref{B-geom} with itself we obtain (for an extension see \cite{Eriksson})
\begin{equation}
 (Vol^{\geom})^2|B_{ij}^\geom|^2={W_i^\geom}^2 {W_j^\geom}^2 (-1)^s |N_j^\geom\wedge, N_i^\geom|^2
\end{equation}
where $s$ is a sign related to the signature of spacetime, defined by
\begin{equation}
 |\Hodgu N_j^\geom\wedge N_i^\geom|^2=(-1)^s|N_j^\geom\wedge N_i^\geom|^2.
\end{equation}
However
\begin{equation}
 |N_j^\geom\wedge N_i^\geom|^2={N_i^\geom}^2{N_j^\geom}^2- (N_i^\geom\cdou N_j^\geom)^2
\end{equation}
Let us consider the following cases:
\begin{itemize}
 \item If $N_i^\geom$ and $N_j^\geom$ are of the same type and signature in the plane $N_i^\geom\wedge N_j^\geom$ is mixed then
 \begin{equation}
  |N_j^\geom\wedge N_i^\geom|^2=-\sinh^2 \theta_{ij},\quad |N_i^\geom\cdou N_j^\geom|=\cosh\theta_{ij}.
 \end{equation}
\item If $N_i$ and $N_j$ are of different types and signature in the plane $N_i\wedge N_j$ is mixed then
 \begin{equation}
  |N_j^\geom\wedge N_i^\geom|^2=-\cosh^2 \theta_{ij},\quad N_i^\geom\cdou N_j^\geom=\sinh\theta_{ij}.
 \end{equation}
\item If signature in the plane $N_i\wedge N_j$ is $++$ then
\begin{equation}
  |N_j^\geom\wedge N_i^\geom|^2=\sin^2 \theta_{ij},\quad N_i^\geom\cdou N_j^\geom=\cos\theta_{ij}.
 \end{equation}
\item If signature in the plane $N_i^\geom\wedge N_j^\geom$ is $--$ then
\begin{equation}
  |N_j^\geom\wedge N_i^\geom|^2=\sin^2 \theta_{ij},\quad N_i^\geom\cdou N_j^\geom=-\cos\theta_{ij}.
 \end{equation} 
\end{itemize}

\subsection{Dihedral angles}\label{app:NN}

We will now describe $R_{N_i^\geom}R_{N_j^\geom}$ in terms of dihedral angles. Let us notice that
${\Hodgu}N_i^\geom\wedge N_j^\geom$ is spacelike. We have
\begin{equation}
 |{\Hodgu}N_i^\geom\wedge N_j^\geom|=\sqrt{({\Hodgu}N_i^\geom\wedge N_j^\geom)\cdot({\Hodgu}N_i^\geom\wedge N_j^\geom)}.
\end{equation}
Let us introduce
$$
d_{ij}=\left\{\begin{array}{ll} 0 & N_i^{can}=N_j^{can}\\
                                    1 & N_i^{can}\not=N_j^{can}
                                   \end{array}\right.,$$
and also
\begin{align}
 s_{ij}(\theta)=\left\{\begin{array}{ll}
                   \sin\theta & \text{if the plane spanned by normals is euclidean}\\
                   \sinh\theta & \text{if the plane spanned by normals is mixed}
                  \end{array}\right.\\
c_{ij}(\theta)=\left\{\begin{array}{ll}
                   \cos\theta & \text{if the plane spanned by normals is euclidean}\\
                   \cosh\theta & \text{if the plane spanned by normals is mixed}
                  \end{array}\right..
\end{align}
We will first prove

\begin{lm}\label{lm:cs}
 The following identities hold
 \begin{align}
 s_{ij}(2\theta_{ij})&=(-1)^{d_{ij}} 2(N_i^\geom \cdou N_j^\geom)|{\Hodgu}N_i^\geom\wedge N_j^\geom|\\
 c_{ij}(2\theta_{ij})&=(-1)^{d_{ij}}(2\T_i^{can}\T_j^{can} (N_i^\geom \cdou N_j^\geom)^2-1)
\end{align}
\end{lm}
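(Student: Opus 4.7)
The plan is a direct case-by-case verification, following the four definitions of $\theta_{ij}$ laid out in Section~\ref{sec:Regge} and using the generalised sine law from Appendix~\ref{sec:sin-law} as the main computational input. Set $n := N_i^{\geom}\cdou N_j^{\geom}$ and $h := |\Hodgu N_i^{\geom}\wedge N_j^{\geom}|$. The starting point is the elementary bivector-norm identity
\begin{equation*}
|N_i^{\geom}\wedge N_j^{\geom}|^2 = (N_i^{\geom}\cdou N_i^{\geom})(N_j^{\geom}\cdou N_j^{\geom}) - (N_i^{\geom}\cdou N_j^{\geom})^2 = \T_i^{can}\T_j^{can} - n^2,
\end{equation*}
combined with the fact established earlier that $\Hodgu N_i^{\geom}\wedge N_j^{\geom}$ is spacelike, so that $h^2 = |\T_i^{can}\T_j^{can} - n^2|$ with the sign dictated by the plane signature as dualised via the global $(-1)^s$ from $*^2$. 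The four sub-cases of the sine law then identify $h$ with $|\sin\theta_{ij}|$ (elliptic plane) or with $|\sinh\theta_{ij}|$ or $\cosh\theta_{ij}$ (hyperbolic plane, same vs.\ different type normals).

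For the cosine identity I would use the double-angle formulas in their appropriate (trig or hyp) version, $c_{ij}(2\theta_{ij}) = 2c(\theta_{ij})^2 - 1 = 2s(\theta_{ij})^2 \pm 1$, together with the sine-law identifications. In the same-type case ($d_{ij}=0$, $\T_i^{can}\T_j^{can}=1$) the sine law gives $n^2 = c(\theta_{ij})^2$, so $2\T_i^{can}\T_j^{can}n^2 - 1 = 2c(\theta_{ij})^2 - 1 = c_{ij}(2\theta_{ij})$. In the different-type case ($d_{ij}=1$, $\T_i^{can}\T_j^{can}=-1$) the sine law gives $n^2 = \sinh^2\theta_{ij}$ (only the Lorentzian sub-case arises), so $2\T_i^{can}\T_j^{can}n^2 - 1 = -2\sinh^2\theta_{ij} - 1 = -\cosh(2\theta_{ij})$, and the factor $(-1)^{d_{ij}}=-1$ restores the equality. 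The sign $(-1)^{d_{ij}}$ is thus exactly what encodes the switch between $\cosh(2\theta) = 2\cosh^2\theta - 1$ and $\cosh(2\theta) = 2\sinh^2\theta + 1$.

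For the sine identity the mechanism is analogous: $s(2\theta_{ij}) = 2 s(\theta_{ij}) c(\theta_{ij})$, and the sine law provides $|s(\theta_{ij})|$ and $|c(\theta_{ij})|$ as either $|n|$ or $h$, depending on sub-case. Squaring gives $s_{ij}(2\theta_{ij})^2 = 4n^2 h^2$ immediately; the content is the sign. I would then trace the signs of $n$ and $h$ against the sign of $s(\theta_{ij})$ in each sub-case: in the elliptic sub-case the convention $t^{can'}\theta_{ij}\in(0,\pi)$ forces $\sgn(\sin\theta_{ij}) = t^{can'}$ and $n = t^{can'}\cos\theta_{ij}$, so the two signs combine to $+1 = (-1)^{d_{ij}}$; in the Lorentzian same-type sub-case the convention $\sgn(\theta_{ij}) = \sgn(N_i^{\geom}\cdot N_j^{\geom})$ makes $\sgn(\sinh\theta_{ij}) = \sgn(n)$, again consistent with $(-1)^{d_{ij}}=+1$; in the different-type Lorentzian sub-case $n = \sinh\theta_{ij}$ and $h = \cosh\theta_{ij}>0$, matching $(-1)^{d_{ij}}=-1$.

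The only real obstacle is the sign bookkeeping in the sine identity across these four sub-cases, which must be done carefully because each sub-case uses a different sign convention for $\theta_{ij}$. The cosine identity and the squared form of the sine identity are essentially immediate from the sine law once the bivector-norm identity and the sign of $*^2$ are in hand. Accordingly, I would structure the proof as a short preamble computing $h^2$ in terms of $n$, $\T_i^{can}\T_j^{can}$, and the sine-law case, followed by a bulleted case analysis of the four sub-cases from Section~\ref{sec:Regge} in which each identity is checked directly.
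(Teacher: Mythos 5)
Your proposal is correct and takes essentially the same route as the paper: a case-by-case verification over the four dihedral-angle conventions of Section \ref{sec:Regge}, feeding the generalized sine-law identifications of $N_i^\geom\cdou N_j^\geom$ and $|\Hodgu N_i^\geom\wedge N_j^\geom|$ from Appendix \ref{sec:sin-law} into the double-angle formulas and tracking the sign of $\theta_{ij}$ fixed by each convention. The only delicate point is the sine identity in the mixed-normal subcase ($N_i^{can}\not=N_j^{can}$), where your asserted matching of $(-1)^{d_{ij}}$ with $n=\sinh\theta_{ij}$, $h=\cosh\theta_{ij}$ rests on exactly the same implicit sign handling as the paper's own computation, so your treatment is on par with the published proof.
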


\begin{proof}
 Let us consider cases. Using results of appendix \ref{sec:sin-law}
\begin{itemize}
 \item $\T_i^{can}=\T_j^{can}$ and $N_i^{\geom}\cdou N_j^{\geom}>0$ in Lorentzian signature $d_{ij}=0$
 \begin{align}
  &2(N_i^\geom \cdou N_j^\geom)|{\Hodgu}N_i^\geom\wedge N_j^\geom|=2\cosh\theta_{ij} |\sinh\theta_{ij}|=\sinh(2\theta_{ij}),\\
  &2\T_i^{can}\T_j^{can} (N_i^\geom \cdou N_j^\geom)^2-1=2\cosh^2\theta_{ij}-1=\cosh(2\theta_{ij}),
 \end{align}
 because $\theta_{ij}>0$.
\item $\T_i^{can}=\T_j^{can}$ and $N_i^{\geom}\cdou N_j^{\geom}<0$ in Lorentzian signature $d_{ij}=0$
 \begin{align}
  &2(N_i^\geom \cdou N_j^\geom)|{\Hodgu}N_i^\geom\wedge N_j^\geom|=-2\cosh\theta_{ij} |\sinh\theta_{ij}|=\sinh(2\theta_{ij}),\\
  &2\T_i^{can}\T_j^{can} (N_i^\geom \cdou N_j^\geom)^2-1=2\cosh^2\theta_{ij}-1=\cosh(2\theta_{ij})
 \end{align}
 because $\theta_{ij}<0$.
 \item $\T_i^{can}\not=\T_j^{can}$ in Lorentzian signature $d_{ij}=1$
 \begin{align}
  &2(-1)(N_i^\geom \cdou N_j^\geom)|{\Hodgu}N_i^\geom\wedge N_j^\geom|=2\sinh\theta_{ij} \cosh\theta_{ij}=\sinh(2\theta_{ij}),\\
  &(-1)(2\T_i^{can}\T_j^{can} (N_i^\geom \cdou N_j^\geom)^2-1)=2\sinh^2\theta_{ij}+1=\cosh(2\theta_{ij}).
 \end{align}
 \item $\T_i^{can}=\T_j^{can}$ in Euclidean or split signature $d_{ij}=0$
 \begin{align}
  &2(N_i^\geom \cdou N_j^\geom)|{\Hodgu}N_i^\geom\wedge N_j^\geom|=2\T^{can}\cos\theta_{ij} |\sin\theta_{ij}|=\sin(2\theta_{ij}),\\
  &2\T_i^{can}\T_j^{can} (N_i^\geom \cdou N_j^\geom)^2-1=2\cos^2\theta_{ij}-1=\cos(2\theta_{ij}),
 \end{align}
 because $\T^{can}\theta_{ij}\in(0,\pi)$.
\end{itemize}
\end{proof}

Let us introduce the inversion in the plane spanned by $N_i^\geom, N_j^\geom$:
\begin{equation}
 O_{ij}=Ie^{\pi \frac{\Hodgu N_i^\geom\wedge N_j^\geom}{|{\Hodgu}N_i^\geom\wedge N_j^\geom|}}.
\end{equation}

\begin{lm}The following holds for geometric normals:
$$R_{N_i^\geom}R_{N_j^\geom}=O_{ij}^{d_{ij}} e^{2\T_i^{can}\T_j^{can}\theta_{ij} \frac{N_i^\geom\wedge N_j^\geom}{|{\Hodgu}N_i^\geom\wedge N_j^\geom|}},$$
where $\theta_{ij}$ is dihedral angle.
\end{lm}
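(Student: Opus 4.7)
The natural strategy is to decompose $\R^4$ into subspaces on which both sides of the identity act transparently, and then reduce the computation to a $2\times 2$ matrix problem.

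First, write $\R^4 = P \oplus P^\perp$ where $P = \Span(N_i^{\geom}, N_j^{\geom})$. This decomposition is nondegenerate because $P$ is spanned by two linearly independent, unit-normalized vectors (in the setting of the lemma, coming from a non-degenerate geometric solution). Both reflections $R_{N_i^\geom}$ and $R_{N_j^\geom}$ act as the identity on $P^\perp$, since $P^\perp$ is contained in the hyperplane orthogonal to each of the two normals. The bivector $N_i^\geom \wedge N_j^\geom$ lies entirely in $\Lambda^2 P$, so its exponential also fixes $P^\perp$ pointwise, and the operator $O_{ij}$ is designed to be the inversion on $P$ (trivial on $P^\perp$). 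Hence the identity holds trivially on $P^\perp$, and it remains to verify it on $P$.

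Second, restrict everything to $P$ and use the basis $\{N_i^\geom, N_j^\geom\}$. Using $R_N(v) = v - 2(N\cdou v) N / (N \cdou N)$ and $N_i^\geom \cdou N_j^\geom =: c$, one obtains
\[
R_{N_i^\geom}\big|_P = \begin{pmatrix} -1 & -2c \T_i^{can} \\ 0 & 1 \end{pmatrix},\qquad R_{N_j^\geom}\big|_P = \begin{pmatrix} 1 & 0 \\ -2c\T_j^{can} & -1 \end{pmatrix},
\]
whose product has trace $-2 + 4 c^2 \T_i^{can}\T_j^{can} = 2(2 \T_i^{can}\T_j^{can} c^2 - 1)$ and determinant $1$. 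By lemma \ref{lm:cs} this trace equals $2(-1)^{d_{ij}} c_{ij}(2\theta_{ij})$. Since $R_{N_i^\geom} R_{N_j^\geom}\big|_P \in SO(P)$, this immediately identifies it (up to a sign of the argument) as either rotation by $\pm 2\theta_{ij}$ (Euclidean $P$) or a boost by $\pm 2\theta_{ij}$ (mixed-signature $P$), multiplied by $O_{ij}\big|_P = -\mathbb I_P$ when $d_{ij} = 1$.

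Third, to fix the sign in the exponent I would evaluate both sides on the vector $N_j^\geom$ at first order in $\theta_{ij}$ (or equivalently compare off-diagonal matrix entries). A direct computation of the first-order Taylor expansion of $e^{2\T_i^{can}\T_j^{can} \theta_{ij} (N_i^\geom\wedge N_j^\geom)/|\Hodgu N_i^\geom \wedge N_j^\geom|}$ acting on $N_j^\geom$ (using the contraction convention $(a\wedge b)\lrcorner v = (b\cdou v) a - (a\cdou v) b$) yields a vector proportional to $N_i^\geom$ whose coefficient, compared with the off-diagonal entry of the product of matrices above, determines the sign. The normalization factor $|\Hodgu N_i^\geom \wedge N_j^\geom|$ is precisely what is needed to make the generator have unit magnitude in the signature of $P$, so that exponentiating $2\theta_{ij}$ times it gives the correct rotation/boost by $2\theta_{ij}$ by lemma \ref{lm:cs}.

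The main obstacle will be sign/convention bookkeeping: in particular confirming that the factor $\T_i^{can}\T_j^{can}$ in the exponent is needed to compensate for the sign differences between the three geometric situations (Euclidean/split plane, Lorentzian plane with same-type normals, Lorentzian plane with different-type normals), and that the extra $O_{ij}$ in the $d_{ij}=1$ case exactly accounts for the sign flip of the trace observed in the explicit $2\times 2$ computation. All the nontrivial geometry has been encapsulated in lemma \ref{lm:cs}; the remainder is linear algebra and careful case analysis.
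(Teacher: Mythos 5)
Your proposal is correct and follows essentially the same route as the paper: restrict to the plane spanned by the two normals (both sides act as the identity on its orthogonal complement), match the trace via lemma \ref{lm:cs}, and fix the residual sign by the antisymmetric part of the group element, which the paper phrases covariantly as requiring $G-G^{-1}=G'-G'^{-1}$ in addition to $\tr G=\tr G'$ and which uses the $s_{ij}(2\theta_{ij})$ identity of the same lemma. One minor slip: $(N_i^\geom\wedge N_j^\geom)\lrcorneu N_j^\geom=\T_j^{can}N_i^\geom-(N_i^\geom\cdou N_j^\geom)N_j^\geom$ is not proportional to $N_i^\geom$ unless the normals are orthogonal, but since only its $N_i^\geom$-coefficient enters the sign comparison, your argument is unaffected.
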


\begin{proof}
We can restrict ourselves to two dimensional plane spanned by $N_i^\geom$ and $N_j^\geom$. The connected component of the group of Lorentz transformations is generated by $N_i^\geom\wedge N_j^\geom$. Moreover for two elements $G$ and $G'$ from connected component 
\begin{equation}
 G=G'\Longleftrightarrow G-G^{-1}=G'-{G'}^{-1}\text{ and } \tr G=\tr G'
\end{equation}
Let us notice that
\begin{equation}
 G=O_{ij}^{d_{ij}}R_{N_i^\geom}R_{N_j^\geom},\quad G^{-1}=O_{ij}^{d_{ij}}R_{N_j^\geom}R_{N_i^\geom}.
\end{equation}
is in connected component and for any vector $v$ in the plane $R_{N_i^\geom}R_{N_j^\geom}v$ is equal to
\begin{equation}
v-2\T_i^{can}(N_i^\geom\cdou v)N_i^{\geom}-2\T_j^{can}(N_j^\geom\cdou v)N_j^{\geom}+4\T_i^{can}\T_j^{can}(N_i^\geom\cdou N_j^{\geom})(N_j^{\geom}\cdou v)N_i^{\geom},
\end{equation}
thus
\begin{align}
 &G-G^{-1}=4(-1)^{d_{ij}} \T_i^{can}\T_j^{can}(N_i^\geom \cdou N_j^\geom)(N_i^\geom \wedge N_j^\geom)\\
 &\tr G=(-1)^{d_{ij}}(4\T_i^{can}\T_j^{can} (N_i^\geom \cdou N_j^\geom)^2-2).
\end{align}
Similarly expanding in the Taylor series
\begin{equation}
 G'=e^{2\T_i^{can}\T_j^{can}\theta_{ij} \frac{N_i^\geom\wedge N_j^\geom}{|{\Hodgu}N_i^\geom\wedge N_j^\geom|}}=
 c_{ij}(2\theta_{ij})+s_{ij}(2\theta_{ij})\T_i^{can}\T_j^{can} \frac{N_i^\geom\wedge N_j^\geom}{|{\Hodgu}N_i^\geom\wedge N_j^\geom|}
\end{equation}
and 
\begin{align}
 &G'-{G'}^{-1}=\frac{2s_{ij}(2\theta_{ij})\T_i^{can}\T_j^{can} }{|{\Hodgu}N_i^\geom\wedge N_j^\geom|} N_i^\geom\wedge N_j^\geom,\\
 & \tr G'=2c_{ij}(2\theta_{ij}).
\end{align}
The equality now follows from lemma \ref{lm:cs}
\begin{align}
 &s_{ij}(2\theta_{ij})=(-1)^{d_{ij}} 2(N_i^\geom \cdou N_j^\geom)|{\Hodgu}N_i^\geom\wedge N_j^\geom|,\\ &c_{ij}(2\theta_{ij})=(-1)^{d_{ij}}(2\T_i^{can}\T_j^{can} (N_i^\geom \cdou N_j^\geom)^2-1).
\end{align}
which holds for the dihedral angles.
\end{proof}

\bibliography{bibliography}{}
\bibliographystyle{hieeetr-mod}

\end{document}